\newcommand{\mb}[1]{\mathbf{#1}}
\newcommand{\mbb}[1]{\mathbb{#1}}
\newcommand{\mcl}[1]{\mathcal{#1}}
\newcommand{\EE}{\mathbb{E\:}}
\newcommand{\VV}{\mathbf{Var\:}}
\newcommand{\reals}{\mathbb{R}}
\newcommand{\complex}{\mathbb{C}}
\newcommand{\integers}{\mathbb{N}}
\newcommand{\dd}{\: {\rm d}}
\newcommand{\dequal}{\stackrel{d}{=}}
\newcommand{\Law}{{\rm Law}}
\newcommand{\Exp}{{\rm Exp}}
\newcommand{\Bern}{{\rm Bern}}
\newcommand{\Beta}{{\rm Beta}}
\newcommand{\Pois}{{\rm Pois}}
\newcommand{\Gamm}{{\rm Gamm}}
\newcommand{\Lap}{{\rm Lap}}
 \newtheorem{remark}{Remark}[section] 
\newtheorem{example}{Example}
 \newtheorem{theorem}{Theorem}[section]
 \newtheorem{definition}{Definition}[section]
 \newtheorem{corollary}{Corollary}[section]
\definecolor{myorange}{rgb}{.6,.0,.0}
\definecolor{myblue}{rgb}{.0, .0,.6}
\newcommand{\myhl}[1]{{ #1}}
 \title{Two Metropolis-Hastings algorithms for posterior measures with non-Gaussian
 priors in infinite dimensions}
 \author[CalTech]{Bamdad Hosseini}
    \ead{bamdadh@caltech.edu}
\address[CalTech]{Department of Computing \& Mathematical Sciences, California 
Institute of Technology, Pasadena, CA, 91125} 
\begin{document}










\begin{abstract}
We introduce two  classes of Metropolis-Hastings algorithms for sampling target measures 
that are absolutely continuous with respect to  non-Gaussian prior measures
on infinite-dimensional Hilbert spaces. In  particular, we focus on certain
classes of prior measures for which  prior-reversible 
 proposal kernels of the autoregressive type can be designed. We then use these proposal kernels to design  
 algorithms that satisfy detailed balance
 with respect to the target measures. Afterwards, we introduce a 
 new class of prior measures,
 called the Bessel-K priors, as a generalization of the gamma distribution to
 measures in infinite dimensions. The Bessel-K priors 
interpolate between  well-known  priors such as the gamma distribution and 
Besov priors and can model  sparse or compressible parameters. We present concrete instances of our
algorithms for  the Bessel-K priors in the context of
numerical examples  in density estimation,  finite-dimensional denoising  and  deconvolution on the
circle.
\end{abstract}

\begin{keyword}
Metropolis-Hastings,
non-Gaussian,
Inverse problems,
Bayesian.
\MSC
  65C05 \sep
   	60J05 \sep
35R30 \sep 
62F99 \sep 
60B11. 

\end{keyword}

\maketitle





\pagestyle{myheadings}
\thispagestyle{plain}
\section{Introduction}\label{sec:introduction}


In this article we introduce two new classes of Metropolis-Hastings (MH) algorithms for sampling a
target probability measure 
that is absolutely continuous with respect to a  non-Gaussian  measure. 
We are 
particularly interested in the case where the prior  is a 
 Laplace or a generalization of the gamma distribution.
Our exposition is motivated by inverse problems on function spaces.

Markov Chain Monte Carlo (MCMC) methods are perhaps the most widely used algorithms for sampling 
complex probability measures.
However, their performance  often deteriorates as the dimension of the parameter space grows larger. 
This is a serious shortcoming when the parameter belongs to a function space.
In such cases we discretize the problem by considering a sequence of finite-dimensional measures that approximate the 
infinite-dimensional measure in a consistent manner and sample the finite-dimensional approximations instead.
 Thus, we  need algorithms that 
perform well in the limit of fine discretizations. 
To achieve this goal we pursue an algorithm that is well-defined in the infinite-dimensional limit and
 discretize it to obtain a practical finite-dimensional
algorithm.
 Examples of such  algorithms can be found in recent works: 
Cotter et al. \cite{stuart-mcmc} introduced a class of algorithms  
based on discretizations of the Langevin Stochastic Partial Differential Equation (SPDE).  
Ottobre et al. \cite{stuart-HMC} propose an infinite-dimensional version of the Hamiltonian Markov Chain (HMC) algorithm while Cui et al. \cite{marzouk-dili} present an infinite-dimensional 
 algorithm that  boosts performance by identifying subspaces of the parameter space 
that are informed by the data. More recently, Beskos et al. \cite{beskos-geometric-mcmc} studied the class of Geometric MCMC algorithms and showed that well-known algorithms such as HMC or Metropolis-Adjusted-Langevin (MALA) can be studied in a unified framework.
 
All of the above mentioned algorithms implicitly assume that the underlying prior  is absolutely continuous with respect to a Gaussian measure.  Function space
 MCMC algorithms that drop this Gaussian assumption are scarce. Wang et al. \cite{marzouk-randomize-optimize} proposed a generalization of the randomize-then-optimize strategy for inverse problems with 
 Laplace  or Besov priors \myhl{where a map is constructed to
   transform the Laplace prior to a standard Gaussian. Sampling is then done on this
   standard Gaussian and the prior-to-Gaussian map is accounted for in the likelihood. Chen et al. 
   \cite{chen2018robust} use a somewhat similar approach
   and extend the preconditioned Crank-Nicholson (pCN) algorithm of \cite{stuart-mcmc}
   to certain non-Gaussian priors such as $\ell_p$ and stable
   priors using a differentiable nonlinear map that transforms
   the prior  to a Gaussian.}  Lucka \cite{lucka-fast-Gibbs, lucka-fast-sparse}
 takes a different approach to these works and proposes fast Gibbs samplers for
inverse problems with $\ell_p$ priors in high dimensions. 
The primary contribution of this article is to introduce novel MH algorithms that
use prior-reversible proposals and satisfy detailed balance in the infinite-dimensional limit and are tailored to certain non-Gaussian priors. In contrast to previous works we directly design our algorithms
for non-Gaussian priors and do not use any mappings of the prior to a reference 
measure; hence leaving the likelihood and the forward map unchanged.
We draw  inspiration from the pCN algorithm and 
the autoregressive proposals of \cite{fox-tuning} to design  algorithms  that can  sample
a target measure when the underlying prior measure  coincides with
the limit distribution of \myhl{an} autoregressive (AR) or  random coefficient autoregressive (RCAR) process.

Let $X$ be a separable Hilbert space and suppose that $\mu$ is a  probability measure on $X$.
Throughout this article our goal is to generate samples from a target probability measure  $\nu$  on $X$ that is absolutely continuous with respect to another probability measure  $\mu$:
\begin{equation}\label{generic-target-measure}
\frac{\dd \nu}{ \dd \mu}(u) = \frac{1}{Z} \exp( - \Psi(u)),
\qquad Z = \int_X \exp( -\Psi(u)) \dd \mu(u).
\end{equation}
Here, the function $\Psi$ is assumed to be known and denotes the negative log-density of $\nu$ with respect to $\mu$. The constant $Z$ is simply a normalizing constant that ensures 
$\nu$ is a probability measure. Throughout the paper we refer to $\mu$ and $\Psi$
as the {\it prior} and the {\it likelihood} respectively,
in analogy with Bayesian inverse problems. We highlight this connection below.


Consider the problem of estimating a parameter $u \in X$ from a set of noisy 
measurements $y \in \reals^M$ given by the model 
\begin{equation}\label{additive-inverse-problem}
y = \mcl{G}(u) + \epsilon, \qquad \epsilon \sim \mcl{N}(0, \pmb{\Sigma}).
\end{equation}
Here $\mcl{G}: X \mapsto \reals^M$ is a deterministic {\it forward map}  and
$\pmb{\Sigma}\in \reals^{M\times M}$ is a positive-definite \myhl{symmetric}
matrix. The additive Gaussian noise model above is widely used in practice \cite{ calvetti, somersalo,
  stuart-acta-numerica} and it is \myhl{the} primary  model   in
this article  (see \cite{hosseini-convex, somersalo, stuart-acta-numerica} for examples 
with other noise models). 
Using \eqref{additive-inverse-problem} we can readily identify $\mu^u(y)$,
the conditional probability measure of the data $y$ given $u$, with  Lebesgue density 
$$
\frac{\dd \mu ^{u}}{\dd \Lambda}(y) = \frac{1}{ \sqrt{(2 \pi)^M |\pmb{\Sigma}|}} \exp\left(
  -\frac{1}{2} \left\| \mcl{G}( u )- y \right\|_{\pmb{\Sigma}}^2 \right).
$$
Here $\Lambda$ denotes the Lebesgue measure and we used the familiar notation $\| \cdot \|_{\pmb{\Sigma}} := \| \pmb{\Sigma}^{-1/2} \cdot \|_2$.
Now
define the {\it likelihood potential} 
\begin{equation}\label{quadratic-likelihood}
\Phi(u;y) := \frac{1}{2} \left\| \myhl{\mcl{G}(u)} - y
\right\|_{\pmb{\Sigma}}^2,
\end{equation}
and consider the infinite-dimensional version of Bayes' rule \cite{stuart-acta-numerica}  in the sense of
the Radon-Nikodym theorem \cite[Thm.~3.2.2]{bogachev1}:
\begin{equation}\label{bayes-rule}
\frac{\dd \mu^{y}}{\dd \mu_0} (u) = \frac{1}{Z(y)} \exp\left( - \Phi(u;y)
\right), \qquad \qquad Z(y) := \int_X \exp( -\Phi(u;y)) \dd \mu_0(u).
\end{equation}
Here $ \mu_0$ is the {\it prior probability measure} on $X$ that
reflects our prior knowledge regarding the parameter $u$ and $ \mu^{y}$ is the 
{\it posterior probability measure}. Note that \eqref{bayes-rule} has the
same form as \eqref{generic-target-measure}.
The posterior measure $\mu^{y}$ is considered to be the solution to the Bayesian inverse problem. 

 Generating independent samples from the
 posterior measure $\mu^{y}$ is an effective method for inference. 
 The samples can be used to approximate different statistics such as the mean, the median and standard deviations 
 that can then be used as pointwise approximations to the parameter $u$  or measures of uncertainty
 as well as computing other quantities of interest.

The secondary contribution of 
this article  is to introduce a new class of non-Gaussian prior measures called the {\it Bessel-K priors}  that generalize the Laplace and gamma distributions to infinite dimensions and appear
to be good models for sparse or compressible parameters.
 A similar class of prior measures to the Bessel-K priors 
 were introduced in \cite{hosseini-sparse} in connection to $\ell_p$-regularization in the
 variational approach to inverse 
problems when $p \in (0,1]$. The Bessel-K priors serve as a concrete example of a non-Gaussian prior 
in a well-posed inverse problem that can be sampled efficiently using our algorithms. 

The rest of this article is structured as follows: 
In 
Section~\ref{sec:MH-algorithms} we develop  two MH algorithms for simulation of 
\eqref{generic-target-measure} with  non-Gaussian prior $\mu$.
The abstract versions of our algorithms (called the RCAR and SARSD)  are introduced in
Subsections~\ref{subsec:RCAR-alg} and \ref{sec:SARSD-alg}. 
We formally introduce the Bessel-K priors in Section~\ref{sec:bessel-k}.
Lifted versions of RCAR and SARSD for the Bessel-K priors are presented in
Subsection~\ref{sec:BK-1D-algorithms} in 1D while generalizations of the
Bessel-K priors and the RCAR and SARSD algorithms to infinite dimensions are
outlined in Subsections~\ref{subsec:bessel-K-generalization-to-inf-dim}
and Subsections~\ref{subsec:sampling-with-bessel-K-Hilbert-space}.
In Section~\ref{sec:well-posedness} we briefly discuss
well-posedness  of inverse problems with Bessel-K priors
and dedicate Section~\ref{sec:numerical-experiments} to numerical experiments that
demonstrate the performance of our algorithms and properties of the Bessel-K priors.
In Subsection~\ref{subsec:density-estimation} we present a two dimensional
density estimation problem where we use our algorithms to
estimate a target density with a Bessel-K prior and give visual evidence of the
fact that our algorithms sample the correct target measures. In Subsection~\ref{sec:example-2-finite-dim-denoising} we tackle a finite-dimensional denoising problem where 
 the dimension of the parameter space and the data are increased simultaneously and the
performance of our algorithms is studied. Finally, in Subsection~\ref{sec:deconv}
we study the deconvolution problem on a function space and take a close look at how  the
the RCAR algorithm performs in the high-dimensions. In the same example we
also study the effect of hyperparameters in definition of the Bessel-K prior.

\subsection{Some notation and definitions}\label{sec:notation}
Throughout the article we assume the 
parameter space $X$ is a separable Hilbert space with inner product 
$\langle \cdot, \cdot \rangle$  and norm $\| \cdot \|_X$.
We let $P(X)$
denote \myhl{the space of  Radon {(i.e. inner regular, outer regular and locally finite)}
probability measures on $X$}. 
We use $\Lambda$ to denote the Lebesgue measure.  

Given a probability measure $\mu \in P(X)$ we define its characteristic function $\hat{\mu}:X^\ast \mapsto \complex$ via 
$$
\hat{\mu}(\varrho) := \int_X \exp(i \langle u, \varrho \rangle ) \dd \mu(u), \qquad \text{for } \varrho \in X^\ast,
$$
 where, with some abuse of notation, we have used $\langle \cdot, \cdot \rangle$ to denote the duality pairing between $X$ and $X^\ast$ following the Riesz representation theorem.

Given two probability measures $\mu,\nu \in P(X)$ we use $\nu \ll \mu$ to 
denote that $\nu$ is absolutely continuous with respect to $\mu$ (i.e. $\text{supp} \nu 
\subseteq \text{supp} \mu$) and if $\mu \ll \nu$ as well then we say $\nu$ and $\mu$
are mutually absolutely continuous or equivalent and write $\nu \sim \mu$.  We
further overload the `$\sim$' operator and for a random variable 
$u \in X$ we use $u \sim \mu$ to denote that $\mu = \text{Law}\{u\}$,
that is, $\mu$ is the law of $u$.
 We use 
the notation $u \dequal v$ whenever the random variables $u$ and $v$ have the same law up to 
sets of measure zero.

We use $\mcl{B}(X)$ to denote the Borel $\sigma$-algebra on $X$ and 
define a probability kernel $\mcl Q(\cdot, \cdot): X \times \mcl{B}(X) \mapsto [0,1]$ 
as a mapping with the following properties: 
\begin{enumerate}[(i)]
\item For every set $A \in \mcl{B}(X)$, $\mcl Q(\cdot, A): X \to [0,1]$ is measurable. 
 \item For every point $u \in X$, $\mcl Q(u, \cdot) \in P(X)$.
 \end{enumerate}
 \myhl{Note that the kernel $\mathcal Q$ above is $\sigma$-finite by definition
 since $u \mapsto \mcl Q(u, A)$ is \myhl{finite} for all sets $ A \in \mcl{B}(X)$. }

We often use the shorthand notation $\{ u_k \}$ to
denote  a sequence $\{ u_k \}_{k=1}^\infty$ of elements $u_k$ in $\mbb R$ or a Hilbert space.
We also gather the definition of some standard random variables on the real line 
that are used throughout this article in Appendix~\ref{app:standard-rvs}.

\section{Metropolis-Hastings algorithms with prior reversible
proposals}\label{sec:MH-algorithms}
Here we discuss the general theory of MH algorithms
with prior reversible proposals.
 Our
approach is based on the framework of Tierney \cite{tierney} and 
is inspired by the pCN algorithm
of \cite{stuart-mcmc}.
Recall  \eqref{generic-target-measure} defining the target measure $\nu$ that has a density with respect to a prior
$\mu$. We particularly focus on the case where 
$\mu$ is the limit distribution of an AR or RCAR process
of order one (denoted as AR(1) or RCAR(1) respectively). The key idea is to use  the transition kernel of the
underlying process to construct  $\mu$-reversible MH proposals. This property will
translate into an MH probability kernel that satisfies the detailed balance
condition with respect to $\nu$ \cite{tierney} which in turn implies the reversibility of the
overall algorithm. 

\subsection{Prior-reversible proposal kernels}\label{sec:prior-preserving-proposal-kernels}
Let us start by recalling the classic MH algorithm in general state spaces. 
Consider the space $X \times X$ with 
$\sigma$-algebra $\mcl{B}(X) \otimes \mcl{B}(X)$. For every set $A \in 
\mcl{B}(X) \otimes \mcl{B}(X)$ define the sets 
$$A^\bot:=\{ (u,v) : (v,u) \in
A\} \in \mcl{B}(X) \otimes \mcl{B}(X)  \qquad \text{and} \qquad A_u :=
\{ v : (u,v) \in A\} \in \mcl{B}(X).$$
Given a  probability kernel $ \mcl Q$ 
and a target measure $\nu$, we  define the measures $\tau$ and $\tau^\bot$ by 

\begin{equation}\label{tau-meas-def}
 \tau(A) := \int_X \mcl Q(u, A_u) \dd \nu(u), \qquad \tau^\bot(A)  =
\int_{X}  \mcl Q(u, A_u^\bot) \dd \nu(u), \qquad \forall A \in \mcl{B}(X)
\otimes \mcl{B}(X).
\end{equation}

By \cite[Prop.~1]{tierney} there exists a symmetric set $R \in \mcl{B}(X)
\otimes \mcl{B}(X)$   on which $\tau \sim \tau^\bot$. {Recall that a set $R$ in a
  vector space is
said to be symmetric  if $R = -R :=\{ -u : u \in R\}$.} The set $R$ is unique up to sets of measure
zero
for both $\tau$ and $\tau^\bot$. Furthermore, $\tau$ and $\tau^\bot$
are mutually singular on the complement of $R$. Intuitively,  $R$ consists of state pairs $(u,v)$ 
for which transition from $u$ to $v$ and vice versa is possible under
the  kernel $\mcl Q$.
Define the restrictions of $\tau$ and $\tau^\bot$ 
to the set $R$ by $\tau_R$ and $\tau_R^\bot$. Then there exists a
density $r:X\times X \mapsto \reals$ that satisfies \cite[Prop.~1]{tierney}
$$
0 < r := \frac{\dd \tau_R^\bot}{\dd \tau_R} <\infty \qquad \text{and} \qquad 
r(u,v) = ({r(v,u)})^{-1}, \qquad \forall (u,v) \in R.
$$

{We are now ready to introduce  an abstract version of the MH algorithm for sampling the measure $\nu$,
outlined in Algorithm~\ref{generic-MH-alg} where,
 following
\cite[Thm.~2]{tierney}, we choose the acceptance probability} 
\begin{equation}\label{MH-acceptance-ratio}
a(u,v) = \left\{
\begin{aligned}
&   \min\left\{ 1, r(u,v)\right\},   \qquad &&(u,v) \in R,  \\ 
& 0,  \qquad &&(u,v) \not\in R.
\end{aligned}
\right.
\end{equation}
Tierney \cite{tierney} showed that  Algorithm~\ref{generic-MH-alg}
satisfies detailed balance with respect to $\nu$.
The  absolute continuity of the
measures $\tau_R$ and $\tau^\bot_R$ is the key to constructing a reversible 
algorithm on $X$. If $\tau$ and $\tau^\bot$ were mutually singular (i.e. $R$ had measure zero) then 
the acceptance probability would be zero almost surely (a.s.).

\begin{algorithm}
\caption{Generic Metropolis Hastings (MH) algorithm with proposal kernel $\mcl Q$.}
\label{generic-MH-alg}
{\begin{enumerate}[1.]
\item Set $j = 0$ and choose $u^{(0)} \in X$. 
\item At iteration $j$ propose $v^{(j+1)} \sim \mcl  Q(u^{(j)}, dv)$. 
\item Set $u^{(j+1)} = v^{(j+1)}$ with probability $a( u^{(j)}, v^{(j+1)})$ given by \eqref{MH-acceptance-ratio}.
\item Otherwise set $u^{(j+1)} = u^{(j)}$.
\item set $j \leftarrow j +1$ and return to step 2.
\end{enumerate}}
\end{algorithm}

Now suppose the kernel $\mcl Q$ preserves the  measure $\mu$, i.e., 
\begin{equation}\label{prior-preserving-kernel}
\int_X \mcl Q(u, K) \dd \mu(u)  = \mu(K), \qquad \forall K \in \mcl{B}(X).
\end{equation}
Using \eqref{generic-target-measure} we can write the measures $\tau$ and $\tau^\bot$ as 
\begin{equation}\label{prior-preserving-kernel-product-measures}
{\tau(A) =  \frac{1}{Z}\int_X \mcl Q(u, A_u) \exp(-\Psi(u)) \dd \mu(u), \quad \tau^\bot(A)  =
\frac{1}{Z}\int_{X}  \mcl Q(u, A_u^\bot) \exp(-\Psi(u)) \dd \mu(u)},
\end{equation} 
and obtain  Theorem~\ref{prior-preserving-kernel-reversibility} below regarding their absolute continuity. The proof can be found in \cite[Thm.~22]{stuart-bayesian-lecture-notes}
and is therefore omitted. 

\begin{theorem}\label{prior-preserving-kernel-reversibility}
{ Suppose $\Psi(u)$  is continuous and locally bounded. Let $\mcl Q$ be a probability kernel
    that is reversible with respect to $\mu$, i.e.,
    \begin{equation}
      \label{prior-reversibility}
      \mcl Q(u, \dd v) \dd \mu(u) = \mcl Q(v, \dd u) \dd \mu(v),
    \end{equation}
    where the equivalence is understood in the sense of measures on $X \times X$.
 Then 
\begin{equation}\label{prior-preserving-accept-ratio}
  \frac{\dd \tau^\bot}{\dd \tau}(u,v) =
      \exp( \Psi(u)  - \Psi(v)) \qquad  \text{for} \quad (u,v) \in X.
\end{equation}}
\end{theorem}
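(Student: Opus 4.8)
The plan is to identify $\tau$ and $\tau^\bot$ explicitly as measures on the product space $X\times X$ that are both absolutely continuous with respect to one common reference measure, and then simply read off the Radon--Nikodym derivative. First I would note that, by the definition \eqref{tau-meas-def} together with the kernel property that $u\mapsto\mcl Q(u,A)$ is measurable, $\tau$ is the probability measure on $X\times X$ determined by $\dd\tau(u,v)=\mcl Q(u,\dd v)\,\dd\nu(u)$; using \eqref{generic-target-measure} this means
$$
\int_{X\times X}F(u,v)\,\dd\tau(u,v)=\frac{1}{Z}\int_X\int_X F(u,v)\,\exp(-\Psi(u))\,\mcl Q(u,\dd v)\,\dd\mu(u)
$$
for every bounded measurable $F:X\times X\to\reals$, the expression being meaningful because continuity of $\Psi$ makes it Borel measurable and hence $(u,v)\mapsto\exp(-\Psi(u))$ measurable on the product space.

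Next I would observe that since $A_u^\bot=\{v:(v,u)\in A\}$ we have $\tau^\bot(A)=\tau(A^\bot)$, i.e.\ $\tau^\bot$ is the image of $\tau$ under the coordinate-swap map $S(u,v)=(v,u)$. Applying the change-of-variables formula for pushforwards to the display above gives
$$
\int_{X\times X}F(u,v)\,\dd\tau^\bot(u,v)=\frac{1}{Z}\int_X\int_X F(v,u)\,\exp(-\Psi(u))\,\mcl Q(u,\dd v)\,\dd\mu(u).
$$
Now I would invoke the prior-reversibility hypothesis \eqref{prior-reversibility}, which states that $\mcl Q(u,\dd v)\,\dd\mu(u)$ and $\mcl Q(v,\dd u)\,\dd\mu(v)$ are the same (probability) measure on $X\times X$; substituting this into the previous integral and then relabelling the dummy variables $u\leftrightarrow v$ turns the right-hand side into $\frac{1}{Z}\int_X\int_X F(u,v)\,\exp(-\Psi(v))\,\mcl Q(u,\dd v)\,\dd\mu(u)$. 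Hence $\dd\tau^\bot(u,v)=\frac{1}{Z}\exp(-\Psi(v))\,\mcl Q(u,\dd v)\,\dd\mu(u)$.

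Comparing the two displays, $\tau$ and $\tau^\bot$ are both absolutely continuous with respect to the probability measure $\mcl Q(u,\dd v)\,\dd\mu(u)$ on $X\times X$, with densities $\frac{1}{Z} e^{-\Psi(u)}$ and $\frac{1}{Z} e^{-\Psi(v)}$; since $\Psi$ is real-valued these densities are finite and strictly positive everywhere, so $\tau\sim\tau^\bot$ on all of $X\times X$ and
$$
\frac{\dd\tau^\bot}{\dd\tau}(u,v)=\frac{e^{-\Psi(v)}}{e^{-\Psi(u)}}=\exp(\Psi(u)-\Psi(v)),
$$
which is \eqref{prior-preserving-accept-ratio}; in particular the set $R$ from \cite[Prop.~1]{tierney} can be taken to be the whole space. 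The only point demanding care is to treat $\mcl Q(u,\dd v)\,\dd\mu(u)$ rigorously as a bona fide measure on the product $\sigma$-algebra $\mcl{B}(X)\otimes\mcl{B}(X)$, so that the reversibility identity may legitimately be inserted inside the iterated integral and the Fubini--Tonelli-type relabelling step is valid; once that framework is in place the rest is bookkeeping, which is presumably why the authors defer to \cite[Thm.~22]{stuart-bayesian-lecture-notes} rather than reproducing it.
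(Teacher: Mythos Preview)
Your argument is correct and is precisely the standard route: express both $\tau$ and $\tau^\bot$ as densities against the common reference measure $\mcl Q(u,\dd v)\,\dd\mu(u)$, using the swap identity $\tau^\bot=\tau\circ S^{-1}$ together with the reversibility hypothesis \eqref{prior-reversibility}, and then take the ratio of the two strictly positive densities. The paper itself does not supply a proof but simply cites \cite[Thm.~22]{stuart-bayesian-lecture-notes}; your write-up is essentially that argument, so there is nothing to compare.
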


Note that \eqref{prior-reversibility} automatically implies \eqref{prior-preserving-kernel}.
It follows from  Theorem~\ref{prior-preserving-kernel-reversibility} and \eqref{MH-acceptance-ratio} that whenever $\mcl Q$ satisfies detailed balance with respect to 
$\mu$, the acceptance probability of the MH algorithm takes the following simple form: 
\begin{equation}\label{RWSD-acceptance-ratio}
a(u,v) = \left\{
\begin{aligned}
&\min\{ 1, \: \exp( \Psi(u) - \Psi(v) ) \} \qquad && u,v \in \text{supp } \mu,\\
&0 \qquad &&u,v \not\in \text{supp } \mu.
\end{aligned}
\right.
\end{equation}
We now discuss two specific constructions of the kernel $\mcl Q$ that satisfy the properties of Theorem~\ref{prior-preserving-kernel-reversibility} for non-Gaussian priors $\mu$.

\subsection{The RCAR algorithm}\label{subsec:RCAR-alg}
Consider a random coefficient autoregressive process of the form
\begin{equation}\label{RCAR-process}
\left\{
  \begin{aligned}
    u^{(n)} &= z^{(n)} + \myhl{w^{(n)}}, \qquad u^{(0)} \sim \mu, \\
    z^{(n)} &\sim \mcl{T}_\beta(u^{(n-1)}, \dd z),
    \qquad  w^{(n)} \sim \mu_{\beta}.
\end{aligned}\right.
\end{equation}
Here we take $\beta$ to be a deterministic parameter that
parameterizes the family of probability kernels $\mcl T_\beta$ and measures $\mu_\beta \in P(X)$.
In this work we take $\beta \in (0,1)$ although more general parameterizations
are possible.

We assume  $\mcl T_\beta$ is a probability kernel defined by 
\begin{equation}
  \label{T-beta-definition}
  \mcl T_\beta(u, \dd z) = \Law \{ z=  T_\beta u\},
\end{equation}
where $T_\beta: X \mapsto X$ is a random linear operator for fixed $\beta \in (0,1)$
(see \cite{thang1998random} for
an overview of random mappings on Hilbert spaces).
We think of $T_\beta$ as the infinite-dimensional analog of a random coefficient matrix
and the  process \eqref{RCAR-process} 
as a generalization of RCAR(1) processes to Hilbert spaces \cite{nicholls2012random}.
 In light of this analogy we refer to the
measure $\mu_\beta$
as the {\it innovation}.

 Let us now define the probability kernel $\mcl Q_\beta$ via
\begin{equation}\label{RCAR-kernel-generic-form}
  \mcl Q_\beta(u, dv) = \Law \left\{ v = z + w, \quad \text{ where } \quad  z \sim \mcl T_\beta(u, \dd z),  \quad w \sim
  \mu_\beta \right\}. 
\end{equation}
If $\mcl Q_\beta$ satisfies detailed balance with respect to 
$\mu$  we can then use 
$\mcl Q_\beta$ within Algorithm~\ref{generic-MH-alg} and obtain a
well-defined  algorithm for sampling the target  $\nu$ given by \eqref{generic-target-measure}.
We refer to such an algorithm as the RCAR  algorithm
to highlight the fact that the proposal at each step coincides with the transition
kernel of an RCAR process. The generic RCAR algorithm is presented below 
in  Algorithm~\ref{RCAR-algorithm}. 

\begin{algorithm}
\caption{Random coefficient AR proposal (RCAR)}
\label{RCAR-algorithm}
Given a fixed value $\beta \in (0,1)$, probability kernel $\mathcal T_\beta$
as in \eqref{T-beta-definition},
and innovation $\mu_\beta \in P(X)$:
{\begin{enumerate}[1.]
  \item Set $j = 0$ and  draw
    $u^{(0)} \sim \mu$. 
  \item At iteration $j$ propose
    $v^{(j+1)} =  z^{(j+1)} + \myhl{w^{(j)}}$ where $z^{(j+1)} \sim
  \mcl T_\beta (u^{(j)}, \dd z^{(j+1)}),  \myhl{w^{(j)}} \sim \mu_\beta$. 
\item Set $u^{(j+1)} = v^{(j+1)}$ with probability $a( u^{(j)}, v^{(j+1)}) =
\min\{ 1, \: \exp( \Psi(u^{(j)}) - \Psi(v^{(j+1)} ) \}$.
\item Otherwise set $u^{(j+1)} = u^{(j)}$.
\item Set $j \leftarrow j +1$ and return to step 2.
\end{enumerate}}
\end{algorithm}

Of course, the assumption that $\mcl Q_\beta$ satisfies detailed balance
with respect to $\mu$ is quite strong and depends on the choice of $\mcl T_\beta, \mu_\beta$
and $\mu$. Nonetheless, these conditions hold for interesting  non-Gaussian measures
such as the gamma distribution $\Gamm(p, \sigma)$ and some of its generalizations.

\begin{example}\label{gamma-beta-invariance}
  Let $\mu = \Gamm(p, 1)$ and for $\beta \in (0,1)$ define the probability kernel 
  \begin{equation*}
    \mcl T_\beta( u, \myhl{\dd z})
    = \Law \{ \myhl{z= \zeta u} , \qquad \text{where} \qquad \myhl{\zeta} \sim \Beta (p\beta, p(1 -\beta) \},
  \end{equation*}
  and take $\mu_\beta = \Gamm(p(1 - \beta), 1)$. Then following  Appendix~\ref{sec:thinned-gamma-process} the resulting
  kernel $\mcl Q_\beta$ of the form \eqref{RCAR-kernel-generic-form} satisfies  detailed balance  with respect to $\mu$. 
\end{example}

If the measure $\mu$ is Gaussian  we can
take  $\mcl T_\beta$ to be a  deterministic kernel and the RCAR algorithm coincides
with the pCN algorithm of~\cite{stuart-mcmc}.

\begin{example}\label{ex:gaussian-RCAR}
  Let $\mu = \mcl{N}(0, \sigma^2)$ and for $\beta \in (0,1)$ define the probability kernel 
  \begin{equation*}
    \mcl T_\beta( u, \myhl{\dd z})
    = \delta_{\beta u}(\myhl{z}),
  \end{equation*}
  with  $\delta_{\beta u}$ denoting
  the \myhl{point mass} at $\beta u$.
  If $\mu_\beta = \mcl{N}(0, (1- \beta^2)\sigma^2)$  then  $\mcl Q_\beta$ of the form \eqref{RCAR-kernel-generic-form} satisfies detailed balance with respect to $\mu$
  (see \cite[Ex.~7]{stuart-bayesian-lecture-notes} for a proof). 
\end{example}

We present more concrete examples of the RCAR algorithm   in
Section~\ref{sec:bessel-k} where 
$\mu$ is taken to be a generalization of the gamma distribution that we refer to as the
Bessel-K distribution.

\subsection{The SARSD algorithm}\label{sec:SARSD-alg}
We now discuss a second strategy for constructing prior reversible proposal kernels.
Let $\tilde{\mcl Q}$ be a probability kernel with a unique fixed point
 $\mu$, i.e.,
\begin{equation}\label{prior-preserving-forward-kernel}
  \int_X \tilde{\mcl{Q}}(u, K) \dd \mu(u) = \mu(K), \qquad \forall K \in \mcl{B}(X),  
\end{equation}
but $\tilde{ \mcl{Q}}$ is not necessarily reversible
with respect to $\mu$. Let us now denote by $\tilde{ \mcl Q}^\ast$ the
time-reversal of the kernel $\tilde{ \mcl Q}$, i.e., 
\begin{equation*}
  \int_X f(u) \int_X g(v) \tilde{\mcl{Q}}(u, \dd v) \dd \mu(u)
  = \int_X g(v) \int_X f(u)\myhl{ \tilde{\mcl{Q}}^\ast(v, \dd u)} \dd \mu(v),
\end{equation*}
for all bounded and measurable functions $f$ and $g$. Assuming that  $\tilde{\mcl{Q}}^\ast$
exists we can  construct a $\mu$-reversible probability kernel $\mcl Q$ simply by symmetrizing
$\tilde{\mcl Q}$ via
\begin{equation}\label{symmetrized-kernel}
  \mcl{Q}(u, \dd v) = \Law \{ v = t z
  + \myhl{(1-t)z^\ast}\}, 
\end{equation}
where
\begin{equation*}
  z \sim \tilde{\mcl{Q}}(u,
  \dd z), \qquad \myhl{z^\ast} \sim
  \tilde{\mcl{Q}}^\ast(u, \myhl{\dd z^\ast}),
  \qquad t \sim \Bern(1/2).   
\end{equation*}
The above random variables are assumed to be independent.
We have now reduced the problem of finding a prior reversible kernel $\mcl{Q}$ to
that of finding a prior preserving kernel
$\tilde{\mcl{Q}}$ and its time-reversal $\tilde{\mcl{Q}}^\ast$.

The kernel $\tilde{\mcl Q}$ can be identified for a large class of
non-Gaussian measures $\mu$. For example, if $\mu$ is self-decomposable (SD) it follows from \eqref{SD-characteristic} that for every choice of $
\beta \in (0,1)$ there exists $\mu_\beta \in P(X)$
such that
\begin{equation*}
  \mu  = \Law \left\{ z = \beta u + \myhl{w}, \qquad \text{ where } u \sim \mu,\quad \myhl{w} \sim \mu_\beta\right\}.
\end{equation*}
Often times $\mu_\beta$ can be identified by its characteristic function.
This relationship immediately suggests that the AR transition kernel
\begin{equation}
  \label{ARSD-forward-proposal}
  \tilde{\mcl Q}_\beta(u, \dd z)
  = \Law \left\{ z = \beta u + \myhl{w},
    \qquad \myhl{w} \sim \mu_\beta \right\},
\end{equation}
satisfies \eqref{prior-preserving-forward-kernel}. Unfortunately the reverse kernel $\tilde{\mcl Q}_\beta^\ast$
in this case does not always have a closed form and must be
identified on a case by case basis. Following Example~\ref{ex:gaussian-RCAR}
we see that in the case where $\mu=\mcl{N}(0, \sigma^2)$ then $\tilde{\mcl Q}_\beta = \tilde{\mcl Q}_\beta^\ast$
and this relationship holds for Gaussian measures on Hilbert spaces as well (see \cite[Ex.~7]{stuart-bayesian-lecture-notes}). However, Weiss \cite{weiss1975time} showed that this property is
unique to Gaussian measures and  for non-Gaussian SD measures $\mu$, $\tilde{\mcl Q}_\beta \neq \tilde{\mcl Q}_\beta^\ast$. Nonetheless, we can workout the $\tilde{\mcl Q}_\beta^\ast$ kernel for
specific non-Gaussian $\mu$.

\begin{example}\label{ex:exp-symm-kernel}
  Suppose $\mu = \Exp(1)$ and fix $\beta \in (0,1)$. Following
   Appendix~\ref{app:exponential-dist} we can take
  \begin{equation}
    \label{exp-forward--kernel}
    \tilde{\mcl Q}_\beta(u, \dd z) = \Law \{ z =  \beta u + \myhl{\zeta w}, \qquad \myhl{w} \sim  \Exp(1), \quad
   \myhl{ \zeta} \sim \Bern(1 -\beta) \},
  \end{equation}
 as the forward kernel that preserves $\mu$. The reverse kernel can then be identified as
  \begin{equation}
    \label{exp-reverse--kernel}
    \tilde{\mcl Q}^\ast_\beta(u, \dd z^\ast)
    = \Law \{ z^\ast =  \min \{ u/\beta, \myhl{w}/(1-\beta)\} , \qquad \myhl{w} \sim  \Exp(1) \}.
  \end{equation}
\end{example}

To this end, for $\mu$ an SD measure with innovation $\mu_\beta$ we define the SARSD algorithm
(symmetrized autoregressive proposal for SD priors) outlined in Algorithm~\ref{SARSD-algorithm} below
that is well-defined and reversible with respect to the target measure
$\nu$ whenever the reverse kernel $\tilde{ \mcl Q}_\beta^\ast$ exists. In Section~\ref{sec:bessel-k}
we present instances of this algorithm for certain generalizations of the exponential and gamma distributions
to Hilbert spaces.

\begin{algorithm}
\caption{Symmetrized AR proposal for SD priors (SARSD)} 
\label{SARSD-algorithm}

Choose $\beta \in (0,1)$. Suppose $\mu$ is SD and let $\mu_\beta$ be its innovation.
{\begin{enumerate}[1.]
\item Set $j = 0$ and draw $u^{(0)} \sim \mu$. 
\item At iteration $j$  draw $t \sim \Bern(1/2)$.
  \item If $t = 1$ propose forward,
    $v^{(j+1)} =  \beta u^{(j)} + \myhl{w}^{(j)}$
    where $\myhl{w}^{(j)} \sim \mu_\beta$.
\item If $t = 0$ propose backward, $v^{(j+1)} \sim \tilde{\mcl Q}_\beta^\ast( u^{(j)}, \dd v^{(j+1)})$.  
\item Set $u^{(j+1)} = v^{(j+1)}$ with probability $a( u^{(j)}, v^{(j+1)}) =
\min\{ 1, \: \exp( \Psi(u^{(j)}) - \Psi(v^{(j+1)} ) \}$.
\item Otherwise set $u^{(j+1)} = u^{(j)}$.
\item Set $j \leftarrow j +1$ and return to step 2.
\end{enumerate}}
\end{algorithm}

\section{The Bessel-K prior}\label{sec:bessel-k}
In this section we introduce the Bessel-K priors as a concrete example of 
a non-Gaussian prior measure  giving rise to
target measures $\nu$ that can be efficiently sampled using the RCAR and SARSD  algorithms. 
The Bessel-K priors are  interesting by themselves for 
modelling sparse or compressible parameters. We demonstrate this feature with an example. 

\begin{example}\label{example-1}
  Suppose $\mb{G} \in \reals^{M\times N}$, $u \in \reals^N$ and consider the measurement model
$$
y = \mb{G} u + \eta, \qquad {\eta} \sim \mcl{N}(0, \mb{I}_M),
$$
where $\mb{I}_M \in \reals^{M\times M}$ is the identity matrix.
Our goal is to estimate $u$ given a realization of $y$.
Now take the prior measure 
$\mu_0$ to have Lebesgue density
\begin{equation}\label{bessel-density-proto}
\frac{\dd \mu_0}{\dd \Lambda} (u) = 
\left({\frac{{1}}{ \sqrt{\pi} \Gamma(p)  2^{p - 1/2}}} \right)^{N}
\prod_{j=1}^N \left|{u_j}\right| ^{p - 1/2}  K_{p - 1/2} \left( \left|{ u_j} \right| \right),
\end{equation}
where  $p \in (0,1]$ is a constant and $K_{\alpha}(t)$ for $\alpha \in \reals$ is the modified Bessel function of the second kind {(see Figure~\ref{fig:Gpq-density}(a))}.
Here, $u_j$ are the components of $u$ and $ \Lambda$ is the
Lebesgue measure in $\reals^N$. 

Then Bayes' rule \eqref{bayes-rule} gives the posterior measure
\begin{equation}\label{example-1-posterior}
\frac{\dd \mu^y}{\dd \Lambda} \propto \exp\left(  -\frac{1}{2} \| \myhl{\mb{G}} u - y \|_2^2 -
\left[ \sum_{j =1}^N \ln\left( | u_j|^{p-1/2} K_{p - 1/2} \left(\left| u_j \right| \right) \right)   \right]   \right).
\end{equation}
Formally, the maximizer of this density coincides with the minimizer
of the  functional
$$
\mcl{J}(z) :=\frac{1}{2} \| \myhl{\mb{G}}z - y \|_2^2+
 \sum_{j =1}^N \ln\left( | z_j|^{p-1/2} K_{p - 1/2} \left( \left| z_j \right| \right) \right).
$$
In the special case when $p=1$, we have $K_{p -1/2}(t)  = \sqrt{\frac{\pi}{2t}} \exp(-t)  $ that gives
$$
\mcl{J}(z) = \frac{1}{2} \| \myhl{\mb{G}} z - y \|_2^2 + \sqrt{\frac{\pi}{2}}
\|z\|_1.
$$
Then for $p=1$ the functional $\mcl{J}$  is precisely the $\ell_1$-regularized least-squares
functional. For $p \in (0,1)$ the term inside the logarithm is no longer  bounded 
from below at zero and so the minimizer is not well-defined. But we can consider a 
perturbed version of the $\mcl{J}$ functional
$$
\mcl{J}_\epsilon(z) :=\frac{1}{2} \| \myhl{\mb{G}}z - y \|_2^2 
+  \sum_{j =1}^N \ln\left( (| z_j | + \epsilon)^{p-1/2} K_{p - 1/2} \left( |z_j| + \epsilon \right) \right),
$$
with a small parameter $\epsilon >0$.  Since $K_{p-1/2}(t)$ has a logarithmic singularity at the origin  we conclude that the log term will
heavily penalize the modes of $z$ that are on a larger scale than $\epsilon$ and so the term involving the Bessel function is viewed as a
penalization term that enhances the sparsity of the minimizer.

 In Figure \ref{fig:generic-densities-2D} we present a
 prototypical example of the densities that arise in a
 2D version of the inverse problem at hand.
 Note that the resulting posterior can be
multimodal and concentrates around the axes which depicts the expected compressible behavior.
\end{example}

\begin{figure}[htp]
  \centering
\raisebox{.23\textwidth}{a)}
  \includegraphics[width=0.28\textwidth]{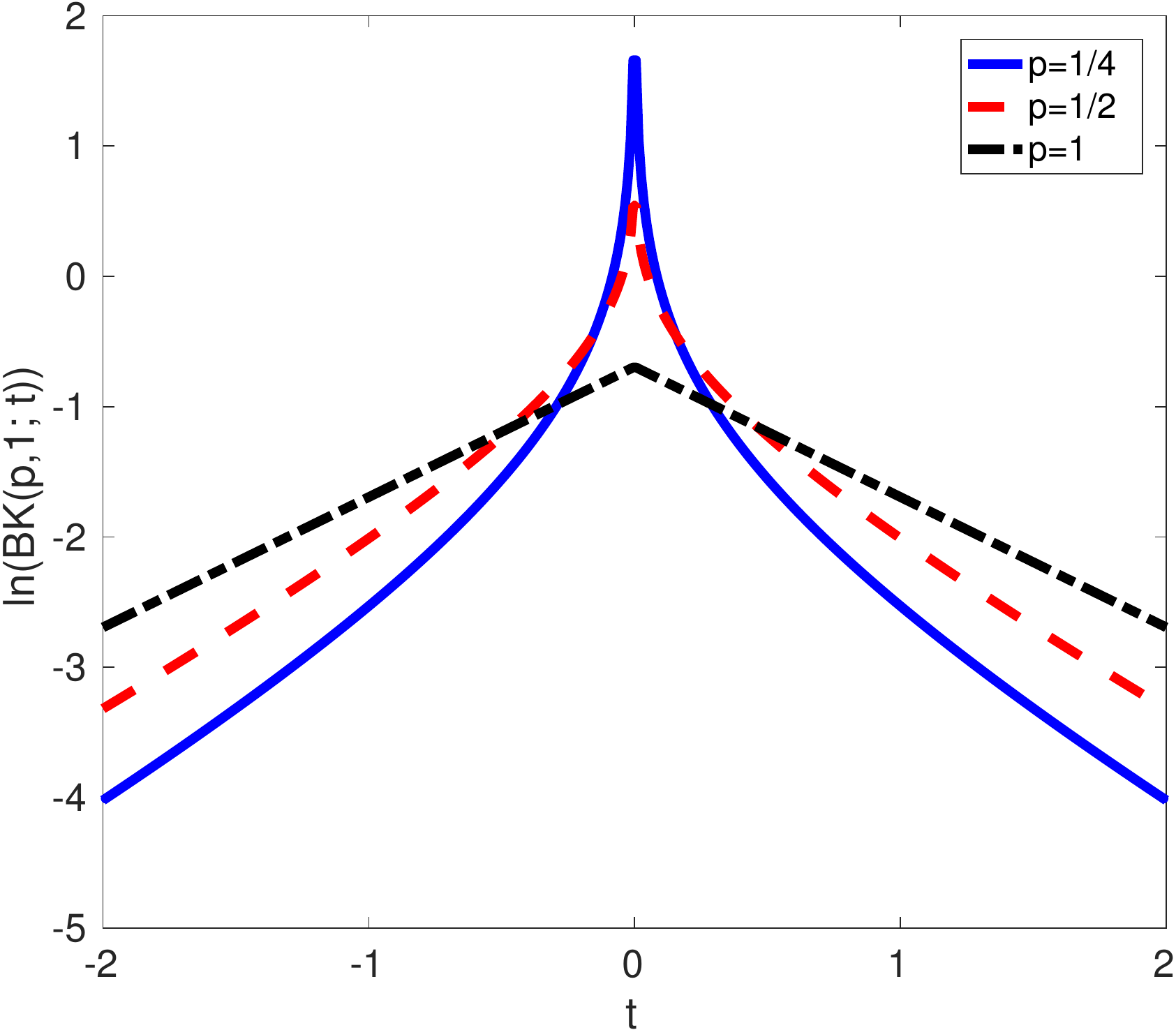}
\raisebox{.23\textwidth}{b)}
  \includegraphics[width=0.28\textwidth]{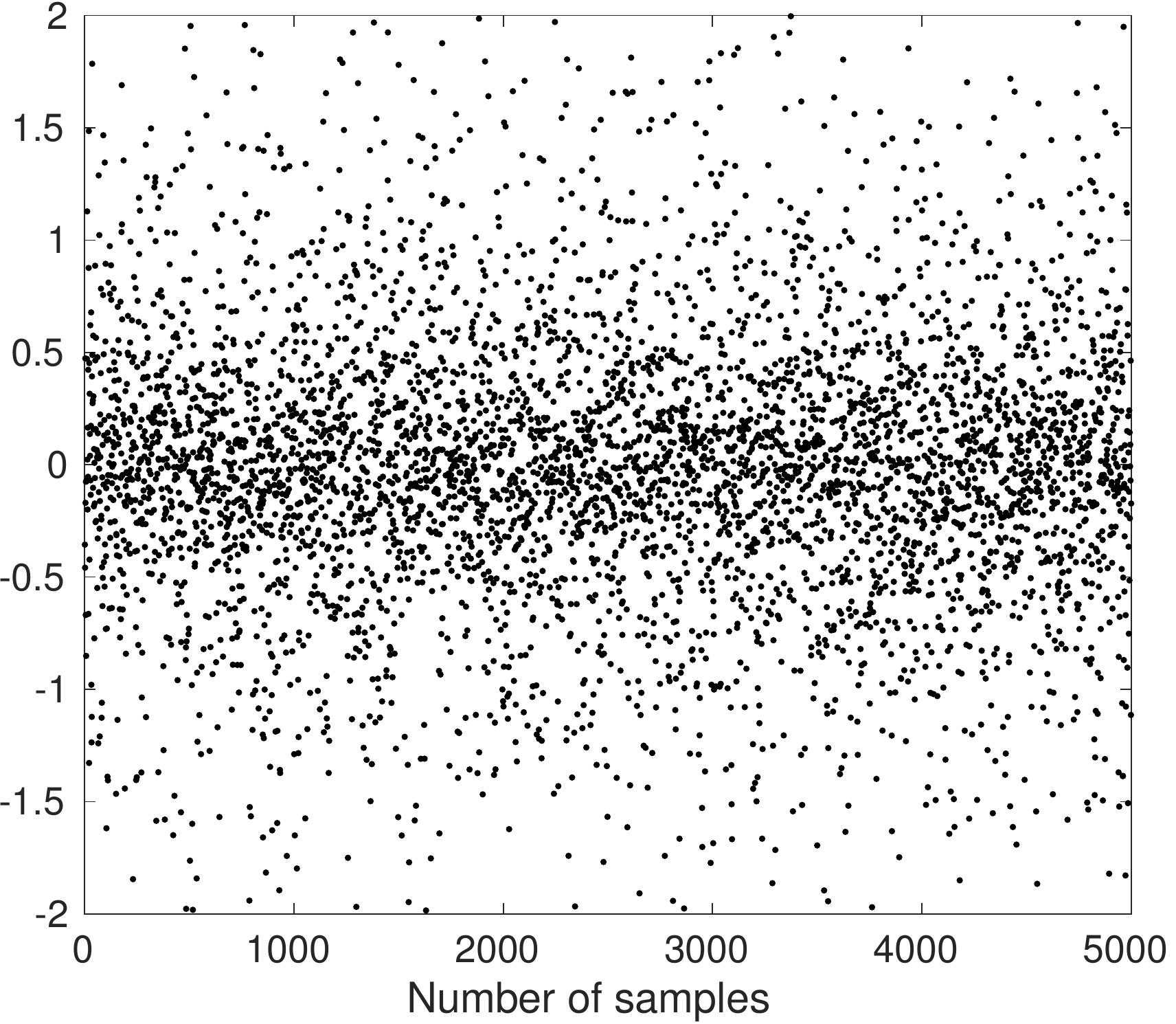}
\raisebox{.23\textwidth}{c)}
  \includegraphics[width=0.28\textwidth]{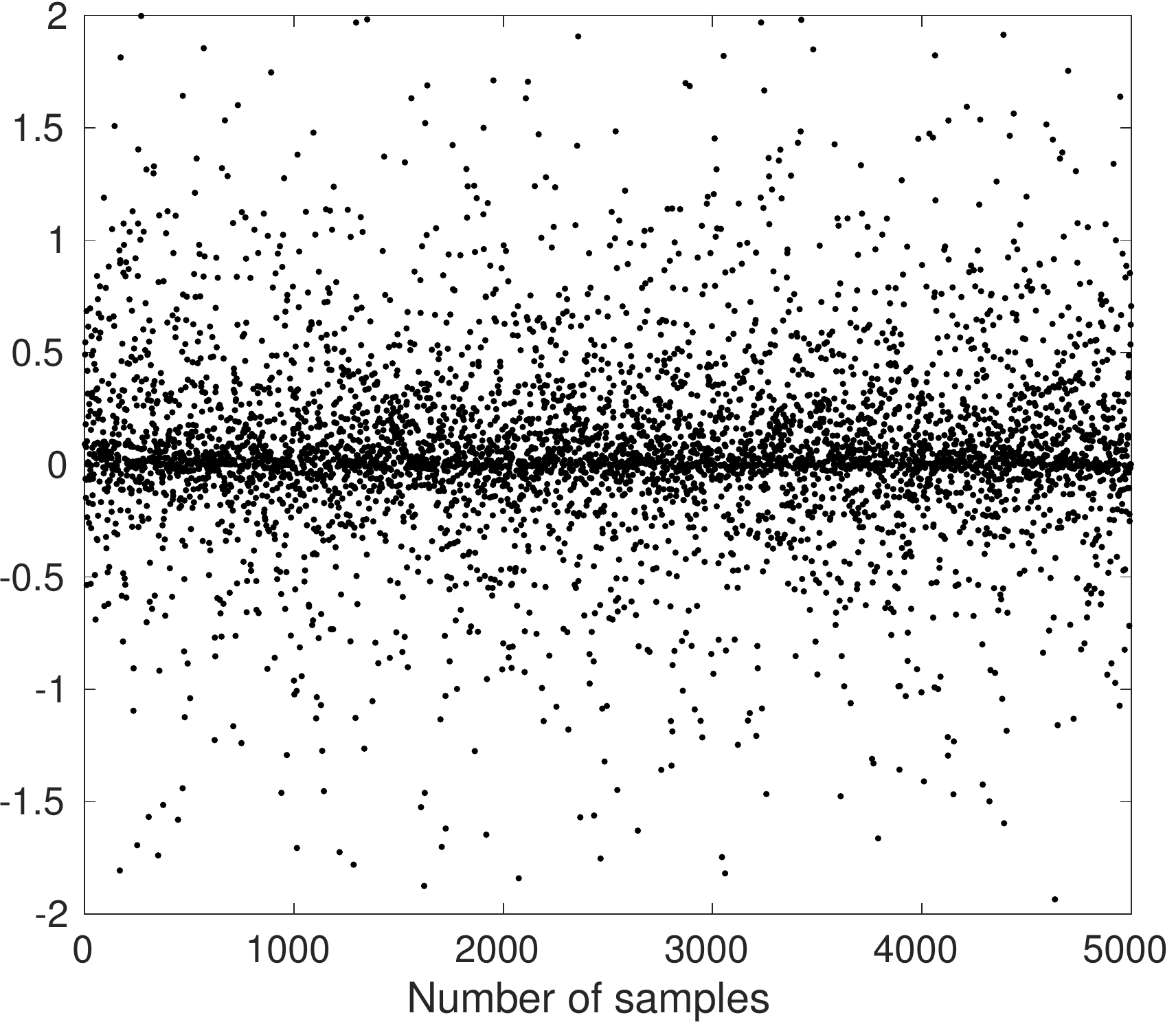}
  \caption{(a) Plot of the log of the prior density in \eqref{bessel-density-proto} for different values of
    $0<p\le 1$ along with (b) a collection of independent samples from that density with $p=1$
     and (c)  with $p = 1/2$.}
  \label{fig:Gpq-density}
\end{figure}

\begin{figure}[htp]
  \centering
  \includegraphics[width=0.28\textwidth, clip=true, trim= 1cm 0cm 1cm
  0cm]{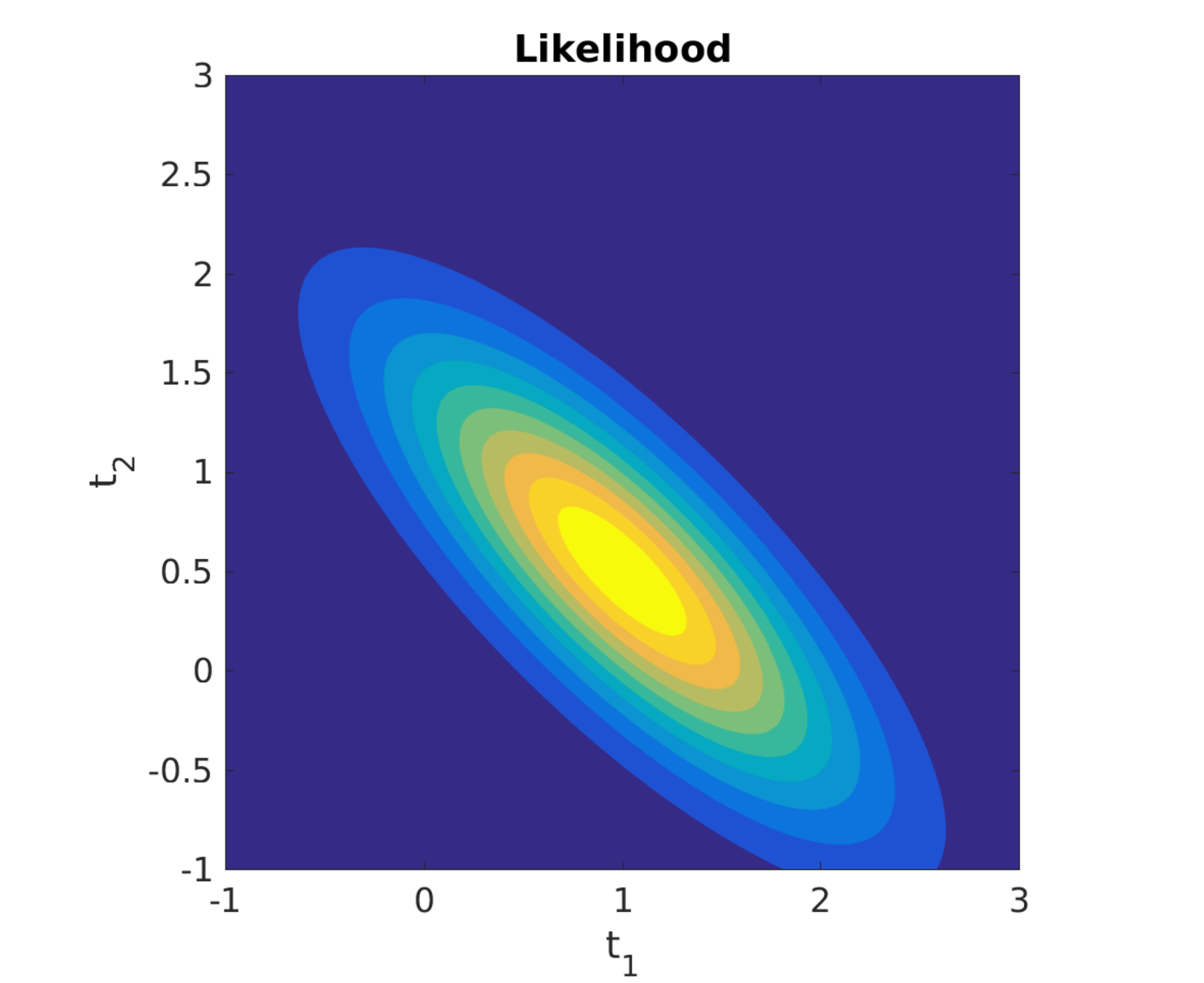}
  \includegraphics[width=0.29\textwidth, clip=true, trim= 1cm 0cm 1cm 0cm]{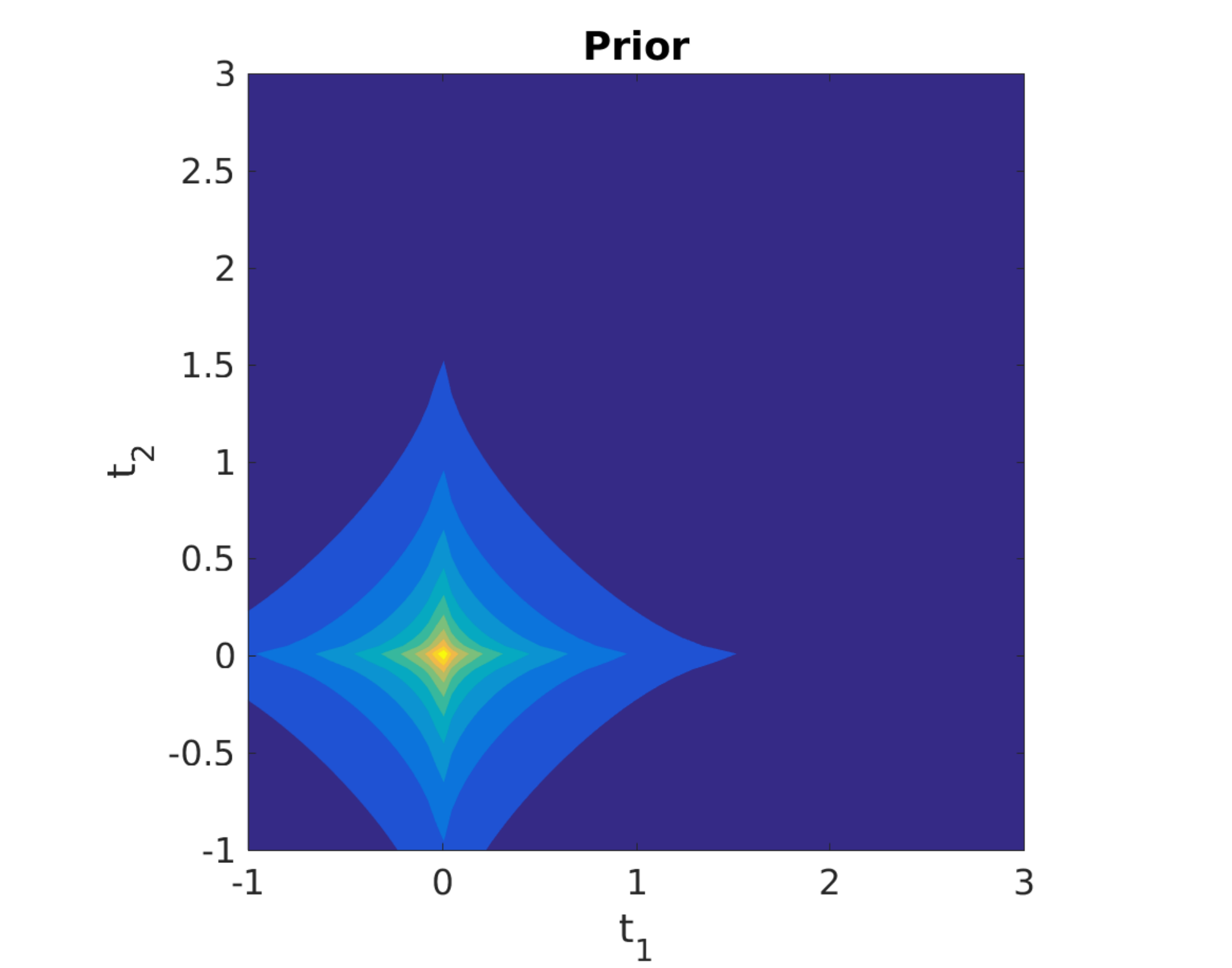}
  \includegraphics[width=0.28\textwidth, clip=true, trim= 1cm 0cm 1cm 0cm]{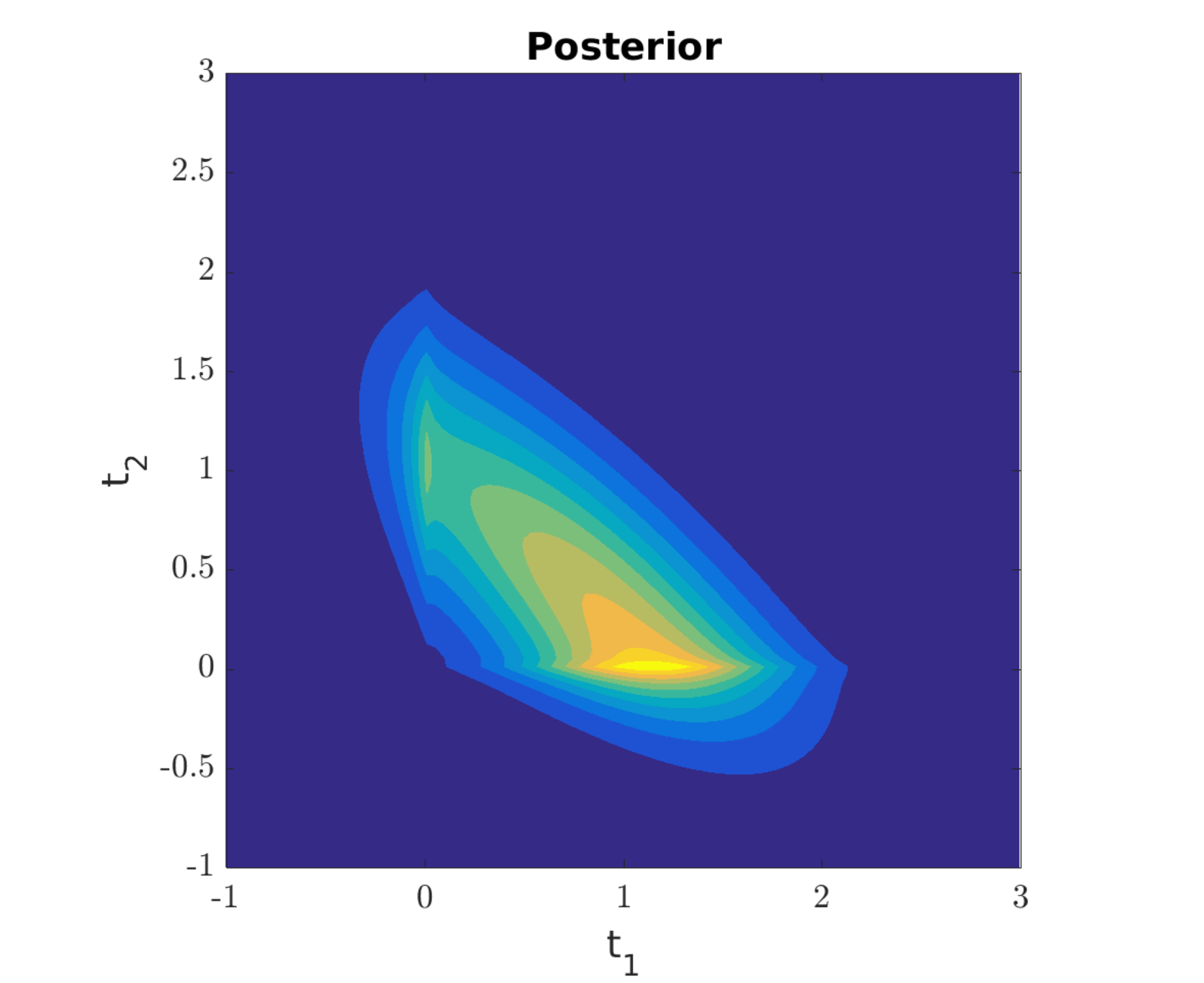}
  \caption{A prototypical example of densities that arise in Example \ref{example-1} in 2D. From left to right: The
    likelihood that arises from the additive Gaussian noise model, the
    prior with $p=2/3$ and $\sigma = 1$, and the resulting posterior density
    \eqref{example-1-posterior}. The densities are renormalized for
    better visualization.
}
  \label{fig:generic-densities-2D}
\end{figure}

\subsection{From gamma  to Bessel-K distributions}

The prior measure $\mu_0$ of \eqref{bessel-density-proto} is a finite-dimensional  Bessel-K 
prior. We now formally introduce
this prior class starting with the one-dimensional version.

\begin{definition}[Bessel-K  distribution]
A real valued random variable $\eta$ is distributed according to a Bessel-K distribution, denoted by $BK(p, \sigma)$ with shape parameter 
$p> 0$ and scale parameter $\sigma > 0$,   if its law has Lebesgue density 
$$
BK(p, \sigma;t) = {\frac{{1}}{ \sqrt{\pi} \Gamma(p) \sigma^{p + 1/2} 2^{p - 1/2}}} 
 \left|{t}\right| ^{p - 1/2}  K_{p - 1/2} \left( \left|\frac{ t}{\sigma} \right| \right) \qquad \text{for} \qquad t \in \reals,
$$
where $K_{p - 1/2}$ is the modified Bessel function of the second kind.   
\end{definition}

The above distribution was first introduced by Pearson et al. \cite{pearson} and was derived 
as the law of the difference  
of two gamma random variables by  Mathai \cite{Mathai}. It is also discussed in \cite[Sec.~4]{kotz-laplace} as a 
generalization of the Laplace distribution. This distribution  is also referred to as the {\it generalized Laplace 
distribution} or the {\it variance gamma model}  but we prefer the term {\it Bessel-K distribution} to 
avoid confusion with other generalizations of the Laplace distribution and also emphasize the fact that the Lebesgue density is given 
in the form of a modified Bessel function. We also note that the Bessel-K density 
closely resembles the Dirichlet-Laplace prior of \cite{bhattacharya2015dirichlet}
for 
sparse parameters.

Let us summarize some useful facts about the Bessel-K distributions. Proofs of these 
results can be found in \cite[Sec.~4]{kotz-laplace}. 
If $\eta \sim BK(p, \sigma)$  then 
\begin{equation}\label{bessel-k-samples}
\eta \dequal {\sigma}(\xi - \xi'),
\end{equation}
where $\xi$ and $\xi'$ are independent $\Gamm(p, 1)$ random variables. Using this observation, the expression  \eqref{gamma-characteristic-function} and the fact that the characteristic function of the sum of two random variables 
is the product of their characteristic functions, we immediately have 
$$
{\widehat{BK}(p, \sigma;s)= \left( 1 + (s \sigma)^2\right)^{-p}.}
$$

{Observe that the $BK(1, \sigma)$ distribution coincides with 
  $\Lap(\sigma)$. Furthermore, the Bessel-K class is closed
  under convolutions, i.e.,
  \begin{equation*}
    BK(p_1, \sigma) \ast BK(p_2, \sigma) = BK(p_1 + p_2, \sigma).
  \end{equation*}
 Since the gamma distribution has 
bounded moments of all \myhl{orders} then so does the Bessel-K distribution. In particular, if $\eta \sim BK(p, \sigma)$ then 
$$
\EE \eta = 0, \qquad \VV \eta = \EE \eta^2 =  \myhl{2p\sigma^2 }.
$$

Given \eqref{bessel-k-samples} and the fact that the gamma distribution is SD we 
deduce that the Bessel-K distribution is also SD. For  $\beta \in (0,1)$, we have
the decomposition 
\begin{equation}\label{BK-innovation}
\eta \dequal \beta \eta' +  \eta_\beta \qquad \text{where} \qquad \eta_\beta \dequal  \xi_\beta - \xi'_\beta. 
\end{equation}
Here $\eta, \eta' \sim BK(p, \sigma)$ and following \eqref{gamma-self-decomposable}
$\xi_\beta, \xi'_\beta \sim \Gamm_\beta(p, \sigma)$ where $\Gamm_\beta(p, \sigma)$ is identified
in \eqref{gamma-innovation}. The random variables $\eta, \eta', \xi_\beta, \xi_\beta'$ are all independent.
We refer to $\eta_\beta$ as the innovation  of 
$\eta$ and denote its law by $BK_\beta(p, \sigma)$. Using \eqref{gamma-innovation-characteristic-function} and \eqref{BK-innovation} we can show
$$
\widehat{BK}_\beta(p, \sigma; s) = \left( \beta^2 + \frac{1 - \beta^2}{{1 + (s\sigma)^2}} \right)^p. 
$$



The Bessel-K distributions are suitable candidates for prior measures in Bayesian inverse
problems given that 
they have bounded moments of all order and so result in 
well-posed inverse problems in the context of the Theorem~\ref{well-posedness}. 
 We will see shortly that this property is inherited by 
 certain infinite-dimensional generalizations of this distribution as well.
 Furthermore, the Bessel-K 
distribution is singular at the origin (see Figure \ref{fig:generic-densities-2D}) meaning that a 
notable portion of its probability mass is concentrated in a neighbourhood of the origin which is a desirable 
 in modelling  compressible parameters (see also \cite{bhattacharya2015dirichlet}
for a detailed analysis of the shrinkage properties of the Dirichlet--Laplace prior 
which is closely related to the Bessel-K distribution).

\subsection{Sampling with Bessel-K priors  in 1D}\label{sec:BK-1D-algorithms}
We now present two prior reversible proposal kernels for the $BK(p,1)$ distributions in
1D. We derive an RCAR proposal using the relationship between
gamma and beta distributions followed by a SARSD proposal 
using
the fact that the $BK(p,1)$  are SD. Although, the SARSD algorithm
is limited to integer shape parameters $p$ 
due to challenges in identifying the reverse kernel $\tilde{\mcl Q}^\ast_\beta$.

\subsubsection{The lifted RCAR algorithm}

Following Appendix~\ref{sec:thinned-gamma-process} we have for $\myhl{u, v} \sim \Gamm(p,\sigma)$
and any $\beta \in (0,1)$
that
\begin{equation*}
  \myhl{ v \dequal \zeta u + w},
\end{equation*}
where $\myhl{\zeta} \sim \Beta(p\beta, p(1- \beta))$
and $\myhl{w}\sim \Gamm(p(1- \beta), \sigma)$
and all random variables are independent. This in turn suggests
a time-reversible RCAR proposal kernel for $\Gamm(p,\sigma)$ distributions
\begin{equation}\label{Gamma-RCAR-Proposal}
  \mcl Q_\beta(u, \myhl{\dd v})
  = \Law\{ \myhl{v= \zeta u +w}\},
\end{equation}
where
\begin{equation*}
 \myhl{\zeta} \sim \Beta(p\beta, p(1-\beta))\quad \text{and} \quad \myhl{w} \sim \Gamm(p(1-\beta), \sigma)\}.
\end{equation*}
By realizing the $BK(p, \sigma)$ distribution as the law of difference of
two independent gamma random variables we can now lift our 1D sampling problem to 2D
and obtain Algorithm~\ref{BK-RCAR-1D-algorithm} for target measures $\nu$ of the form
\eqref{generic-target-measure} with $\mu = BK(p, \sigma)$. We note that
there is an added memory overhead associated \myhl{with} Algorithm~\ref{BK-RCAR-1D-algorithm}
since we need to keep track of the two Markov chains $\{u_1^{(k)}\}$ and
$\{ u_2^{(k)} \}$ rather than a single chain for $\{ u^{(k)}\}$. 

\begin{algorithm}
\caption{Lifted RCAR algorithm for $BK(p,\sigma)$ priors in 1D}
\label{BK-RCAR-1D-algorithm}
Choose $\beta \in (0,1)$ and suppose $\mu = BK(p, \sigma)$ with $p, \sigma >0$.
In the following all random variables are drawn independently.
{\begin{enumerate}[1.]
  \item Set $j = 0$, draw $\myhl{u_1^{(0)}, u_2^{(0)}} \sim \Gamm(p, \sigma)$ and
    set $\myhl{u^{(0)} = u_1^{(0)}
  - u_2^{(0)}}$. 
\item At iteration $j$ propose
  \begin{equation*}
    \begin{aligned}
      \myhl{v_1^{(j+1)}} &= \myhl{\zeta_1 u_1^{(j)} + w_1, \qquad \zeta_1 \sim \Beta(p\beta, p(1- \beta)),
      w_1 \sim \Gamm(p(1- \beta), \sigma),}\\
      \myhl{v_2^{(j+1)}} &= \myhl{\zeta_2 u_2^{(j)} + w_2, \qquad
      \zeta_2 \sim \Beta(p\beta, p(1- \beta)),
        w_2 \sim \Gamm(p(1- \beta), \sigma),} \\
        \myhl{v^{(j+1)}} &= \myhl{v_1^{(j+1)} - v_2^{(j+1)}}.
  \end{aligned}
\end{equation*}
\item With probability
  \begin{equation*}
a( u^{(j)}, v^{(j+1)}) =
  \min\{ 1, \: \exp( \Psi(u^{(j)}) - \Psi(v^{(j+1)} ) \},
\end{equation*}
set
   $u^{(j+1)} = v^{(j+1)}, \myhl{u_1^{(j+1)} = v_1^{(j+1)}, u_2^{(j+1)} = v_2^{(j+1)}}$.
\item Otherwise set $u^{(j+1)} = u^{(j)}, \myhl{u_1^{(j+1)} = u_1^{(j)}, u_2^{(j+1)} = u_2^{(j)}}$.
\item Set $j \leftarrow j +1$ and return to step 2.
\end{enumerate}}
\end{algorithm}

\subsubsection{The lifted SARSD algorithm for integer $p$}
Next, we present a lifted version of the SARSD algorithm for $BK(p, \sigma)$ priors
when $p \in \mbb N$. As mentioned in Subsection~\ref{sec:SARSD-alg} the main
challenge in designing prior-reversible kernels in this case lies in
identifying the reversal of the AR proposals of the form \eqref{ARSD-forward-proposal}.

Note that given $p \in \mbb N, \sigma >0$ and $\eta \sim BK(p,\sigma)$, we have
\begin{equation*}
  \eta \dequal \sum_{j=1}^p \xi_j - \sum_{j=p+1}^{2p} \xi_j, \qquad \xi_j \stackrel{iid}{\sim} \Exp(\sigma).
\end{equation*}
Then using the fact that the class of SD measures is closed under linear transformations
and the results in  Appendix~\ref{app:exponential-dist} we can identify the
innovation $BK_\beta(p,\sigma)$ by the relationship
\begin{equation*}
  \eta_\beta \dequal \sum_{j=1}^p \xi_{\beta,j} - \sum_{j=p+1}^{2p} \xi_{\beta, j}, \qquad \xi_{\beta, j} \stackrel{iid}{\sim} \Exp_\beta(\sigma),
\end{equation*}
with the $\Exp_\beta(\sigma)$ distribution identified by \eqref{exponential-innovation}.
This suggests a forward proposal kernel to update  $\eta$
by updating the $\xi_j$ independently using the forward kernel
given by \eqref{exp-forward-kernel}. Since each $\xi_j$ is an
exponential random variable we can identify their reverse kernel by \eqref{exp-reverse-kernel}.
We can then use the forward and reverse kernels for the $\xi_j$ to construct a lifted version of the SARSD algorithm
for $BK(p, \sigma)$ priors with integer $p$ as outlined in Algorithm~\ref{BK-SARSD-1D-algorithm}.

\begin{algorithm}
\caption{Lifted SARSD algorithm for $BK(p,\sigma)$ priors in 1D}
\label{BK-SARSD-1D-algorithm}
Choose $\beta \in (0,1)$ and suppose $\mu= BK(p, \sigma)$ for $p \in \mbb N$ and $\sigma >0$.
In the following $k=1, \dots, 2p$ and all random variables are drawn independently.
{\begin{enumerate}[1.]
  \item Set $j = 0$, draw $\myhl{u_k}^{(0)} \sim \Exp(1)$
     and set
    $u^{(0)} =
    \sigma \left( \sum_{k=1}^{p}\myhl{ u_k}^{(0)} - \sum_{k=p+1}^{2p} \myhl{u_k}^{(0)}\right).$

  \item At iteration $j$ draw $t \sim \Bern(1/2),
    \myhl{w_k} \sim \Exp(1), \myhl{\zeta_k} \sim \Bern(1 - \beta)$. 

    \item If $t= 1$ propose forward
  \begin{equation*}
    \begin{aligned}
      \myhl{v_k}^{(j+1)} &= \beta \myhl{u_{k}}^{(j)}
      + \myhl{\zeta_k w_k}.
  \end{aligned}
\end{equation*}

\item If $t = 0$ propose backward
  \begin{equation*}
    \begin{aligned}
      \myhl{v_k}^{(j+1)} &= \min\{ \myhl{u_{k}}^{(j)}/\beta, \myhl{w_k}/(1- \beta)\}.
  \end{aligned}
\end{equation*}
\item Set $v^{(j+1)} = \sigma \left( \sum_{k=1}^{p}
    \myhl{v_k}^{(j+1)} - \sum_{k=p+1}^{2p} \myhl{v_k}^{(j+1)}\right).$
  
\item With probability
  \begin{equation*}
a( u^{(j)}, v^{(j+1)}) =
  \min\{ 1, \: \exp( \Psi(u^{(j)}) - \Psi(v^{(j+1)} ) \},
\end{equation*}
set
   $u^{(j+1)} = v^{(j+1)}, \myhl{u_k}^{(j+1)} = \myhl{v_k}^{(j+1)}$.
\item Otherwise set $u^{(j+1)} = u^{(j)}, \myhl{u_k^{(j+1)} = u_k^{(j)}.}$
\item Set $j \leftarrow j +1$ and return to step 2.
\end{enumerate}}
\end{algorithm}

\begin{remark}\label{gamma-prior-algorithms}
  Note that Algorithms~\ref{BK-RCAR-1D-algorithm} and \ref{BK-SARSD-1D-algorithm}
  can be  easily modified  to accommodate  $\Gamm(p,\sigma)$ priors by
  setting $v^{(j+1)} = \myhl{v_1}^{(j+1)}$ in step 2 of Algorithm~\ref{BK-RCAR-1D-algorithm}
  or by setting $v^{(j+1)} = \sigma \sum_{k=1}^p \myhl{v_k}^{(j+1)}$ in
  step 5 of Algorithm~\ref{BK-SARSD-1D-algorithm}. We use such
   algorithms in   Subsection~\ref{sec:example-2-finite-dim-denoising}.
\end{remark}

\begin{remark}\label{remark:RCAR-vs-SARSD}
  Algorithm~\ref{BK-SARSD-1D-algorithm} is more limited in comparison to
  Algorithm~\ref{BK-RCAR-1D-algorithm} in two main aspects. First, the SARSD
  algorithm requires lifting the parameter space to $2p$ dimensions as compared
  to $2$ dimensions in the case of RCAR. Secondly,  SARSD
   is limited to integer values of $p$ while RCAR remains valid for
  all $p >0$. However, to the best of our knowledge the convergence properties
  of these algorithms are unknown beyond reversibility, and so it is difficult to decide which
  algorithm performs better in practice. In Section~\ref{sec:numerical-experiments} we
  compare statistical performance of the two algorithms in the context of
  some numerical experiments. 
\end{remark}

\subsection{Generalization to infinite dimensions}\label{subsec:bessel-K-generalization-to-inf-dim}

We now generalize the Bessel-K distributions and the lifted RCAR and SARSD algorithms
to measures on Hilbert spaces with an orthonormal basis. 
We recall a technical result concerning product priors on Hilbert spaces. 

\begin{theorem}[{\cite[Thm.~2.3 and 2.4]{hosseini-sparse}}]\label{product-prior-properties}
Let $X$ be a Hilbert space with an orthonormal basis $\{ r_k\}$ and consider the random variable 
$u = \sum_{k=1}^\infty \gamma_k \xi_k r_k$
where $\{ \gamma_k\} \in \ell^2$ and $\{\xi_k\}$ is a sequence of i.i.d. random variables 
in $\reals$ distributed according to a Radon measure and with  bounded raw moments of order $q \ge 2$. Then 
\begin{equation}
  \label{product-prior}
 { \mu = \Law \left\{ u =  \sum_{k=1}^\infty \gamma_k \xi_k r_k \right\} \in P(X),}
\end{equation}
  and $\| u\|_X < \infty$ $\mu$-a.s.
and {$\| \cdot \|_X \in L^q(X, \mu)$}. 
\end{theorem}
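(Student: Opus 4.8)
The plan is to reduce all three assertions to a single $q$-th moment estimate on the scalar nonnegative series $\sum_{k} \gamma_k^2 \xi_k^2$. Write $S_n := \sum_{k=1}^{n} \gamma_k \xi_k r_k \in X$ for the partial sums. Since $\{r_k\}$ is orthonormal, $\|S_n - S_m\|_X^2 = \sum_{k=m+1}^{n} \gamma_k^2 \xi_k^2$ for $n > m$, so convergence of $\{S_n\}$ in the complete space $X$ is equivalent to convergence of the series $\sum_k \gamma_k^2 \xi_k^2$; on the event where the latter converges, $\{S_n\}$ is Cauchy in $X$ and hence has a limit, which we call $u = \sum_{k} \gamma_k \xi_k r_k$.

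First I would establish
\[
  \EE \Big( \sum_{k} \gamma_k^2 \xi_k^2 \Big)^{q/2} < \infty .
\]
Set $p := q/2 \ge 1$ and let $Y_k := \gamma_k^2 \xi_k^2 \ge 0$. Applying the triangle inequality in $L^p(\Omega)$ (Minkowski) together with monotone convergence to pass to the infinite sum gives
\[
  \Big\| \sum_{k} Y_k \Big\|_{L^p} \le \sum_{k} \| Y_k \|_{L^p} = \sum_{k} \gamma_k^2 \big( \EE |\xi_k|^q \big)^{2/q} = \big( \EE |\xi_1|^q \big)^{2/q} \sum_{k} \gamma_k^2 < \infty ,
\]
where we used that the $\xi_k$ are identically distributed with finite $q$-th absolute moment and that $\{\gamma_k\} \in \ell^2$; raising to the power $p$ yields the displayed bound, explicitly $\EE ( \sum_{k} \gamma_k^2 \xi_k^2 )^{q/2} \le ( \EE |\xi_1|^q ) \| \{\gamma_k\} \|_{\ell^2}^q$. (Independence plays no role in this step; a uniform bound $\sup_k \EE |\xi_k|^q < \infty$ would suffice.)

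In particular $\sum_k \gamma_k^2 \xi_k^2 < \infty$ almost surely, so $\{S_n\}$ converges in $X$ a.s.; setting $u$ equal to this limit on the event of convergence and $u := 0$ on the complementary null set defines a map $u : \Omega \to X$ which is measurable as an a.s.\ limit of the measurable maps $S_n$. Hence $\mu := \Law\{u\}$ is a well-defined Borel probability measure on $X$, and since $X$ is a separable Hilbert space (thus Polish) every Borel probability measure on it is automatically inner regular, i.e.\ Radon, so $\mu \in P(X)$; the hypothesis that the common law of the $\xi_k$ be Radon is automatic on $\reals$. Finally, along the convergent partial sums $\|S_n\|_X^2 = \sum_{k \le n} \gamma_k^2 \xi_k^2 \uparrow \sum_k \gamma_k^2 \xi_k^2$, so $\|u\|_X^2 = \sum_k \gamma_k^2 \xi_k^2 < \infty$ a.s.\ and, by the estimate above, $\EE \|u\|_X^q = \EE ( \sum_k \gamma_k^2 \xi_k^2 )^{q/2} < \infty$, which is precisely $\|\cdot\|_X \in L^q(X,\mu)$.

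I do not expect a genuine obstacle here: the one step that needs care is the application of Minkowski's inequality in $L^{q/2}$ to the infinite sum of squared coefficients, which is exactly where the hypothesis $q \ge 2$ enters, together with the routine measurability bookkeeping needed to turn the a.s.\ limit $u$ into a bona fide $X$-valued random variable. If one only wanted the a.s.\ finiteness of $\sum_k \gamma_k^2 \xi_k^2$ (hence $\mu \in P(X)$ and $\|u\|_X < \infty$ a.s.), it would suffice to note $\EE \sum_k \gamma_k^2 \xi_k^2 = ( \EE \xi_1^2 ) \| \{\gamma_k\} \|_{\ell^2}^2 < \infty$ by Tonelli, using only second moments; the $L^{q/2}$ computation is needed only for the integrability statement $\|\cdot\|_X \in L^q(X,\mu)$.
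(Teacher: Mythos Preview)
Your argument is correct. The paper itself does not prove this statement; it is quoted verbatim from \cite[Thm.~2.3 and 2.4]{hosseini-sparse} as a technical prerequisite, so there is no in-paper proof to compare against. Your reduction to the scalar series $\sum_k \gamma_k^2 \xi_k^2$ via orthonormality, followed by Minkowski in $L^{q/2}(\Omega)$ (which is exactly where $q\ge 2$ is used), is the standard and clean route; the explicit bound $\EE\|u\|_X^q \le (\EE|\xi_1|^q)\,\|\{\gamma_k\}\|_{\ell^2}^q$ and the observation that independence is unnecessary for this step are both accurate.
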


Since the Bessel-K distributions have bounded variance we can immediately generalize them to infinite dimensions.


\begin{definition}[$BK(p, \mcl{R})$ prior]\label{SG-prior-definition}
Given a constant $p> 0$ and a Hilbert-Schmidt operator $\mcl{R}:X \mapsto X$ with 
eigenvalues $\{\gamma_k \} \in \ell^2$
and eigenvectors $\{ r_k \}$,
 we define the $BK(p, \mcl{R})$  prior
as the law of the random variable
\begin{equation}\label{SG-prior-expansion}
u = \sum_{k=1}^\infty \gamma_k \eta_k r_k,
\end{equation}
where $\{ \eta_k \}$ is an i.i.d.
sequence of $BK(p,1)$ random
variables.
\end{definition}

The definition of the  $BK(p, \mcl{R})$ prior is inspired by the \myhl{Karhunen-Lo\`{e}ve} expansion of
Gaussian random variables \cite[Thm.~3.5.1]{bogachev-gaussian}.
The following theorem summarizes some basic facts about $BK(p, \mcl R)$ priors and
 is a direct consequence of Theorem~\ref{product-prior-properties}.

\begin{theorem}\label{SG-prior-properties}
Suppose $\mu = BK(p, \mcl{R})$ then $\mu \in P(X)$,
 $\| \cdot \|_X < \infty$ $\mu$-a.s. and $\| \cdot\|_X \in L^q(X, \mu)$  
for {all} $q \in \integers$.
\end{theorem}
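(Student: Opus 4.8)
The plan is to deduce Theorem~\ref{SG-prior-properties} as a direct application of Theorem~\ref{product-prior-properties}, so the task reduces to verifying that the hypotheses of the latter are met by the construction in Definition~\ref{SG-prior-definition}. First I would observe that, since $\mcl{R}$ is Hilbert--Schmidt, its eigenvalue sequence $\{\gamma_k\}$ lies in $\ell^2$, which is exactly the summability assumption required in Theorem~\ref{product-prior-properties}. Next I would recall the facts about the $BK(p,1)$ distribution collected earlier in this section: it is defined by a Lebesgue density given in terms of a modified Bessel function, hence absolutely continuous with respect to Lebesgue measure, and in particular it is a Radon measure on $\reals$; and because $BK(p,1)$ arises as the law of a difference of two independent $\Gamm(p,1)$ variables and the gamma distribution has finite moments of all orders, $BK(p,1)$ has bounded raw moments of every order $q \ge 2$.

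Having checked these two ingredients, I would apply Theorem~\ref{product-prior-properties} with $\xi_k = \eta_k$ i.i.d. $BK(p,1)$ and the given $\{\gamma_k\},\{r_k\}$: it yields immediately that $\mu = \Law\{ \sum_k \gamma_k \eta_k r_k\} \in P(X)$, that the series converges in $X$ $\mu$-almost surely (equivalently $\|u\|_X < \infty$ $\mu$-a.s.), and that $\|\cdot\|_X \in L^q(X,\mu)$ for the moment order in question. The only point needing a word of care is the range of exponents: Theorem~\ref{product-prior-properties} is stated for a fixed $q \ge 2$ under which the $\xi_k$ have bounded $q$-th moment, so to obtain the claim for \emph{all} $q \in \integers$ I would note that $BK(p,1)$ has bounded moments of every order, and hence the theorem may be invoked separately for each such $q$, giving $\|\cdot\|_X \in L^q(X,\mu)$ for all $q$.

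I do not anticipate a genuine obstacle here — the statement is essentially a corollary — so the ``hard part'' is merely bookkeeping: making explicit that the Bessel--K law is Radon and has all moments finite (both of which follow from the gamma representation \eqref{bessel-k-samples} and the characteristic function $\widehat{BK}(p,\sigma;s) = (1+(s\sigma)^2)^{-p}$), and then quoting Theorem~\ref{product-prior-properties} once per moment order. I would keep the write-up to two or three sentences, citing \eqref{bessel-k-samples}, the moment formula $\VV\eta = 2p\sigma^2$ together with the closure-under-convolution remark (or directly the gamma moment bounds) for the finiteness of all moments, and then Theorem~\ref{product-prior-properties} for the conclusion.
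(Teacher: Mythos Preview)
Your proposal is correct and matches the paper's own treatment: the paper states that Theorem~\ref{SG-prior-properties} ``is a direct consequence of Theorem~\ref{product-prior-properties}'' and gives no further proof. Your verification of the hypotheses (Hilbert--Schmidt $\Rightarrow \{\gamma_k\}\in\ell^2$, $BK(p,1)$ Radon with all moments finite via the gamma representation) and the per-$q$ invocation of Theorem~\ref{product-prior-properties} is exactly the intended argument.
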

Similar to their finite-dimensional counterparts, the $BK(p, \mcl{R})$ priors are also SD. 
\begin{theorem} \label{SG-prior-decomposition}
The $BK(p, \mcl{R})$ priors for  $p >0$  are SD.
Given $\beta \in (0,1)$ we
have 
$$ \widehat{BK}(p, \mcl{R};\varrho) = \widehat{BK}(p, \beta\mcl{R}; \varrho)
\widehat{BK}_\beta(p, \mcl{R};\varrho) \qquad \forall \varrho \in X^\ast.$$
Here, $\widehat{BK}_\beta(p,\mcl{R}; \cdot): X^\ast \mapsto \mbb{C}$ is the characteristic function of a
probability measure $BK_\beta(p, \mcl{R})$ (the innovation  of $BK(p,
\mcl{R})$)  that coincides with the  law of the random variable 

\begin{equation}\label{innovation-expansion}
v = \sum_{k=0}^\infty \gamma_k {\eta_{\beta,k}} r_k,
\end{equation}
where $\{ {\eta_{\beta,k}}\}_{k=0}^\infty$ is an
i.i.d. sequence of innovations with distribution $BK_\beta(p, 1)$ identified  by \eqref{BK-innovation} and 
\eqref{gamma-innovation}. 
\end{theorem}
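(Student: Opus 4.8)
The plan is to reduce the infinite-dimensional statement to the one-dimensional facts about Bessel-K distributions that were already established (the self-decomposability relation \eqref{BK-innovation} together with the characteristic function identity for $\widehat{BK}_\beta(p,1)$) and then lift everything to the Hilbert space $X$ coordinate by coordinate using the orthonormal basis $\{r_k\}$ furnished by the Hilbert-Schmidt operator $\mcl{R}$. First I would verify that the candidate innovation measure $BK_\beta(p,\mcl{R})$ is in fact a well-defined element of $P(X)$: since $\{\eta_{\beta,k}\}$ is i.i.d.\ with law $BK_\beta(p,1)$ and $BK_\beta(p,1)$ is the law of a difference of two gamma-innovation variables (hence has bounded moments of all orders, in particular finite variance), and since $\{\gamma_k\}\in\ell^2$, Theorem~\ref{product-prior-properties} applies verbatim and shows that \eqref{innovation-expansion} converges $\mu$-a.s.\ in $X$ and defines a Radon probability measure. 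This handles the claim that $\widehat{BK}_\beta(p,\mcl{R};\cdot)$ is genuinely the characteristic function of a probability measure on $X$.

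Next I would compute the three characteristic functions explicitly on $X^\ast$. Writing $\varrho\in X^\ast$ in the basis as $\varrho=\sum_k \varrho_k r_k$, the independence of the coordinates gives a product formula: $\widehat{BK}(p,\mcl{R};\varrho)=\prod_{k}\widehat{BK}(p,1;\gamma_k\varrho_k)=\prod_k(1+(\gamma_k\varrho_k)^2)^{-p}$, and similarly $\widehat{BK}(p,\beta\mcl{R};\varrho)=\prod_k(1+(\beta\gamma_k\varrho_k)^2)^{-p}$ and $\widehat{BK}_\beta(p,\mcl{R};\varrho)=\prod_k\bigl(\beta^2+\tfrac{1-\beta^2}{1+(\gamma_k\varrho_k)^2}\bigr)^{p}$. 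The desired factorization $\widehat{BK}(p,\mcl{R};\varrho)=\widehat{BK}(p,\beta\mcl{R};\varrho)\,\widehat{BK}_\beta(p,\mcl{R};\varrho)$ then follows termwise from the scalar identity
\begin{equation*}
(1+x^2)^{-p}=(1+\beta^2 x^2)^{-p}\Bigl(\beta^2+\tfrac{1-\beta^2}{1+x^2}\Bigr)^{p},
\end{equation*}
which is exactly the content of \eqref{BK-innovation} at the level of characteristic functions (one checks $(1+\beta^2x^2)\bigl(\beta^2+\tfrac{1-\beta^2}{1+x^2}\bigr)=1+x^2$ by clearing denominators). Taking products over $k$ yields the identity, and since a probability measure on a separable Hilbert space is determined by its characteristic function, the factorization at the level of characteristic functions corresponds to the decomposition $u\dequal \beta u'+v$ with $u'\sim BK(p,\mcl{R})$, $v\sim BK_\beta(p,\mcl{R})$ independent, i.e.\ $BK(p,\mcl{R})$ is SD with innovation $BK_\beta(p,\mcl{R})$.

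The one point requiring a little care — and the main obstacle, though a mild one — is the convergence of the infinite products and the justification that convergence of the coordinatewise characteristic functions implies convergence of the measures. For the products one notes $0<(1+(\gamma_k\varrho_k)^2)^{-p}\le 1$ and $\sum_k(\gamma_k\varrho_k)^2\le\|\mcl{R}\|_{\mathrm{op}}^2\|\varrho\|^2<\infty$ (indeed $\{\gamma_k\}\in\ell^2$ and $\varrho_k\to 0$), so each infinite product converges to a nonzero limit; the same bound controls the innovation product since $\beta^2\le\beta^2+\tfrac{1-\beta^2}{1+x^2}\le 1$. That the partial sums $\sum_{k\le n}\gamma_k\eta_{\beta,k}r_k$ converge a.s.\ in $X$ — rather than merely in distribution — is where Theorem~\ref{product-prior-properties} does the real work (via a martingale/Kolmogorov three-series type argument behind that theorem), so I would simply invoke it. Finally, to conclude $BK(p,\mcl{R})$ is SD one either appeals to the general fact that a law on a separable Hilbert space whose characteristic function factors as above is self-decomposable (which is the definition), or constructs the splitting random variables directly: set $u'=\sum_k\gamma_k\eta_k' r_k$ and $v=\sum_k\gamma_k\eta_{\beta,k}r_k$ with $\eta_k\dequal\beta\eta_k'+\eta_{\beta,k}$ coordinatewise (the scalar SD relation \eqref{BK-innovation}), all mutually independent, and check that $\beta u'+v\dequal u$ by matching characteristic functions, which is the termwise identity just established.
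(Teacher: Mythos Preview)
Your proposal is correct and follows essentially the same route as the paper: compute the characteristic function of $BK(p,\mcl{R})$ as a product over coordinates using independence, apply the one-dimensional self-decomposability identity $\widehat{BK}(p,1;s)=\widehat{BK}(p,1;\beta s)\widehat{BK}_\beta(p,1;s)$ termwise, and invoke Theorem~\ref{product-prior-properties} to see that the innovation series \eqref{innovation-expansion} defines a genuine Radon probability measure on $X$, concluding via uniqueness of characteristic functions. Your treatment is in fact more careful than the paper's in justifying convergence of the infinite products and in noting explicitly that $BK_\beta(p,1)$ has finite variance (needed to apply Theorem~\ref{product-prior-properties}); the alternative direct construction at the end is a nice supplement but not needed.
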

\begin{proof}
Let $\mu = BK(p, \mcl{R})$ and consider $\varrho \in X^\ast$ and denote its Riesz representer in $X$ with
$\rho$. Then $\rho = \sum_{k=0}^\infty \rho_k r_k $  where
$\rho_k = \langle \rho, r_k \rangle$ following the assumption that $\{ r_k\}$ form
an orthonormal basis in $X$. 
Then
$$
\begin{aligned}
  \widehat{\mu}(\varrho) = \int_X \exp( i \langle u, \rho \rangle )
  \dd \mu(u) = \EE \exp\left( i \sum_{k=0}^\infty \gamma_k\rho_k 
    \eta_{k}\right) = \prod_{k=0}^\infty \EE \exp( i \rho_k \gamma_k\eta_k)
 = \prod_{k=0}^\infty \widehat{BK}(p, 1; \gamma_k \rho_k).
\end{aligned}
$$
However, we have  $\widehat{BK}(p,1; s) =
\widehat{BK}(p,1;\beta s) \widehat{BK}_\beta(p, 1)$ for any $\beta \in (0,1)$ and so
we can write
$$
\begin{aligned}
  \widehat{\mu}(\varrho) & = \prod_{k=0}^\infty \widehat{BK}(p, 
  1; \beta \gamma_k \rho_k)
  \widehat{BK}_\beta( p, 1; \gamma_k \rho_k) \\& = \left( 
\prod_{k=0}^\infty \widehat{BK}(p, 
  1; \beta \gamma_k \rho_k)
 \right) \left( \prod_{k=0}^\infty
  \widehat{BK}_\beta( p, 1; \gamma_k \rho_k)
    \right).
\end{aligned}
$$
At this point it is straightforward to check that the term in the first
bracket corresponds to the characteristic function of the pushforward measure $\mu \circ \beta^{-1}$
evaluated at $\rho$ while the second term coincides with the
characteristic function of  the random variable $v \sim BK_\beta(p, \mcl R)$  evaluated at $\varrho$. It follows from Theorem~\ref{product-prior-properties} that the law of $v$ belongs to $P(X)$. 
Then the claim 
follows from the fact that 
two Radon probability measures on a separable Hilbert
space are equivalent when their characteristic functions coincide pointwise.
\end{proof}

\subsubsection{Connection to Besov priors}\label{sec:besov-connection}

 When $X = L^2(\mathbb{T}^d)$ and for a specific choice of the operator $\mcl{R}$, the $BK(1, \mcl{R})$ priors coincide with a certain subset of   Besov priors   \cite{lassas-invariant, dashti-besov}. 
Let us recall the definition of this prior class.

\begin{definition}[$B^s_{qq}(\mbb T^d)$ prior] \label{besov-prior}
Suppose $1\le q < \infty$,  $s>0$ and
$\{ r_k\}$ is a $\ell$-regular wavelet basis for $L^2(\mathbb{T}^d)$ with $ \ell > s$. Let 
\begin{equation}\label{Besov-prior}
\mu = {\rm Law} \left\{ u = \sum_{k=0}^\infty (k+1)^{- \left( \frac{s}{d} + \frac{1}{2} - \frac{1}{q} \right)} \xi_k  r_k \right\},
\end{equation}
where $\{\xi_k\}$ is a sequence of real valued i.i.d. random variables with Lebesgue density proportional to
$$
\exp\left( -\frac{1}{2} |t|^q \right) \quad \text{for} \quad t \in \reals.
$$
Then $\mu$ is a $B^s_{qq}(\mbb{T}^d)$ prior. Furthermore, $\| u\|_{B^s_{qq}(\mbb{T}^d)} < \infty$ a.s. and $\EE \exp( \kappa \| u\|^q_{B^s_{qq}(\mbb{T}^d)} )< \infty$ for any $\kappa \in (0, 1/2)$ where,
\begin{equation*}
{ \| u\|_{B^s_{qq}(\mbb{T}^d)} := \left( \sum_{k=0}^\infty (k+1)^{\left( \frac{sq}{d} + \frac{q}{2} - 1 \right)} \langle u,  r_k \rangle^q \right)^{1/q}. }
\end{equation*}
\end{definition}
 
Now consider
the case where $q =1$ and $s$ is large enough so that $\frac{s}{d} - \frac{1}{2} \ge 1$. Then $ \{ (k+1)^{- \left( \frac{s}{d} - \frac{1}{2}  \right)} \} \in \ell^2$ and the $B^s_{11}(\mbb{T}^d)$ prior coincides with a $BK(1, \mcl{R})$ prior  on $L^2(\mbb{T}^d)$
with 
$$
{\mcl{R}(v) := \sum_{k=0}^\infty (k+1)^{- \left( \frac{s}{d} - \frac{1}{2} \right)} \langle v, r_k\rangle r_k.}
$$

Since Laplace random variables are SD  we can use the same argument as in the 
proof of Theorem~\ref{SG-prior-decomposition} to infer that the $B^s_{11}(\mbb{T}^d)$ priors are also SD.
Furthermore, the innovation of the $B^s_{11}(\mbb{T}^d)$ prior coincides with the law of the 
random variable 
\begin{equation}\label{Besov-innovation}
{v = \sum_{k=0}^\infty  (k+1)^{- \left( \frac{s}{d} - \frac{1}{2}  \right)} {\eta_{\beta,k}} r_k,}
\end{equation}
where $\{ {\eta_{\beta,k}} \}$ are i.i.d. random variables with distribution 
$BK_\beta(1, 1)$.
 {We highlight that the assumption $\frac{s}{d}  - \frac{1}{2} \ge 1$ is rather
   strong and is only sufficient to ensure a.s. convergence of the
   sums in \eqref{Besov-prior} and \eqref{Besov-innovation}. The SD property of $B^s_{11}(\mbb{T}^d)$  and the representation \eqref{Besov-innovation} remain valid for smaller values of $s$ so long as the sums converge a.s.
   so  $\mu$ is  well-defined.}

 \subsection{Sampling with Bessel-K priors on Hilbert spaces}
\label{subsec:sampling-with-bessel-K-Hilbert-space}
 We are now in position to generalize the lifted RCAR and SARSD algorithms
 of Subsection~\ref{sec:BK-1D-algorithms} to Hilbert spaces.
 The key is to use the 1D proposal kernels of
 Algorithms~\ref{BK-RCAR-1D-algorithm} and \ref{BK-SARSD-1D-algorithm}
 to construct a Markov chain for each coefficient $\eta_k$ 
 in \eqref{SG-prior-expansion} independently. The main advantage of this approach
 is that since the $\eta_k$ and their corresponding
 proposal kernels are independent of each other we
 can update them all at once.
 Of course, since there are countably infinitely many  $\eta_k$
 we cannot use the resulting algorithms in practice but 
 we 
 can easily approximate them by
truncating the  sum in \eqref{SG-prior-expansion}.
The following result allows us to use 1D proposal kernels to construct
a proposal kernel on the space $X$ for product measures of the form \eqref{product-prior}.

\begin{theorem}\label{reversible-product-kernel}
  Suppose
  \begin{equation}\label{product-kernel-display}
    \mu = \Law\left \{ u= \sum_{k=1}^\infty \gamma_k \xi_k r_k \right\} \in P(X),
    \end{equation}
 \myhl{ where  $\{r_k\}$ is an orthonormal basis in $X$, $\{ \xi_k\}$
   are independent  random variables with law $\mu_k \in P(\mbb R)$, and
   $\{ \gamma_k\}$ is a fixed
 sequence in $\mbb R$ that decays sufficiently fast
 so that $ \| u\|_X < +\infty $ a.s. and $\mu$ is well-defined.} Suppose $\mcl Q_k$
    are probability kernels that are  $\mu_k$-reversible 
    and let $\mcl Q(u, \dd v)$ be the transition kernel corresponding to the following operations:
    \begin{enumerate}
    \item For $k= 1, 2, 3,  \dots$ draw $\zeta_k \sim  \mcl Q_k(\xi_k, \dd \zeta_k)$
      where $\xi_k$ are the coefficients in \eqref{product-kernel-display}.
      \item Set $v = \sum_{j=1}^\infty \gamma_k \zeta_k r_k$. 
      \end{enumerate}
      Then $\mcl Q$ satisfies detailed balance with respect to $\mu$. 
    \end{theorem}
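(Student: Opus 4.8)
The plan is to verify the detailed balance identity $\mcl Q(u,\dd v)\dd\mu(u)=\mcl Q(v,\dd u)\dd\mu(v)$ directly, by testing both sides against product functions and exploiting the independence built into the coordinate representation. Concretely, I would parametrize everything through the coordinate maps: writing $\xi=(\xi_k)$ for the coefficient sequence, the measure $\mu$ is the pushforward under $\iota:\reals^\infty\supseteq S\to X$, $\iota(\xi)=\sum_k\gamma_k\xi_k r_k$, of the product measure $\bigotimes_k\mu_k$ on the (Borel-measurable) sequence space $S$ of $\{\xi:\|\iota(\xi)\|_X<\infty\}$. Likewise the kernel $\mcl Q$ is the pushforward of the product kernel $\tilde{\mcl Q}(\xi,\dd\zeta):=\bigotimes_k\mcl Q_k(\xi_k,\dd\zeta_k)$ under $\iota$ applied to the $\zeta$-variable. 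So it suffices to prove detailed balance for $\tilde{\mcl Q}$ with respect to $\bigotimes_k\mu_k$ on $S$, and then push forward: since $\iota$ is injective on $S$ and measurable, reversibility is preserved under the pushforward.

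The core computation is then the product case. I would show that for all bounded measurable $F,G:S\to\reals$,
\[
\int_S G(\xi)\int_S F(\zeta)\,\tilde{\mcl Q}(\xi,\dd\zeta)\,\dd\Big(\textstyle\bigotimes_k\mu_k\Big)(\xi)
\]
is symmetric in $F\leftrightarrow G$. By a monotone-class / Dynkin argument it is enough to check this for $F,G$ of product form $F(\zeta)=\prod_k f_k(\zeta_k)$, $G(\xi)=\prod_k g_k(\xi_k)$ with each $f_k,g_k$ bounded measurable and all but finitely many equal to $1$ (these generate the product $\sigma$-algebra and are closed under products). For such functions, Fubini and the product structure of $\tilde{\mcl Q}$ factor the integral into $\prod_k\big[\int_\reals g_k(\xi_k)\int_\reals f_k(\zeta_k)\,\mcl Q_k(\xi_k,\dd\zeta_k)\,\dd\mu_k(\xi_k)\big]$, and each single-coordinate factor is symmetric in $f_k\leftrightarrow g_k$ precisely because $\mcl Q_k$ is $\mu_k$-reversible. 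Hence the whole product is symmetric, giving detailed balance for $\tilde{\mcl Q}$. One technical point to handle carefully here: the infinite product of kernels $\bigotimes_k\mcl Q_k$ must be shown to be a well-defined probability kernel on $S$, i.e.\ that for $\mu$-a.e.\ $\xi$ the product measure $\bigotimes_k\mcl Q_k(\xi_k,\cdot)$ is supported on $S$ (so the proposed $v$ lies in $X$ a.s.); this follows from Theorem~\ref{product-prior-properties} applied to the reversible chain, since $\mu$-reversibility of each $\mcl Q_k$ forces $\zeta_k$ to again have law $\mu_k$ when $\xi_k\sim\mu_k$, so $\iota(\zeta)\in X$ a.s.\ for $\mu$-a.e.\ starting $\xi$.

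I expect the main obstacle to be the measure-theoretic bookkeeping rather than any deep idea: making precise the identification of $\mu$ and $\mcl Q$ with objects on the sequence space $S$ (which is a Borel subset of $\reals^\infty$ of full measure), checking that $\iota$ restricted to $S$ is a measurable bijection onto its image and that pushforward under it commutes with the operations defining $\mcl Q$, and justifying the reduction to product test functions via the Dynkin $\pi$–$\lambda$ theorem (the family of product sets is a $\pi$-system generating $\mcl B(X)\otimes\mcl B(X)$ through $\iota$). Once that scaffolding is in place the algebra is a one-line factorization. I would therefore structure the proof as: (i) reduce to the sequence space; (ii) state the infinite-product kernel is well-defined via Theorem~\ref{product-prior-properties}; (iii) prove detailed balance on product test functions by coordinatewise reversibility; (iv) extend to all bounded measurable functions and push forward to conclude $\mcl Q(u,\dd v)\dd\mu(u)=\mcl Q(v,\dd u)\dd\mu(v)$, which is exactly \eqref{prior-reversibility}.
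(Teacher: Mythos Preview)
Your proposal is correct and follows essentially the same underlying idea as the paper's proof: exploit the product structure to factor the joint measure $\mcl Q(u,\dd v)\dd\mu(u)$ coordinatewise and invoke the $\mu_k$-reversibility of each $\mcl Q_k$ on the factors. The difference is in the choice of separating test functions. The paper tests $\tau(u,v):=\mcl Q(u,\dd v)\dd\mu(u)$ directly against complex exponentials, computing the characteristic function
\[
\hat\tau(\varrho,\varrho')=\EE\,e^{i\langle\varrho,u\rangle}e^{i\langle\varrho',v\rangle}
=\prod_k\hat\tau_k(\gamma_k\rho_k,\gamma_k\rho'_k),
\]
noting that each factor is symmetric because $\mcl Q_k$ is $\mu_k$-reversible, and concluding $\hat\tau(\varrho,\varrho')=\hat\tau(\varrho',\varrho)$; symmetry of $\tau$ then follows from the fact that characteristic functions determine Radon measures on the separable Hilbert space $X\times X$. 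You instead test against finite cylinder functions $\prod_k f_k(\zeta_k)\prod_k g_k(\xi_k)$ and extend by a Dynkin/monotone-class argument. Your route avoids invoking characteristic-function uniqueness in infinite dimensions and is more explicit about well-definedness of the product kernel and the pushforward through $\iota$, at the cost of more measure-theoretic scaffolding; the paper's route is a two-line computation once one accepts that $\hat\tau$ determines $\tau$. Either is a complete proof.
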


    \begin{proof}
      Let $\tau_k(\xi_k,\zeta_k) = \mcl Q_k(\xi_k, \dd \zeta_k) \dd \mu_k(\xi_k)$. Since $\mcl Q_k$
      satisfy detailed balance then
      \begin{equation*}
        \hat{\tau}_k(s,s') = \hat{\tau}_k(s',s), \qquad \forall s,s' \in \mbb R.
      \end{equation*}
      Now let $\tau(u,v) = \mcl Q(u, \dd v) \dd \mu(u)$ and take $\varrho , \varrho' \in X^\ast$ 
     and  denote their Riesz representers with $\rho, \rho' \in X$
     and let $\rho_k = \langle \rho, r_k \rangle$ and $\rho'_k = \langle \rho', r_k \rangle$
      be the basis coefficients of $\rho$ and $\rho'$. 
      Then
      \begin{equation*}
        \begin{aligned}
          \hat{\tau}( \varrho,  \varrho')
          & = \EE \exp(i \langle \varrho, u \rangle) \exp( i \langle \varrho', v \rangle) \\
          & = \EE \exp \left( i \sum_{k=1}^\infty \gamma_k \rho_k \xi_k   \right)
          \exp \left( i \sum_{k=1}^\infty \gamma_k \rho'_k \xi_k' \right)\\
          & = \EE \prod_{k=1}^\infty \exp( i \gamma_k \rho_k \xi_k + i\gamma_k \rho'_k \xi_k)  \\
           &= \prod_{k=1}^\infty \hat{\tau}_k( \gamma_k \rho_k, \gamma_k \rho'_k)
           = \prod_{k=1}^\infty \hat{\tau}_k( \gamma_k \rho'_k, \gamma_k \rho_k) 
           = \hat{\tau}( \varrho',  \varrho).
        \end{aligned}
      \end{equation*}
      Thus the characteristic function of $\tau$ is symmetric implying that
      $\mcl Q$ is reversible with respect to $\mu$.
    \end{proof}

\subsubsection{The lifted RCAR and SARSD algorithms on Hilbert spaces}
In light of Theorem~\ref{reversible-product-kernel} we now present the 
infinite-dimensional analogues of the lifted RCAR and SARSD algorithms of
Subsection~\ref{sec:BK-1D-algorithms} for product measures
of the form \eqref{product-prior} on Hilbert spaces.
We summarize the full algorithms below under Algorithms~\ref{BK-RCAR-algorithm} and \ref{BK-SARSD-algorithm}.

\begin{remark}
  Note that Remark~\ref{remark:RCAR-vs-SARSD} remains true when $X$ is a
  separable Hilbert space, that is, the RCAR algorithm has
significantly lower memory overhead in comparison to the SARSD algorithm
specially when $p$ is large.
\end{remark}

\begin{remark}
We highlight that since $BK(1, 1) = \Exp(1)$, then 
by taking $\gamma_k = ( k +1)^{-(s/d -1)}$
and $r_k$ to be sufficiently regular wavelet basis for $L^2(\mbb T^d)$
in accordance with Definition~\ref{besov-prior}, we  have that the $BK(1, \mcl R)$
coincides with
a $B^s_{11}(\mbb T^d)$ prior
\myhl{so long as $s$ is large enough that the
  spectral sums converge a.s.}, and so both Algorithms~\ref{BK-RCAR-algorithm}
and \ref{BK-SARSD-algorithm} can be used to sample posteriors that arise from
$B^s_{11}(\mbb T^d)$ priors.
\end{remark}

\begin{algorithm}
\caption{Lifted RCAR algorithm for $BK(p,\mcl R)$ priors.}
\label{BK-RCAR-algorithm}
Choose $\beta \in (0,1)$ and suppose $\mu = BK(p, \mcl R)$ with $p >0$ and
 a Hilbert-Schmidt operator $\mcl{R}$
with eigenpairs $\{\gamma_\ell, r_\ell\}$. In the following
$k =1,2$, 
 $\ell = 1, 2 ,3 ,\dots $ and all random variables are drawn independently.
{\begin{enumerate}[1.]
  \item Set $j = 0$, draw 
    $\myhl{u_{k,\ell}}^{(0)} \sim \Gamm(p, 1)$ and set
    \begin{equation*}
u^{(0)} = \sum_{\ell=1}^\infty
    \gamma_\ell \left(\myhl{u_{1,\ell}^{(0)} - u_{2,\ell}^{(0)}}\right)r_\ell.
  \end{equation*}

  \item At iteration $j$ draw 
    \begin{equation*}
      \zeta_{k,\ell} \sim \Beta(p \beta, p(1- \beta)),
       \quad \myhl{w_{k,\ell}} \sim \Gamm(p(1- \beta), 1).
    \end{equation*}
  \item Propose the sequence
  \begin{equation*}
    \begin{aligned}
      \myhl{v_{k,\ell}}^{(j+1)}
      &= \zeta_{k,\ell} \myhl{u_{k,\ell}}^{(j)}
      + \myhl{w_{k,\ell}}.
      \end{aligned}
    \end{equation*}
  \item Set $v^{(j+1)}
    = \sum_{\ell=1}^\infty \gamma_\ell \left(\myhl{v_{1,\ell}^{(j+1)} - v_{2,\ell}^{(j+1)}}\right)
    r_\ell.$
\item With probability
  \begin{equation*}
a( u^{(j)}, v^{(j+1)}) =
  \min\{ 1, \: \exp( \Psi(u^{(j)}) - \Psi(v^{(j+1)} ) \},
\end{equation*}
set
   $u^{(j+1)} =v^{(j+1)}, \myhl{u_{k,\ell}^{(j+1)} = v_{k,\ell}^{(j+1)}}$.
\item Otherwise set $u^{(j+1)} = u^{(j)}$ and $ \myhl{u_{k,\ell}^{(j+1)} = u_{k,\ell}^{(j)}}$.
\item Set $j \leftarrow j +1$ and return to step 2.
\end{enumerate}}
\end{algorithm}

\begin{algorithm}
\caption{Lifted SARSD algorithm for $BK(p,\mcl R)$ priors.}
\label{BK-SARSD-algorithm}
Choose $\beta \in (0,1)$ and suppose $\mu = BK(p, \mcl R)$ with $p \in \mbb N$ and
 a Hilbert-Schmidt operator $\mcl{R}$
with eigenpairs $\{\gamma_\ell, r_\ell\}$. In the following
$k =1,2, \dots, 2p$, 
 $\ell = 1, 2 ,3 ,\dots $ and all random variables are drawn independently.
{\begin{enumerate}[1.]
  \item Set $j = 0$, draw 
    $\myhl{u_{k,\ell}}^{(0)}, \sim \Gamm(p, 1)$ and set
    \begin{equation*}
u^{(0)} = \sum_{\ell=1}^\infty
\gamma_\ell \left( \sum_{k=1}^p\myhl{u_{k,\ell}^{(0)}}
  -  \sum_{k=p+1}^{2p}\myhl{u_{k,\ell}^{(0)}}\right)r_\ell.
  \end{equation*}

\item At iteration $j$ draw
  \begin{equation*}
   t \sim \Bern(1/2) \quad \myhl{w_{k,\ell}} \sim \Exp(1), \quad 
   \myhl{\zeta_{k,\ell} \sim \Bern(1 - \beta)}.
\end{equation*}

\item If $t =1$ propose forward 
    \begin{equation*}
      \myhl{v_{k,\ell}^{(j+1)}} = \beta \myhl{u_{k,\ell}^{(j+1)}} +\myhl{ \zeta_{k,\ell} w_{k,\ell}}.
    \end{equation*}
  \item If $t=0$ propose backward
    \begin{equation*}
      \myhl{v_{k,\ell}^{(j+1)}} = \min \{\myhl{u_{k,\ell}^{(j)}}/\beta, \myhl{w_{k,\ell}}/(1- \beta)  \}.
    \end{equation*}
  \item Set $v^{(j+1)} =
    \sum_{\ell=1}^\infty \gamma_\ell \left(\sum_{k=1}^p
      \myhl{v_{k,\ell}^{(j+1)}} -
      \sum_{k=p+1}^{2p} \myhl{v_{k,\ell}^{(j+1)}}\right)
    r_\ell.$
\item With probability
  \begin{equation*}
a( u^{(j)}, v^{(j+1)}) =
  \min\{ 1, \: \exp( \Psi(u^{(j)}) - \Psi(v^{(j+1)} ) \},
\end{equation*}
set
   $u^{(j+1)} =v^{(j+1)}, \myhl{u_{k,\ell}^{(j+1)} = u_{k,\ell}^{(j+1)}}$.
 \item Otherwise set $u^{(j+1)} = u^{(j)}$ and
   $\myhl{u_{k,\ell}^{(j+1)} = u_{k,\ell}^{(j)}}$.
\item Set $j \leftarrow j +1$ and return to step 2.
\end{enumerate}}
\end{algorithm}

\section{Well-posed Bayesian inverse problems with Bessel-K priors}\label{sec:well-posedness}

We briefly discuss well-posedness and consistent approximations of Bayesian inverse problems with Bessel-K priors. The main references for the results used in this section are  
\cite{stuart-bayesian-lecture-notes, hosseini-sparse, sullivan} where the 
 theory of well-posed Bayesian inverse problems with non-Gaussian priors 
was developed. 

As in Section~\ref{sec:introduction} we consider the additive noise model 
$$
y = \mcl{G}(u) + \epsilon, \qquad \epsilon \sim \mcl{N}(0, \pmb{\Sigma}).
$$ 
The parameter  $u$ belongs to the Hilbert space  $X$ and the data $y \in \reals^M$ 
for $M\in \integers$.  We recall the following definition of well-posedness for Bayesian inverse problems.

\begin{definition}[Hellinger Well-posedness {\cite{hosseini-convex}}] \label{def-well-posedness}
Define the Hellinger metric
$$
 d_H( \mu_1, \mu_2) := \left( \frac{1}{2} \int_X  \left( \sqrt{ {\dd \mu_1}/{ \dd \Theta} } - \sqrt{ {\dd \mu_2}/{ \dd \Theta} }  \right)^2 \dd \Theta \right)^{1/2}
$$
on $P(X)$ where $\Theta \in P(X)$ is such that $\mu_1 \ll \Theta$ and $\mu_2 \ll \Theta$.
For a choice of the
prior measure $\mu_0$ and the likelihood potential $\Phi$, the Bayesian inverse problem  \eqref{bayes-rule}
 is well-posed if:
  \begin{enumerate}
  \item There exists a unique posterior probability measure $\mu^{y} \ll \mu_0$.
  \item  For any $\epsilon>0$, there is a $ \delta>0$ such that if $\|y-y'\|_2 <\delta$ then  $d_H( \mu^{y},\mu^{y'} ) <\epsilon$.
  \end{enumerate}
\end{definition}

With a definition of well-posedness at hand, we can now identify conditions on the 
prior measure $\mu_0$ and the forward map $\mcl{G}$ that result in a well-posed problem.
 The following theorem 
is a direct consequence of \cite[Cor.~3.5 and Thm.~3.8]{hosseini-sparse}.

\begin{theorem}\label{well-posedness}
 Suppose  $\Phi$ has the form \eqref{quadratic-likelihood} and the forward map
 $\mcl{G}:X \mapsto \reals^M$ is locally Lipschitz continuous, and 
polynomially
bounded, i.e., 
\begin{equation}\label{polynomially-bounded-forward-map}
\exists C>0 \text{ and } q \in\integers \qquad \text{such that} \qquad \| \mcl{G}(u) \|_2 \le C \max\{1, \|u\|_X^q\} 
\qquad \forall u \in X. 
\end{equation}
If $\mu_0 \in P(X)$ and { $\| \cdot \|_X\in L^{q}(X, \mu_0)$} then the
Bayesian inverse problem \eqref{bayes-rule} is well-posed.
\end{theorem}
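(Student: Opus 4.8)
The plan is to verify the two conditions of Hellinger well-posedness (Definition~\ref{def-well-posedness}) for the posterior \eqref{bayes-rule} under the stated hypotheses, following the template of \cite[Cor.~3.5 and Thm.~3.8]{hosseini-sparse} and \cite{stuart-bayesian-lecture-notes}. The key analytic inputs are: (i) $\Phi(u;y) = \tfrac12\|\mcl{G}(u)-y\|_{\pmb\Sigma}^2 \ge 0$, so $\exp(-\Phi)\le 1$ is bounded; (ii) the polynomial growth bound \eqref{polynomially-bounded-forward-map} together with $\|\cdot\|_X \in L^q(X,\mu_0)$, which will control integrability; and (iii) local Lipschitz continuity of $\mcl G$, which transfers to local Lipschitz continuity of $\Phi$ in both arguments on bounded sets.

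First I would establish that the normalizing constant $Z(y) = \int_X \exp(-\Phi(u;y))\dd\mu_0(u)$ is finite and strictly positive for every $y\in\reals^M$: finiteness is immediate from $\exp(-\Phi)\le 1$ and $\mu_0\in P(X)$, while strict positivity follows because $\Phi$ is finite everywhere (as $\mcl G$ is finite-valued and locally bounded) so $\exp(-\Phi)>0$ pointwise, hence its integral against a probability measure is positive. This gives existence and uniqueness of $\mu^y\ll\mu_0$ via the Radon--Nikodym theorem, settling condition~1. Next, for condition~2 I would fix $y,y'$ with $\|y-y'\|_2$ small (say both in a ball of radius $r$) and estimate $d_H(\mu^y,\mu^{y'})$ using $\Theta = \mu_0$ as the dominating measure. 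The standard computation splits $2d_H^2(\mu^y,\mu^{y'})$ into a term controlled by $|Z(y)^{-1/2}-Z(y')^{-1/2}|^2 \int \exp(-\Phi(u;y))\dd\mu_0$ and a term controlled by $Z(y')^{-1}\int|\exp(-\tfrac12\Phi(u;y)) - \exp(-\tfrac12\Phi(u;y'))|^2\dd\mu_0(u)$. Using $|e^{-a}-e^{-b}|\le\tfrac12|a-b|$ for $a,b\ge 0$ and the bound $|\Phi(u;y)-\Phi(u;y')| \le C(1+\|\mcl G(u)\|_2 + r)\|y-y'\|_2 \le C'(1+\|u\|_X^q)\|y-y'\|_2$, the integrand is dominated by an $L^1(\mu_0)$ function precisely because $\|\cdot\|_X^q \in L^1(X,\mu_0)$; similarly $|Z(y)-Z(y')|\lesssim\|y-y'\|_2$ by the same estimate and dominated convergence. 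Assembling these gives $d_H(\mu^y,\mu^{y'}) \le C''\|y-y'\|_2$ locally, which yields the required continuity.

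The main obstacle is making the integrability bookkeeping airtight: one must check that the functions appearing inside the integrals are genuinely integrable against $\mu_0$ uniformly for $y'$ in a bounded neighborhood of $y$, and that the constant $C''$ can be chosen locally uniform. This is where the hypothesis $\|\cdot\|_X\in L^q(X,\mu_0)$ and the exponent $q$ in \eqref{polynomially-bounded-forward-map} must match — the quadratic likelihood produces a factor $\|\mcl G(u)\|_2^2 \lesssim \|u\|_X^{2q}$ inside $\Phi$, but the \emph{difference} $\Phi(u;y)-\Phi(u;y')$ only has linear growth in $\|\mcl G(u)\|_2$ hence growth $\|u\|_X^q$, so $L^q$ integrability of the norm suffices; one should double-check this cancellation carefully. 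Everything else is routine and can be quoted from \cite[Cor.~3.5 and Thm.~3.8]{hosseini-sparse}, since the Bessel-K priors enter only through Theorem~\ref{SG-prior-properties}, which guarantees $\|\cdot\|_X\in L^q(X,\mu_0)$ for all $q\in\integers$ and hence verifies the hypothesis of the present theorem automatically.
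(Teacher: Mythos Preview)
The paper does not give an independent proof; it simply records the theorem as a direct consequence of \cite[Cor.~3.5 and Thm.~3.8]{hosseini-sparse}, and your sketch follows exactly that template. One tightening is needed: your bound $|e^{-a}-e^{-b}|\le\tfrac12|a-b|$ (which should read $\le|a-b|$) followed by squaring would force $\|\cdot\|_X\in L^{2q}(X,\mu_0)$ rather than the stated $L^q$; to get by with only $L^q$ use instead the inequality $(e^{-a/2}-e^{-b/2})^2\le|e^{-a}-e^{-b}|\le|a-b|$ for $a,b\ge0$, which gives linear rather than quadratic control of the Hellinger integrand in $|\Phi(u;y)-\Phi(u;y')|$ --- precisely the ``cancellation'' you flagged as needing a double-check.
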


We emphasize that the requirements of this result on the prior measure $\mu_0$ are fairly relaxed. For example, 
if $\mcl{G}$ is bounded and linear then we need $\mu_0$ to have bounded first moments in order to achieve 
well-posedness. Putting Theorem~\ref{well-posedness} together with Theorem~\ref{SG-prior-properties} we obtain the 
following corollary.

\begin{corollary}
Suppose  $\Phi$ has the form \eqref{quadratic-likelihood} and
 the forward map
 $\mcl{G}:X \mapsto \reals^M$ satisfies the conditions of Theorem~\ref{well-posedness}.
If $\mu_0$ is a $BK(p, \mcl{R})$ prior with $p > 0$ and a Hilbert-Schmidt operator $\mcl{R}:X \mapsto X$
 then the
Bayesian inverse problem \eqref{bayes-rule} is well-posed.
\end{corollary}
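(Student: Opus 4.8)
The plan is to verify the two hypotheses that Theorem~\ref{well-posedness} places on the prior measure and then invoke that theorem verbatim. Recall that Theorem~\ref{well-posedness} requires (i) $\mu_0 \in P(X)$ and (ii) $\|\cdot\|_X \in L^{q}(X,\mu_0)$, where $q \in \integers$ is precisely the exponent appearing in the polynomial growth bound \eqref{polynomially-bounded-forward-map} on the forward map. Since by assumption $\mcl G$ satisfies the conditions of Theorem~\ref{well-posedness}, it is locally Lipschitz and polynomially bounded, so such a fixed $q \in \integers$ exists with $\|\mcl G(u)\|_2 \le C\max\{1,\|u\|_X^q\}$ for all $u \in X$.

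Next I would read off the relevant integrability of the $BK(p,\mcl R)$ prior from Theorem~\ref{SG-prior-properties}. Because $\mcl R$ is Hilbert--Schmidt its eigenvalues $\{\gamma_k\}$ lie in $\ell^2$, and the coefficients $\eta_k \sim BK(p,1)$ form an i.i.d.\ sequence distributed according to a Radon measure with bounded moments of every order; hence Theorem~\ref{SG-prior-properties} (a consequence of Theorem~\ref{product-prior-properties}) yields that $\mu_0 = BK(p,\mcl R) \in P(X)$, that $\|\cdot\|_X < \infty$ $\mu_0$-a.s., and that $\|\cdot\|_X \in L^{q'}(X,\mu_0)$ for \emph{every} $q' \in \integers$. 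In particular this holds for the $q$ fixed in the previous step, so hypotheses (i) and (ii) of Theorem~\ref{well-posedness} are both met and the Bayesian inverse problem \eqref{bayes-rule} with prior $\mu_0$ is well-posed in the sense of Definition~\ref{def-well-posedness}.

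Since the argument is nothing more than a concatenation of Theorem~\ref{SG-prior-properties} with Theorem~\ref{well-posedness}, there is no genuine obstacle; the only point that deserves a moment's attention is bookkeeping of the integrability exponent — one must check that the order $q'$ for which Theorem~\ref{SG-prior-properties} guarantees $\|\cdot\|_X \in L^{q'}(X,\mu_0)$ can be chosen to match the exponent $q$ demanded by \eqref{polynomially-bounded-forward-map}, which it can, as both quantities range over $\integers$.
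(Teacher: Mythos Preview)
Your proposal is correct and mirrors the paper's own approach: the paper does not give a formal proof of this corollary and simply states that it follows by putting Theorem~\ref{well-posedness} together with Theorem~\ref{SG-prior-properties}, which is exactly the concatenation you carry out. Your additional remark about matching the integrability exponent $q$ is the only bookkeeping needed, and it is handled precisely as you describe.
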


Let us now discuss approximations of the
posterior $\mu^y$. Since $X$ can be infinite dimensional  we 
cannot solve  \eqref{bayes-rule} directly. Instead, we 
approximate the posterior $\mu^{y}$ with a sequence  
$\{ \mu_N^{y}\} \in P(X)$
 that can be described in a feasible manner (we will make this notion precise shortly). We say that the sequence
of measures $\{ \mu^{y}_N\}$ are a {\it consistent approximation} to $\mu^{y}$ if $d_{H}(\mu_N^{y}, \mu^{y}) \to 0$ as $N \to \infty$.

Now suppose that $\mcl{G}_N$ is an approximation to $\mcl{G}$  parameterized by $N$. We
think of $N \in \integers$ as a discretization parameter such as the number of 
terms in a truncated spectral expansion or the number of elements in a finite element method. Now define the sequence of measures
\begin{equation}\label{discrete-bayes-rule}
\frac{\dd \mu_N^{y}}{\dd \mu_0}(u) = \frac{1}{Z_N(y)} \exp\left( - \Phi_N(u;y)
\right), \qquad \qquad Z_N(y) := \int_X \exp( -\Phi_N(u;y)) \dd \mu_0(u),
\end{equation}
where 
$$
\Phi_N(u;y) := \frac{1}{2} \left\| \mcl{G}_N(u)  - y \right\|_{\pmb{\Sigma}}^2.
$$
We have the following consistency theorem as a direct consequence of Theorem~\ref{well-posedness}  and  
 \cite[Thm.~4.3]{hosseini-sparse}.

\begin{theorem}\label{consistent-discretization}
  Suppose that the forward map 
$\mcl{G}: X \mapsto \reals^M$ and its approximations $\mcl{G}_N:X \mapsto \reals^M$ are 
locally Lipschitz continuous and satisfy condition \eqref{polynomially-bounded-forward-map}
with uniform constants \myhl{$q \in \mbb{N}$} and $C> 0$ for all 
$N$. Furthermore, assume that there exists a bounded function $\psi(N)$ so that $\psi(N) \to 0$ as $N \to \infty$ and
$$
\exists C' >0 \qquad \text{so that} \qquad  \| \mcl{G}_N(u) - \mcl{G}(u)\|_2 \le C' \| u\|_X^q \psi(N) \qquad \forall u \in X.
$$
If the prior $\mu_0 \in P(X)$ and ${\| \cdot \|_X \in L^{2q}(X, \mu_0)}$ then 
the measures $\mu^{y}$ and $\mu^{y}_N$, defined via 
\eqref{bayes-rule}
and \eqref{discrete-bayes-rule} respectively, are well-defined and 
there exists a constant $D >0$ independent of $N$ so that 
$d_H( \mu_N^{y} , \mu^{y} ) \le D \psi(N).$
\end{theorem}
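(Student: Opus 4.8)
The plan is to reduce the statement to Theorem~\ref{well-posedness} together with \cite[Thm.~4.3]{hosseini-sparse}, so that the work consists of verifying hypotheses and tracking the dependence of constants on $N$. First I would observe that $\mcl G$ and every $\mcl G_N$ are locally Lipschitz continuous and satisfy the polynomial bound \eqref{polynomially-bounded-forward-map} with the \emph{same} constants $C$ and $q$, and that, since $\mu_0$ is a probability measure, $\|\cdot\|_X\in L^{2q}(X,\mu_0)$ implies $\|\cdot\|_X\in L^{q}(X,\mu_0)$. Applying Theorem~\ref{well-posedness} to $\mcl G$ and separately to each $\mcl G_N$ then shows that $\mu^{y}$ in \eqref{bayes-rule} and the measures $\mu_N^{y}$ in \eqref{discrete-bayes-rule} are well-defined Radon probability measures, all absolutely continuous with respect to $\mu_0$, whose normalizing constants satisfy $Z(y),Z_N(y)\in(0,1]$ (the upper bound because $\Phi,\Phi_N\ge0$).

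Next I would estimate the potentials pointwise. Writing the difference of the two squared $\pmb{\Sigma}$-norms as an inner product and using Cauchy--Schwarz together with the equivalence of $\|\cdot\|_{\pmb{\Sigma}}$ and $\|\cdot\|_2$ on $\reals^M$, one obtains
\begin{equation*}
|\Phi_N(u;y)-\Phi(u;y)|\;\le\;c_{\pmb{\Sigma}}\,\|\mcl G_N(u)-\mcl G(u)\|_2\;\|\mcl G_N(u)+\mcl G(u)-2y\|_2 .
\end{equation*}
The first factor is at most $C'\psi(N)\|u\|_X^{q}$ by hypothesis, and the second is at most $C_1\max\{1,\|u\|_X^{q}\}$ by the polynomial bound applied to both $\mcl G$ and $\mcl G_N$, with $C_1$ depending only on $C$ and $\|y\|_2$; hence
\begin{equation*}
|\Phi_N(u;y)-\Phi(u;y)|\;\le\;C_2\,\psi(N)\,\max\{1,\|u\|_X^{2q}\},
\end{equation*}
with $C_2$ independent of $N$ because $C$ and $q$ are. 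Integrating against $\mu_0$ and using $|e^{-s}-e^{-t}|\le|s-t|$ for $s,t\ge0$ gives $|Z_N(y)-Z(y)|\le\int_X|\Phi_N-\Phi|\dd\mu_0\le C_3\,\psi(N)$, the integral being finite because $\|\cdot\|_X\in L^{2q}(X,\mu_0)$.

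For the Hellinger distance I would take $\Theta=\mu_0$ in Definition~\ref{def-well-posedness}, write the difference of the square roots of the normalized densities as
\begin{equation*}
\sqrt{\tfrac{\dd\mu^{y}}{\dd\mu_0}}-\sqrt{\tfrac{\dd\mu_N^{y}}{\dd\mu_0}}
=\frac{e^{-\Phi/2}-e^{-\Phi_N/2}}{\sqrt{Z(y)}}+e^{-\Phi_N/2}\Bigl(\tfrac{1}{\sqrt{Z(y)}}-\tfrac{1}{\sqrt{Z_N(y)}}\Bigr),
\end{equation*}
and then apply $(\alpha+\gamma)^2\le2\alpha^2+2\gamma^2$, the Lipschitz bound $|e^{-s/2}-e^{-t/2}|\le\tfrac12|s-t|$ on $[0,\infty)$, the identity $\int_X e^{-\Phi_N}\dd\mu_0=Z_N(y)$, and $(\sqrt{Z(y)}+\sqrt{Z_N(y)})^2\ge Z(y)$, which yields
\begin{equation*}
d_H(\mu_N^{y},\mu^{y})^2\;\le\;\frac{1}{4Z(y)}\int_X|\Phi_N-\Phi|^2\dd\mu_0+\frac{|Z_N(y)-Z(y)|^2}{Z(y)^2}.
\end{equation*}
Both terms on the right are bounded by a constant times $\psi(N)^2$ — the second directly from the estimate of the previous paragraph, the first because the pointwise bound on $|\Phi_N-\Phi|$ together with the moment hypothesis on $\mu_0$ makes the integral finite and of size $\psi(N)^2$ — so one reads off $d_H(\mu_N^{y},\mu^{y})\le D\psi(N)$ with $D$ depending only on $\sup_N\psi(N)$, $C$, $q$, $Z(y)$ and the relevant moment of $\|\cdot\|_X$ under $\mu_0$.

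The elementary inequalities above are not where the difficulty lies; the point that requires care, and that constitutes the content of \cite[Thm.~4.3]{hosseini-sparse}, is the $N$-uniformity of the constants $C_2,C_3,D$ — which is exactly where the uniformity of $C,q$ in \eqref{polynomially-bounded-forward-map} and the boundedness of $\psi$ are used — together with verifying that the moment hypothesis $\|\cdot\|_X\in L^{2q}(X,\mu_0)$ suffices to render each of the error integrals above of the required order in $\psi(N)$. We therefore only transcribe that conclusion to the discretization $\mu_N^{y}$ defined by \eqref{discrete-bayes-rule}.
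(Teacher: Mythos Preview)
Your overall strategy matches the paper exactly: the paper offers no detailed proof but simply records the result as a direct consequence of Theorem~\ref{well-posedness} and \cite[Thm.~4.3]{hosseini-sparse}, and you begin and end in the same place. The sketch you insert in between, however, has a genuine gap in the moment accounting.

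The issue is the first term of your Hellinger bound,
\[
\frac{1}{4Z(y)}\int_X |\Phi_N-\Phi|^2\,\dd\mu_0.
\]
Your pointwise estimate $|\Phi_N(u;y)-\Phi(u;y)|\le C_2\,\psi(N)\max\{1,\|u\|_X^{2q}\}$ is correct, but squaring it produces $\max\{1,\|u\|_X^{4q}\}$, so finiteness of the integral requires $\|\cdot\|_X\in L^{4q}(X,\mu_0)$, not the $L^{2q}$ hypothesis actually assumed. With only $L^{2q}$ moments your route yields at best $d_H(\mu_N^y,\mu^y)\le D\sqrt{\psi(N)}$ (via $(e^{-s/2}-e^{-t/2})^2\le |e^{-s}-e^{-t}|\le |s-t|$), which is weaker than the claimed linear rate.

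The fix is not to pass through $|\Phi_N-\Phi|$ at all when handling this term, but to bound $|e^{-\Phi_N/2}-e^{-\Phi/2}|$ directly. The map $h:\reals^M\to\reals$, $h(x)=\exp\bigl(-\tfrac14\|x-y\|_{\pmb\Sigma}^2\bigr)$, is globally Lipschitz because $\|\nabla h(x)\|_2\le \tfrac12\|\pmb\Sigma^{-1/2}\|\,\|x-y\|_{\pmb\Sigma}\,e^{-\|x-y\|_{\pmb\Sigma}^2/4}$ and $t\mapsto te^{-t^2/4}$ is bounded on $[0,\infty)$. Hence
\[
\bigl|e^{-\Phi_N(u)/2}-e^{-\Phi(u)/2}\bigr|=\bigl|h(\mcl G_N(u))-h(\mcl G(u))\bigr|\le L\,\|\mcl G_N(u)-\mcl G(u)\|_2\le LC'\psi(N)\|u\|_X^{q},
\]
and squaring now gives only $\|u\|_X^{2q}$, which is integrable under the stated hypothesis. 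The rest of your argument (the $Z_N$ term and the assembly of the Hellinger bound) then goes through unchanged and delivers $d_H(\mu_N^y,\mu^y)\le D\psi(N)$ with the constants uniform in $N$ for exactly the reasons you identified.
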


Thus the error in approximation of $\mcl{G}$ translates directly into
the Hellinger distance between the true posterior $\mu^{y}$ and 
the approximation $\mu_N^{y}$.
A particularly useful method for approximation of $\mcl{G}$ is discretization by
 Galerkin projections whenever $\mu_0$ has a product structure such as the case of 
the Bessel-K priors.


Suppose that $X$ has an  
orthonormal basis $\{ r_k\}$ and let $X_N := \text{span} \{ r_k\}_{k=1}^N $.
 Let $\Pi_N : X \mapsto X$ be the projection operator
onto $X_N$ and define
$\mcl{G}_N(u) := \mcl{G}( \Pi_Nu)$. This type of approximation 
is used in  our numerical experiments in Subsection~\ref{sec:deconv}.
Now following Theorem~\ref{consistent-discretization} and standard arguments using the fact that 
the forward map $\mcl G$ is locally Lipschitz and polynomially bounded we obtain the following corollary.

\begin{corollary}
Suppose  the likelihood potential $\Phi$ has the form \eqref{quadratic-likelihood},
 the forward map
 $\mcl{G}:X \mapsto \reals^M$ is locally Lipschitz continuous and  
\begin{equation*}\label{polynomially-bounded-forward-map-1}
\exists C>0 \text{ and } q \in\integers \qquad \text{such that} \qquad \| \mcl{G}(u) \|_{\pmb{\Sigma}} \le C \max\{1, \|u\|_X^q\} 
\qquad \forall u \in X. 
\end{equation*}
Suppose $\mu_0$ is a $BK(p, \mcl{R})$ prior with $p >0$ and a Hilbert-Schmidt operator $\mcl{R}:X \mapsto X$ with eigenpairs $\{ \gamma_k, r_k\}$. Let $\Pi_N: X \mapsto X$ be the projection operator onto the span of $\{r_k\}_{k=1}^N$ for $N \in \integers$ and suppose that $\mcl{G}_N:= \mcl{G} \circ \Pi_N $. 
 Then there exists a constant $D >0$ so that 
$
d_{H}(\mu^y_N, \mu^y) \le D \| {I} - \Pi_N \|
$
where ${I}$ denotes the identity operator on $X$ and the
  difference $\| I - \Pi_N\|$   is measured in the operator norm on $X$.
\end{corollary}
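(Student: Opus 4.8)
The plan is to deduce the corollary directly from Theorem~\ref{consistent-discretization}, so the work reduces to verifying its hypotheses for the Galerkin-projection approximation $\mcl{G}_N = \mcl{G} \circ \Pi_N$ with a $BK(p,\mcl R)$ prior. First I would check the moment condition on the prior: by Theorem~\ref{SG-prior-properties}, if $\mu_0 = BK(p,\mcl{R})$ then $\| \cdot \|_X \in L^q(X, \mu_0)$ for \emph{all} $q \in \integers$, so in particular $\| \cdot \|_X \in L^{2q}(X, \mu_0)$ for whatever $q$ appears in the polynomial growth bound on $\mcl G$; this hypothesis of Theorem~\ref{consistent-discretization} is therefore automatic. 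Next I would check the local Lipschitz continuity and uniform polynomial boundedness of the $\mcl G_N$: since $\Pi_N$ is an orthogonal projection with $\|\Pi_N\| \le 1$, we have $\|\mcl G_N(u)\|_{\pmb\Sigma} = \|\mcl G(\Pi_N u)\|_{\pmb\Sigma} \le C\max\{1, \|\Pi_N u\|_X^q\} \le C\max\{1, \|u\|_X^q\}$, so the same constants $C$ and $q$ work uniformly in $N$; and local Lipschitz continuity of $\mcl G_N$ follows from that of $\mcl G$ composed with the $1$-Lipschitz map $\Pi_N$.

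The only substantive step is constructing the function $\psi(N)$ controlling $\|\mcl G_N(u) - \mcl G(u)\|_2$. Here I would use the local Lipschitz property of $\mcl G$ together with the polynomial bound. On the ball of radius $R = \|u\|_X$ the map $\mcl G$ has a Lipschitz constant $L(R)$, and a standard covering/telescoping argument (for instance, chaining along the segment from $u$ to $\Pi_N u$, or simply invoking that local Lipschitz plus polynomial growth gives a bound of the form $\|\mcl G(u) - \mcl G(v)\| \le C''\max\{1,\|u\|_X, \|v\|_X\}^{q-1}\|u-v\|_X$) yields
\begin{equation*}
\| \mcl{G}_N(u) - \mcl{G}(u)\|_2 = \|\mcl G(\Pi_N u) - \mcl G(u)\|_2 \le C'' \max\{1, \|u\|_X\}^{q-1} \, \|(I - \Pi_N)u\|_X \le C' \|u\|_X^{q} \, \|I - \Pi_N\|,
\end{equation*}
after absorbing constants and using $\|(I-\Pi_N)u\|_X \le \|I - \Pi_N\| \, \|u\|_X$. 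Thus I would set $\psi(N) := \|I - \Pi_N\|$, which is bounded (by $1$) and tends to $0$ as $N \to \infty$ because $\{r_k\}$ is an orthonormal basis, so $\Pi_N \to I$ strongly, and in fact one checks $\|I - \Pi_N\| \to 0$ in the relevant sense for the problem at hand (on the range where $\mcl G$ acts meaningfully — here one uses that $\mcl R$ is Hilbert--Schmidt so the prior is concentrated on a subspace where convergence is uniform; alternatively this is simply how $\psi(N)$ is defined and the corollary's conclusion is stated in terms of $\|I - \Pi_N\|$ directly).

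With all hypotheses of Theorem~\ref{consistent-discretization} verified and $\psi(N) = \|I - \Pi_N\|$, that theorem immediately gives a constant $D > 0$, independent of $N$, with $d_H(\mu^y_N, \mu^y) \le D \psi(N) = D\|I - \Pi_N\|$, which is the claim. The main obstacle I anticipate is the careful passage from ``locally Lipschitz plus polynomially bounded'' to the single global estimate $\|\mcl G(u) - \mcl G(v)\|_2 \le C''\max\{1,\|u\|_X,\|v\|_X\}^{q-1}\|u-v\|_X$; this is the ``standard argument'' alluded to in the statement, and making it rigorous requires a short covering argument for the local Lipschitz constant on balls, but it is routine and the paper explicitly signals it as standard, so I would state it as a lemma or cite the analogous estimate from \cite{hosseini-sparse, stuart-bayesian-lecture-notes} rather than grind through it.
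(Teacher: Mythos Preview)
Your proposal is correct and takes essentially the same approach as the paper: the paper gives no detailed proof, only the single sentence ``following Theorem~\ref{consistent-discretization} and standard arguments using the fact that the forward map $\mcl G$ is locally Lipschitz and polynomially bounded,'' and your write-up is precisely an unpacking of that sentence --- verify the moment hypothesis on $\mu_0$ via Theorem~\ref{SG-prior-properties}, check the uniform polynomial bound and local Lipschitz property for $\mcl G_N = \mcl G \circ \Pi_N$ using $\|\Pi_N\|\le 1$, and set $\psi(N)=\|I-\Pi_N\|$. Your flagged caveats (polynomial growth of the local Lipschitz constant, and the fact that $\|I-\Pi_N\|=1$ in operator norm on an infinite-dimensional $X$) are exactly the soft spots the paper sweeps under ``standard arguments,'' so you have correctly identified where the rigor lives without deviating from the intended route.
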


\section{Numerical experiments}\label{sec:numerical-experiments}

In this section we present numerical experiments that demonstrate the effectiveness of the
RCAR and SARSD 
algorithms of Sections~\ref{sec:MH-algorithms} and \ref{sec:bessel-k} in the
context of inverse problems with Bessel-K priors. 
We begin with simple finite-dimensional examples that demonstrate the ability of 
our algorithms to sample 
the right distribution. We then consider examples  in higher dimensions and
study the performance of our algorithms as a function of dimensions.
Throughout this section we primarily use the lifted RCAR
and SARSD algorithms outlined in Algorithms~\ref{BK-RCAR-algorithm} and \ref{BK-SARSD-algorithm}
and some of their derivatives.
\subsection{Example 5: Density estimation for a linear inverse problem in 2D}
\label{subsec:density-estimation}
We start with the linear inverse problem of Example \ref{example-1} with $N=2$.
We consider this simple setting since  the analytic posterior can be visualized easily.
Take $X = \reals^2$ and
model the data by
$$
y = \mb{G} u + \epsilon \qquad \epsilon \sim \mcl{N}(0, \sigma^2\mb{I}_2)
$$ 
where
$$
\mb{G} = 
\begin{bmatrix}
  1 &  1/2 \\
0 & 1 
\end{bmatrix} \qquad \text{and} \qquad \sigma = 1/2. 
$$
Let  $u_0 = (3/2, 1/2)^T$ and take {$y_0 = \mb{G}u_0$}, i.e., we assume the exact data is measured. Under these assumptions  
$$
{\Phi(u;y_0) = \frac{1}{2 \sigma^2} \| \mb{G}u - y_0 \|_2^2. }
$$
The prior measure is taken to be the product of two $BK(p, 1)$ priors on $\reals$
$$
\frac{\dd \mu_0}{\dd \Lambda}(t) =\left( 
{\frac{{1}}{ \sqrt{\pi} \Gamma(p) 2^{p - 1/2}}} \right)^2 \prod_{j=1}^2  
 \left|{t_j}\right| ^{p - 1/2}  K_{p - 1/2} \left( \left| t_j \right| \right) \qquad t = (t_1, t_2)^T \in \reals^2.
$$
We consider the values of $p =1, 2/3, 1/3$ and sample the posterior measure using lifted RCAR. 
The results of our computations are summarized in Figure~\ref{fig:2D-sampling-example}.
In all cases we took $\beta = 0.3$ which resulted in an acceptance ratio of 
approximately $0.2$ across the different values of $p$ (precise values were $0.1746$, $ 0.1970$
and $0.2234$ for $p = 1, 2/3$ and $1/3$ respectively). We used $8\times 10^5$ samples  to generate the 
2D histograms with a burnin of $10^4$ samples. These numbers are much larger than what is needed to
get an stable estimate of the  mean and variance but we need them to capture 
the details of the posterior densities in the histograms.
Visual comparison of the analytic and numerical posteriors 
serves as evidence that RCAR has  sampled the correct distribution.

We also compared the RCAR and SARSD algorithms for $p=1$. We show the empirical and analytic
posteriors in Figure~\ref{fig:2D-sampling-RCAR-v-SARSD}.
We ran SARSD  with the same parameters values as above. The average acceptance
rate for SARSD was $0.1574$ which is slightly less than $0.1746$ for RCAR
with the same choice  of $\beta = 0.3$. We also show an instance of the
traceplots of both algorithms in Figure~\ref{fig:trace-plot-and-acceptance-2D-example}
demonstrating good mixing of the chains.

\begin{figure}
  \centering
\begin{tabular}{c c c}
 &\includegraphics[width = .32\textwidth]{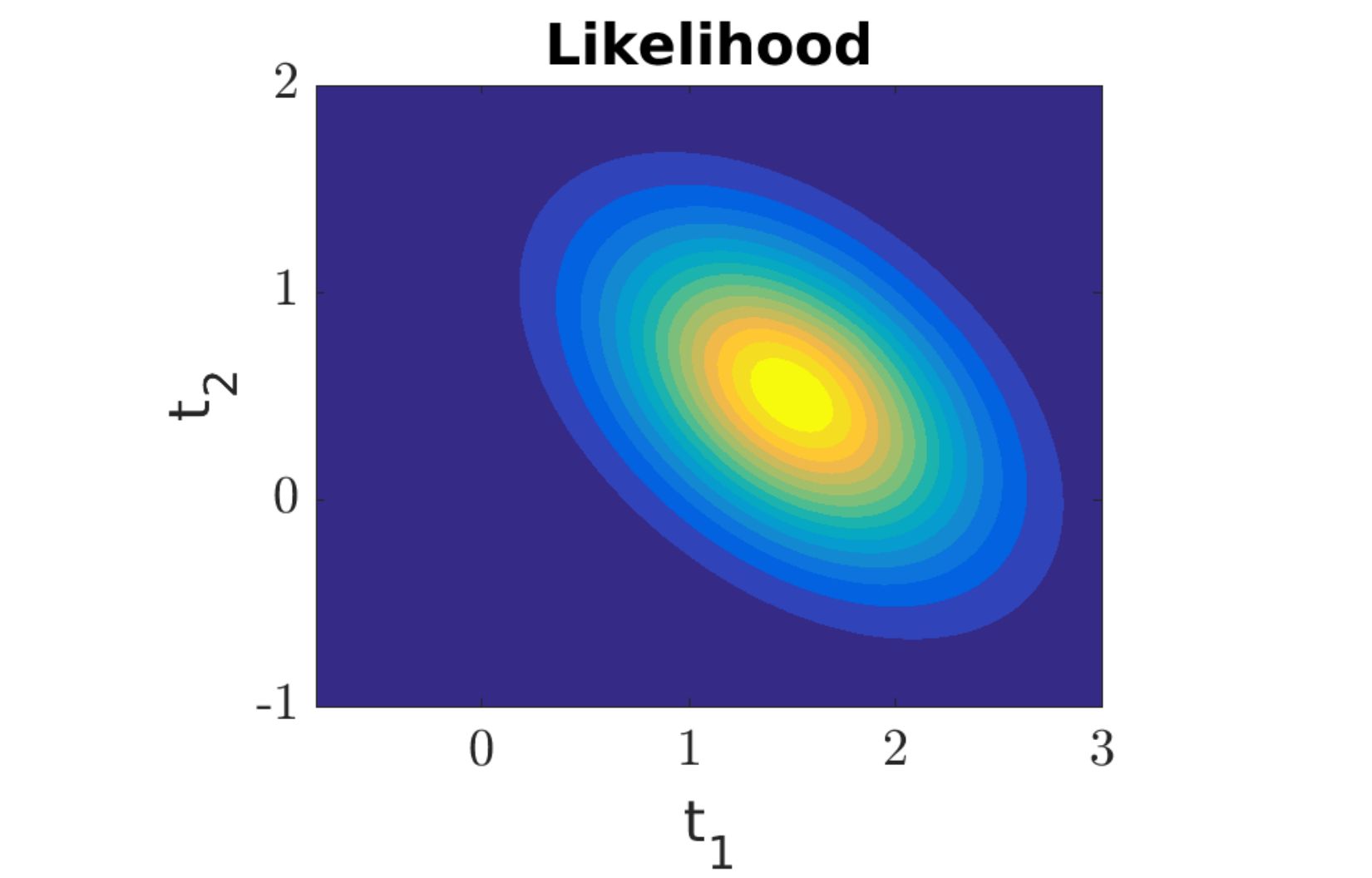}& \\
  \includegraphics[width = .32\textwidth]{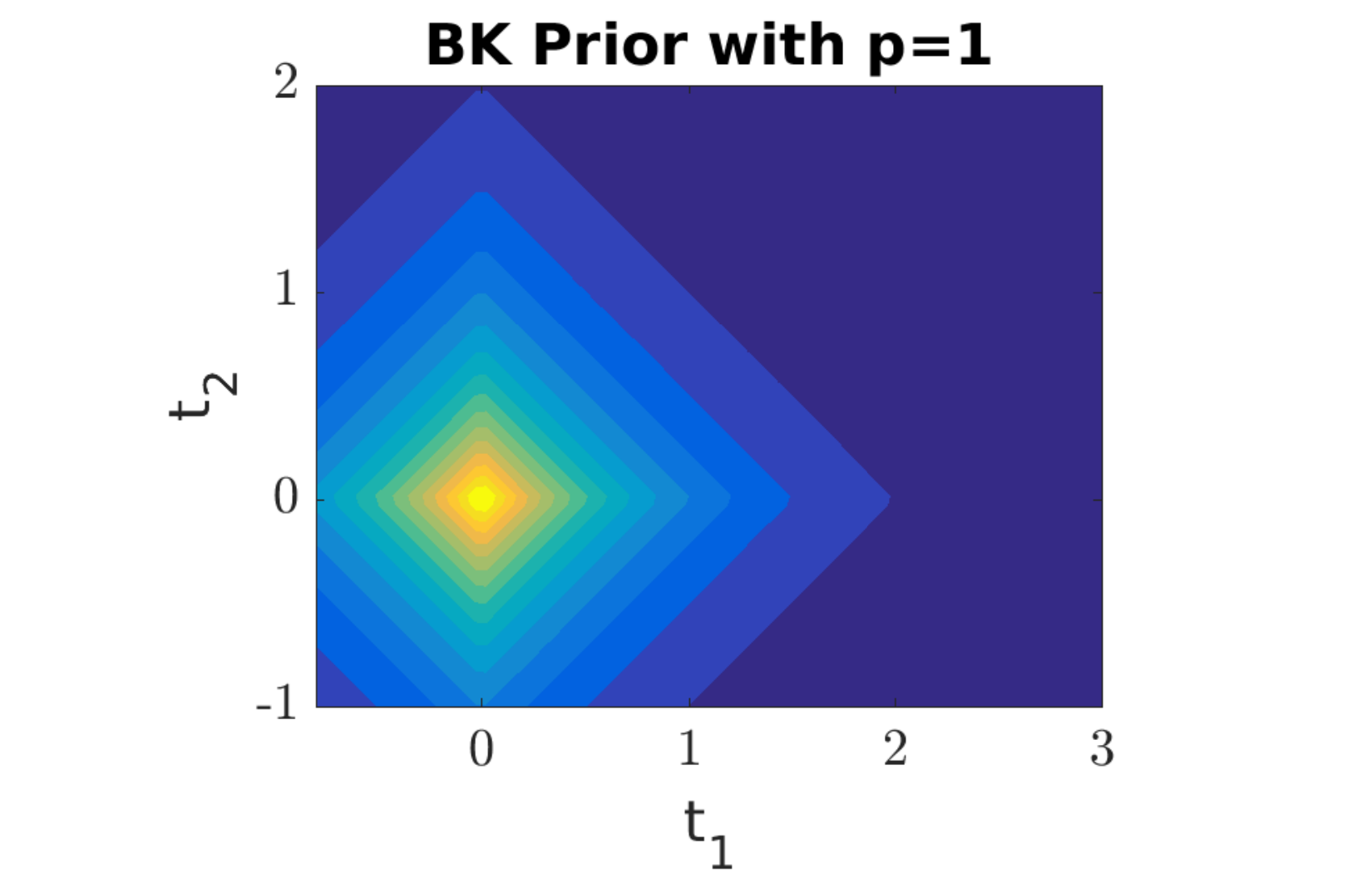} 
  &\includegraphics[width = .32\textwidth]{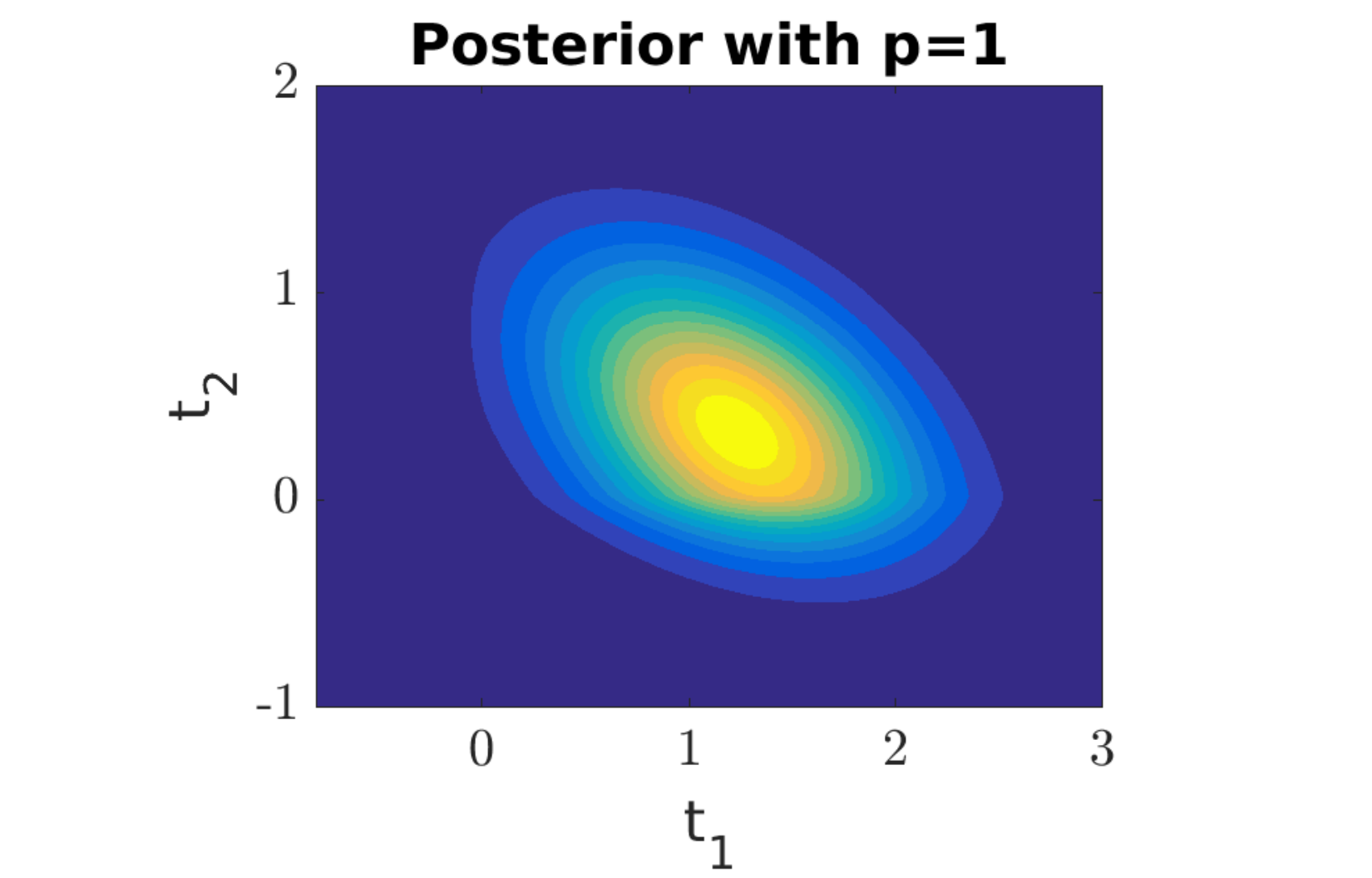} 
  &\includegraphics[width = .32\textwidth]{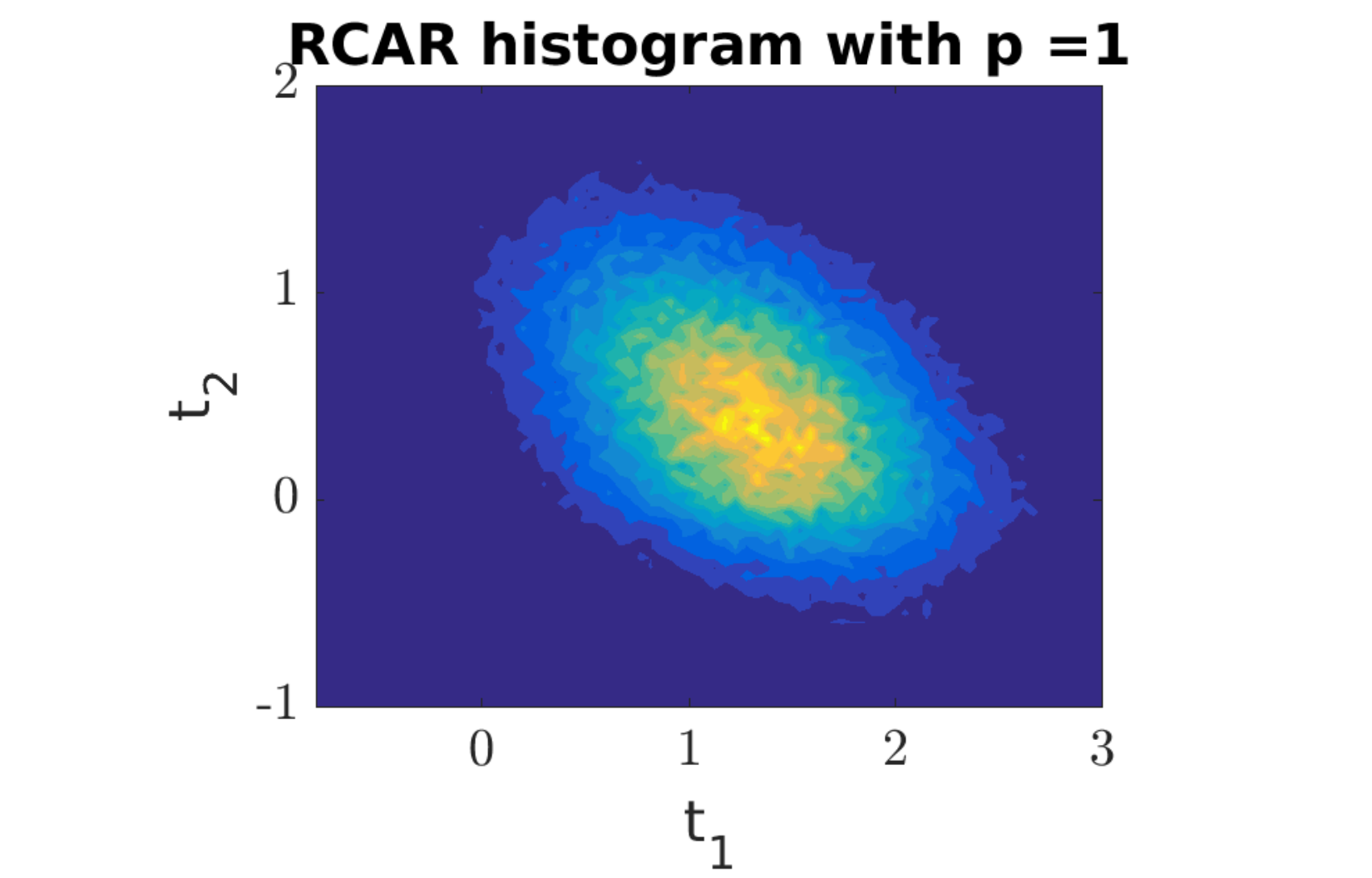} \\ 
  \includegraphics[width = .32\textwidth]{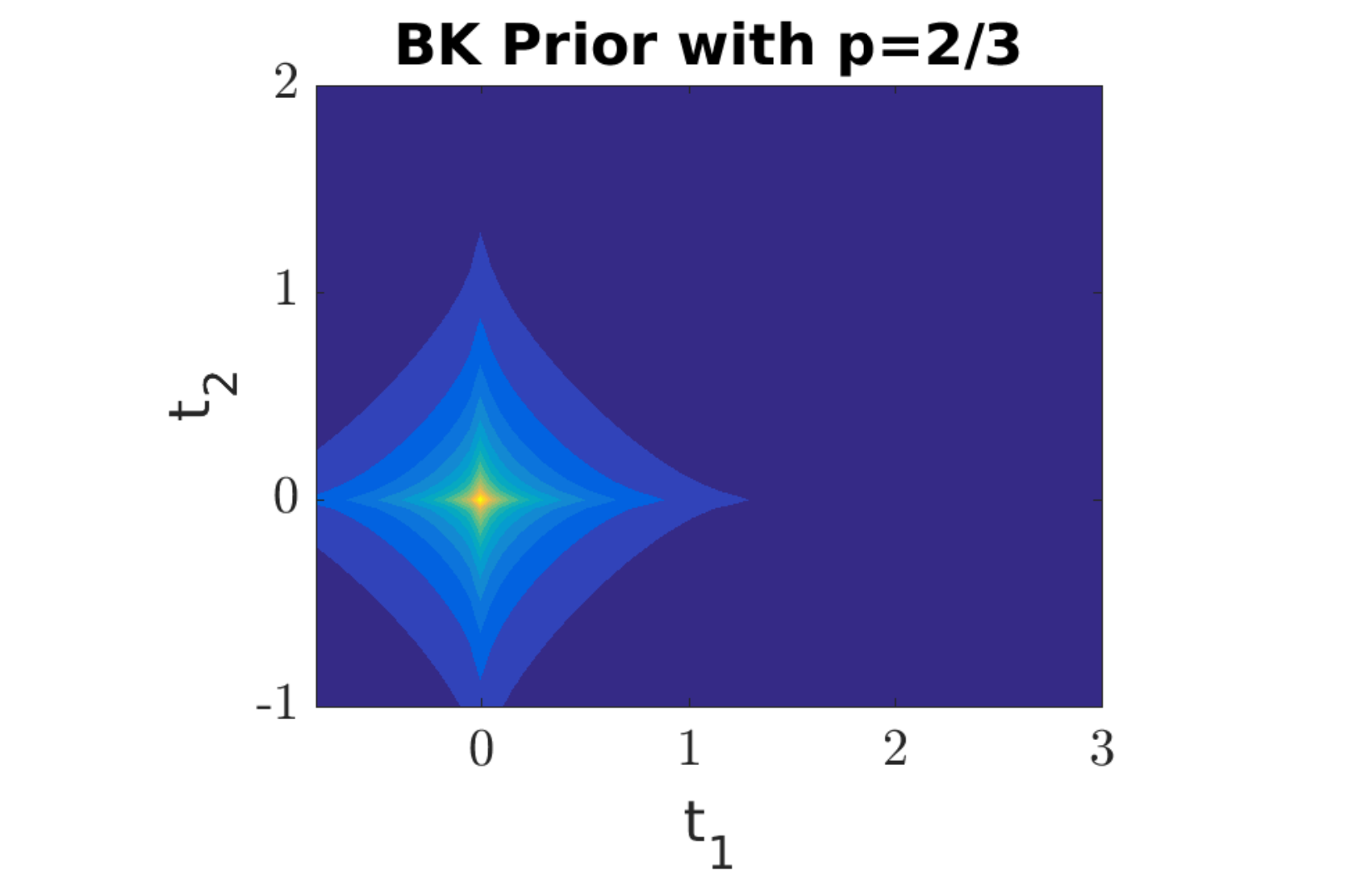} 
  &\includegraphics[width = .32\textwidth]{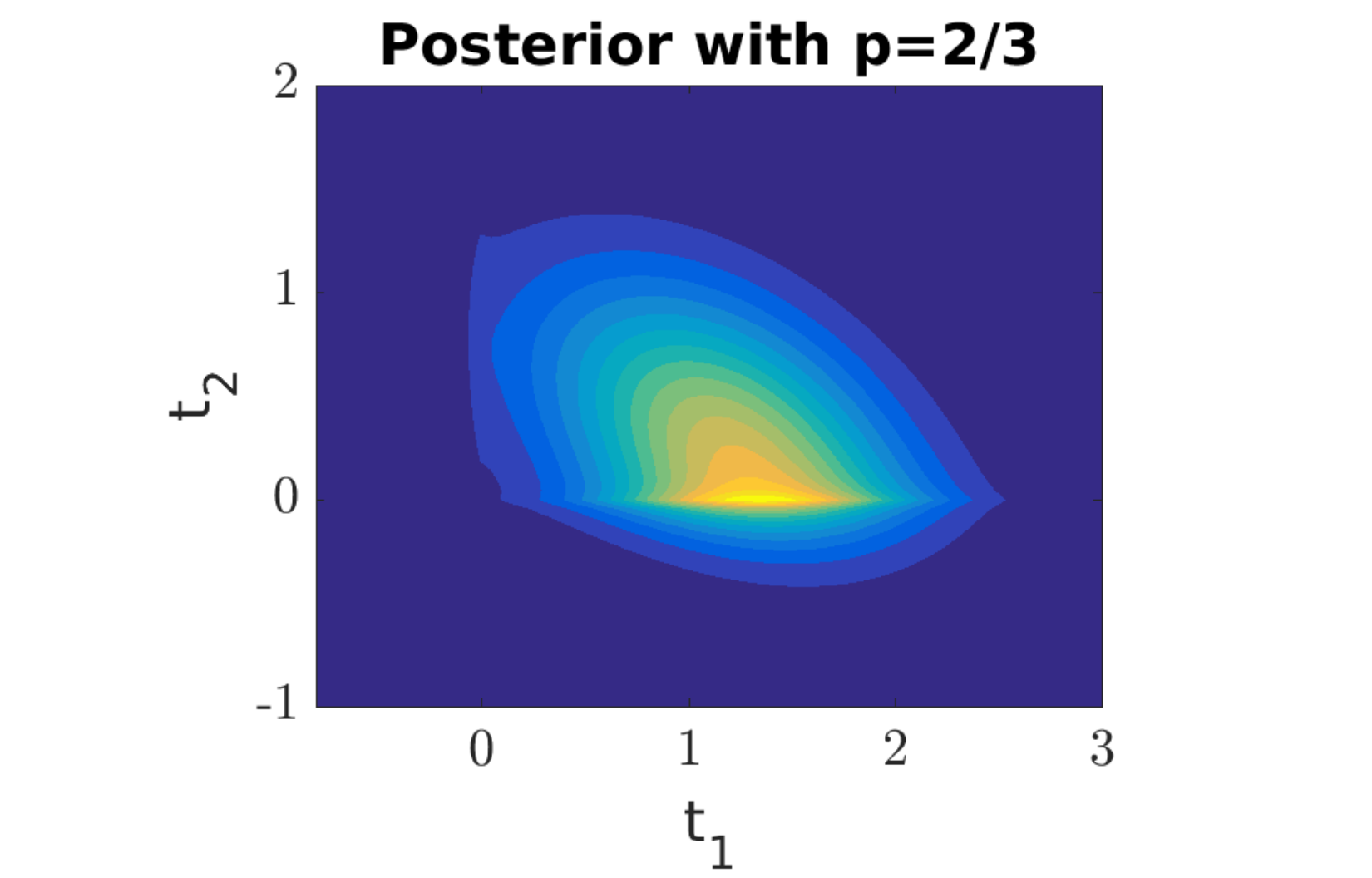} 
  &\includegraphics[width = .32\textwidth]{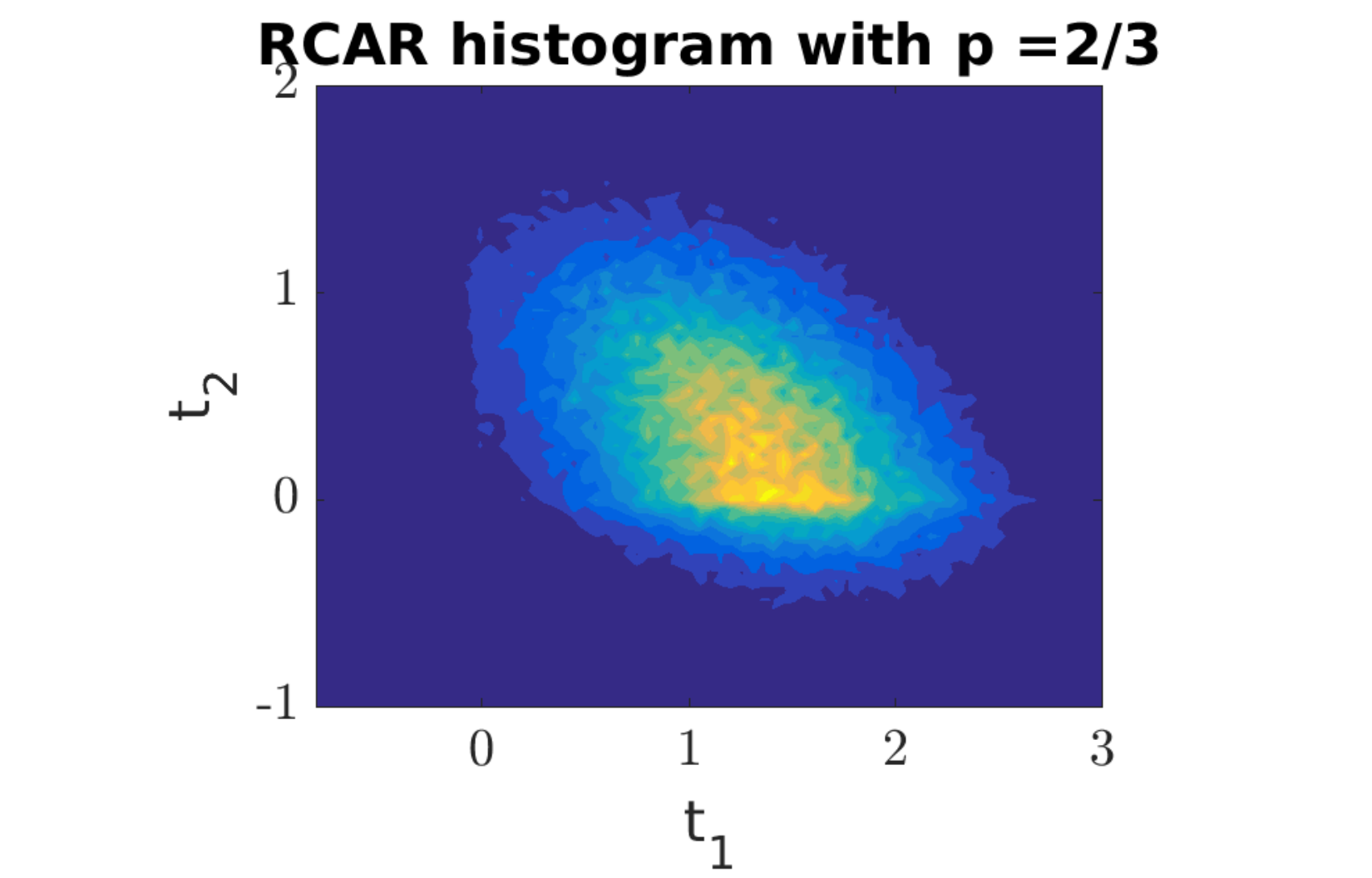} \\
 \includegraphics[width = .32\textwidth]{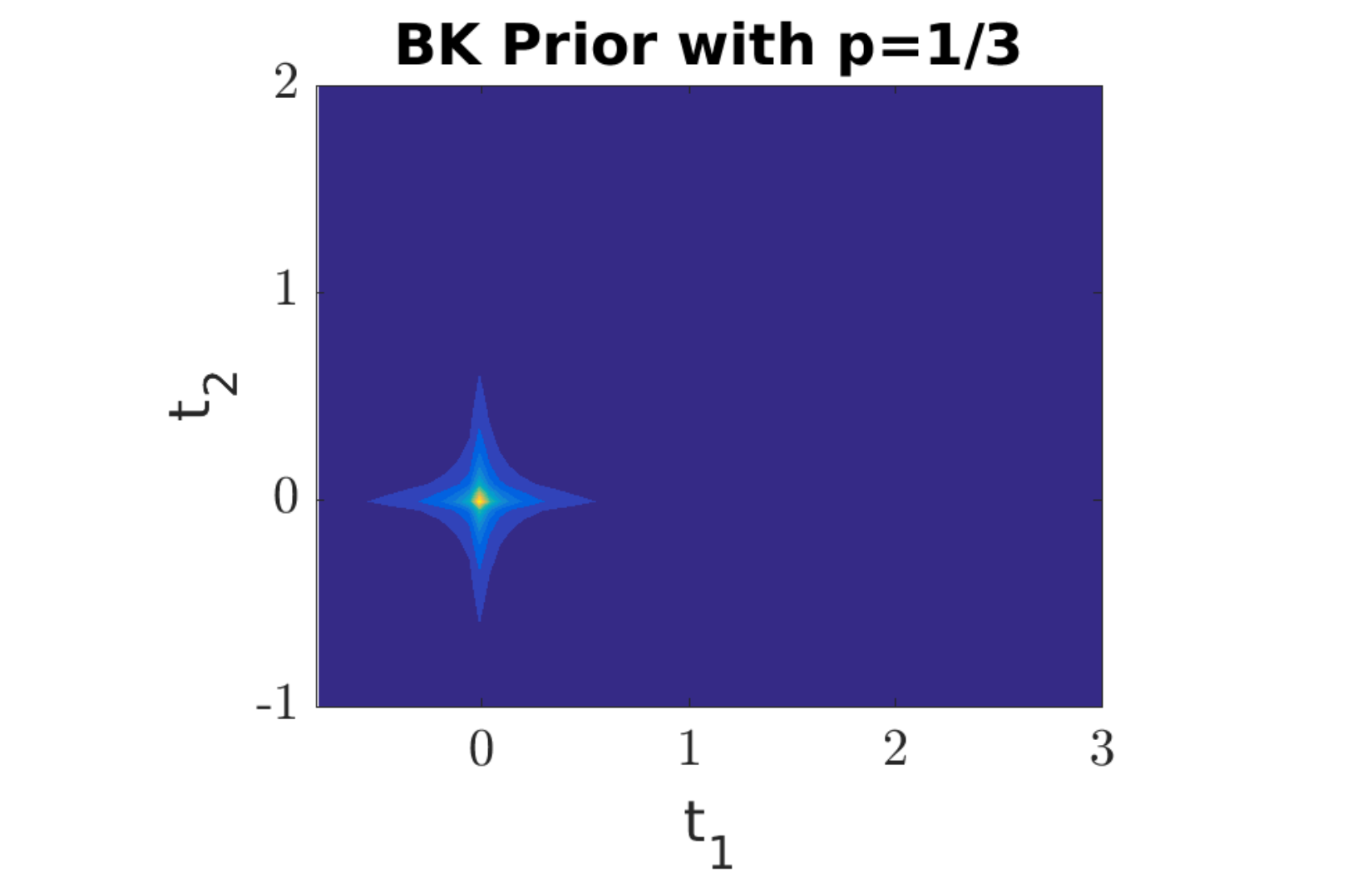} 
  &\includegraphics[width = .32\textwidth]{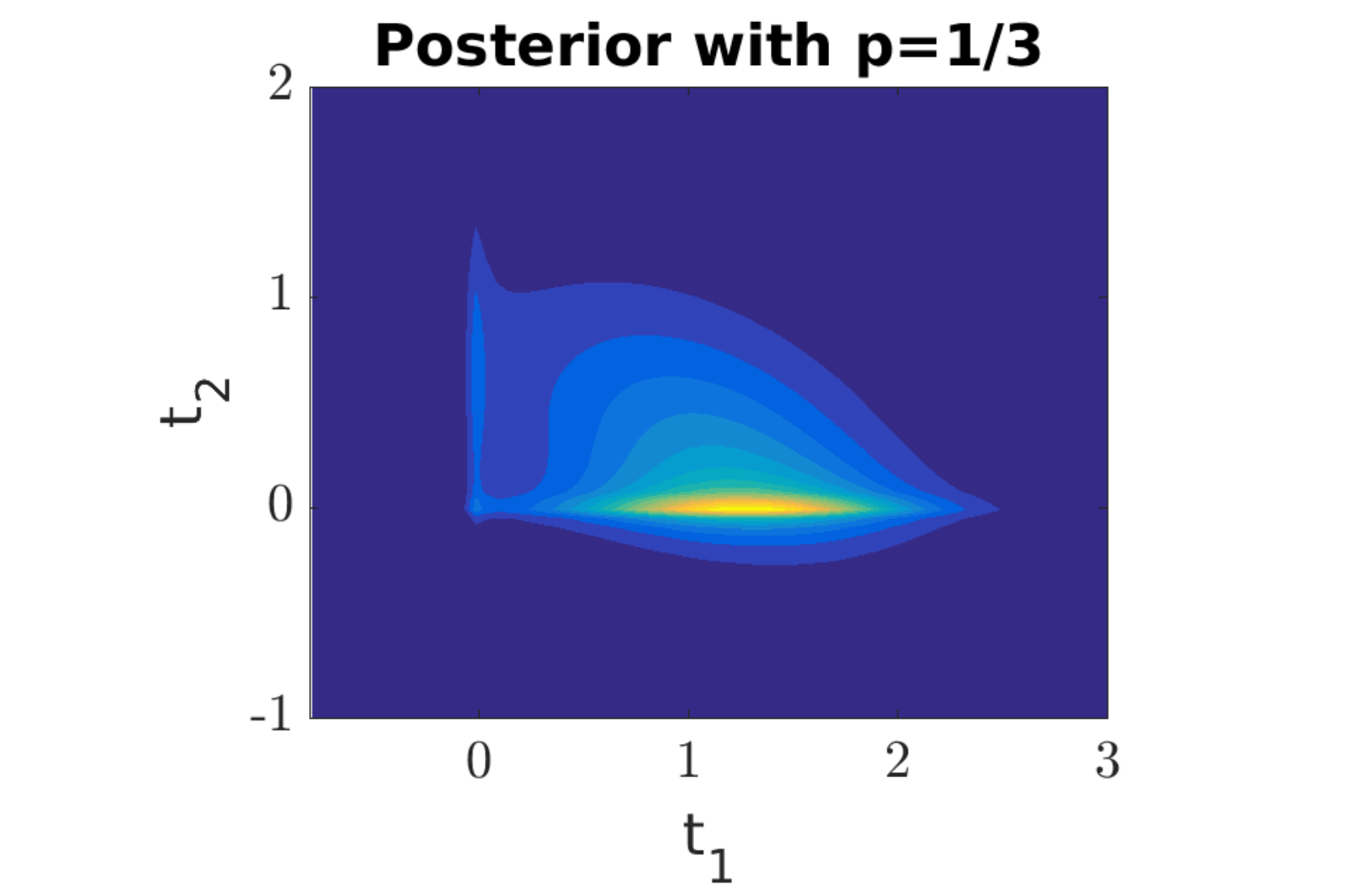} 
  &\includegraphics[width = .32\textwidth]{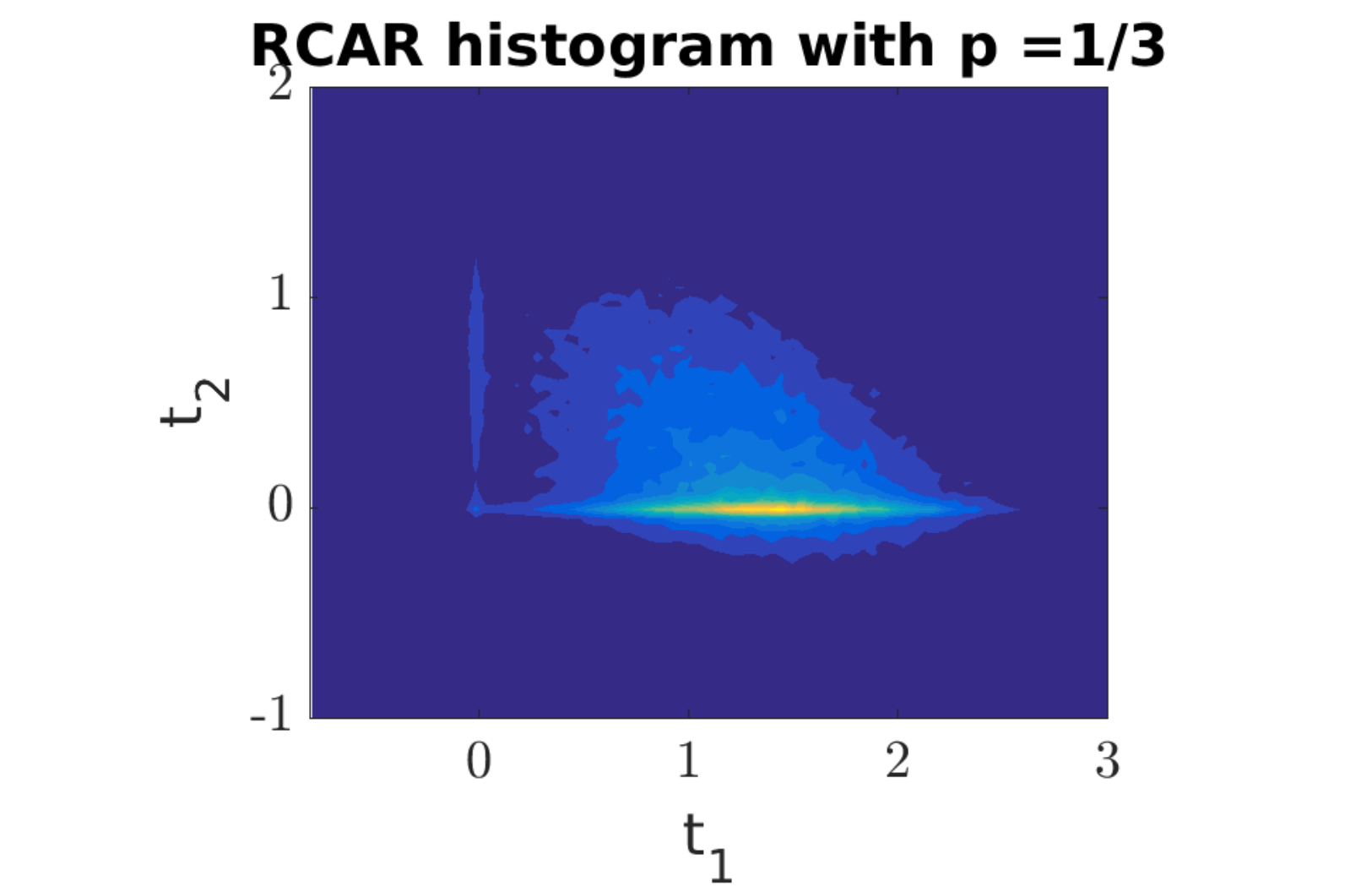} \\
\end{tabular}
\caption{Depiction of analytic prior, likelihood and
  posterior of Example 5 along with the 2D histogram of the RCAR samples for
  different choices of the shape parameter $p= 1, 2/3$ and $1/3$.
  The likelihood is shown  at the very top. The left column shows the prior measures. The analytic posterior densities are shown in the middle column and the empirical posteriors are shown in the right column.}
\label{fig:2D-sampling-example}
\end{figure}

\begin{figure}
  \centering
\begin{tabular}{c c c} 
  \includegraphics[width = .32\textwidth]{./figs/2D-example-analytic-posterior-p1} 
  &\includegraphics[width = .32\textwidth]{./figs/RCAR-2D-example-hist-p1}
    & \includegraphics[width = .32\textwidth]{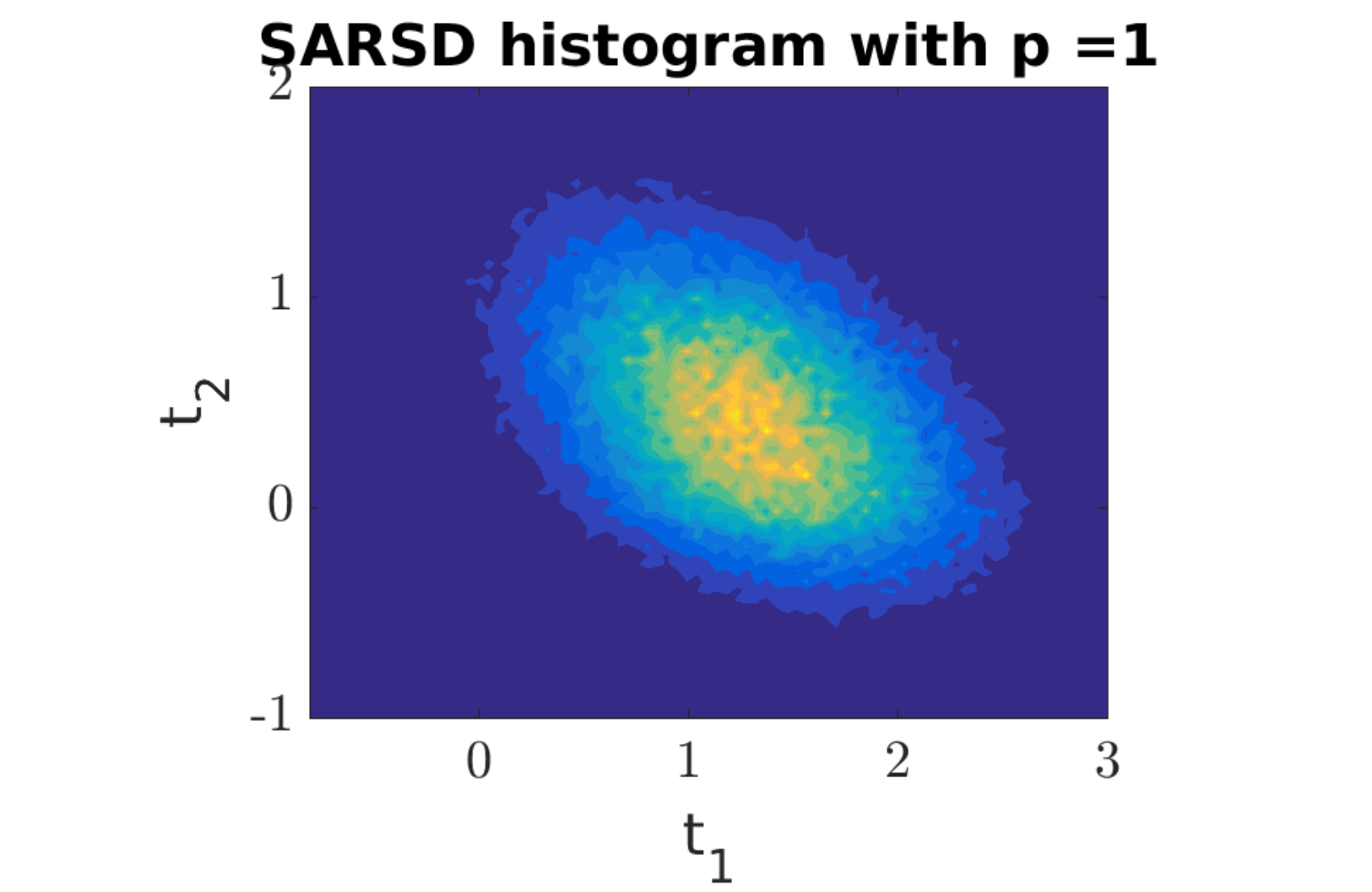}
\end{tabular}
\caption{Comparison between the analytic posterior and the MCMC histogram of RCAR and
  SARSD algorithms for posterior measures of Example 5 with shape parameter  $p=1$.}
\label{fig:2D-sampling-RCAR-v-SARSD}
\end{figure}

\begin{figure}[htp]
  \centering
  \includegraphics[width = .48\textwidth]{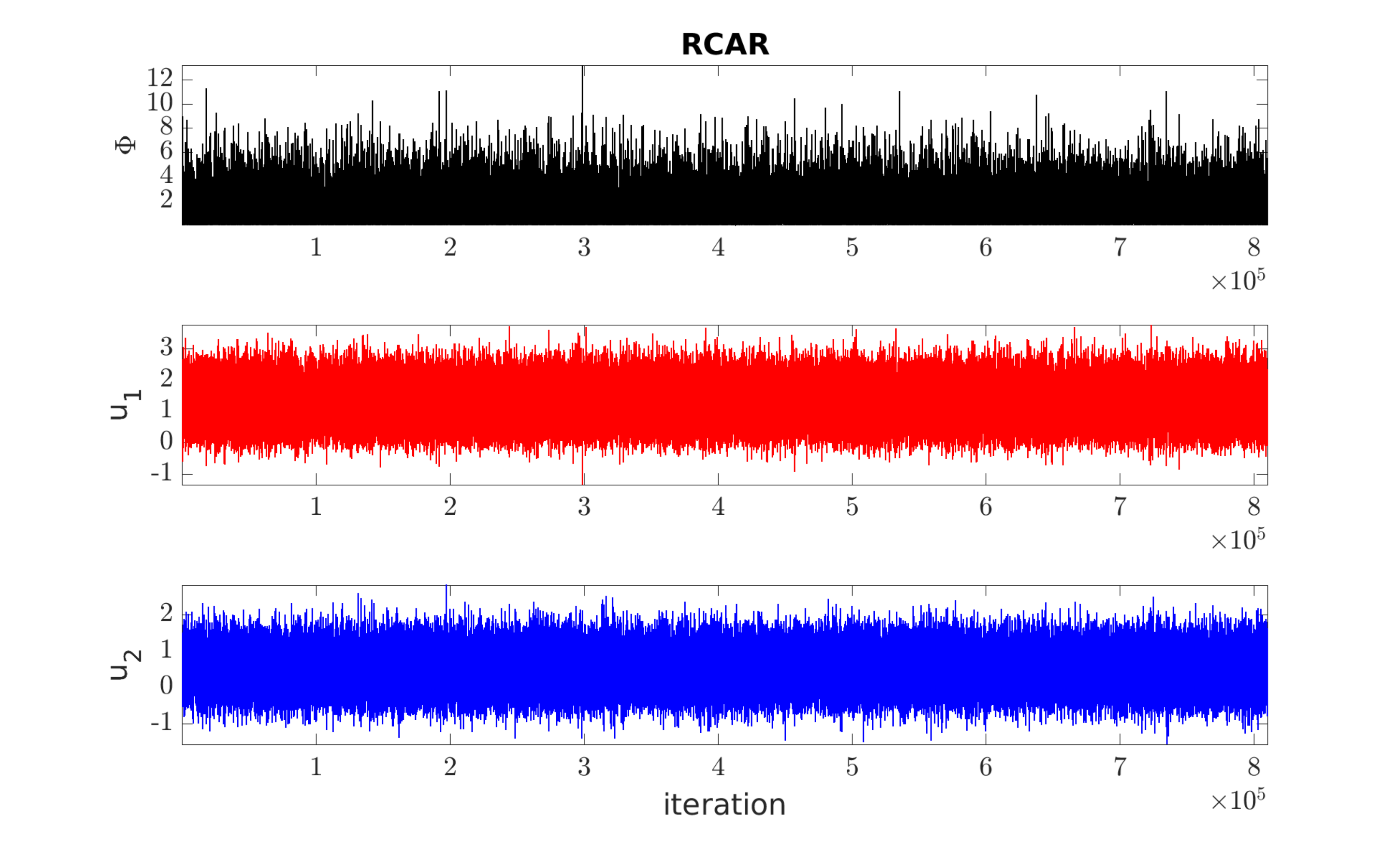}
  \includegraphics[width = .48\textwidth]{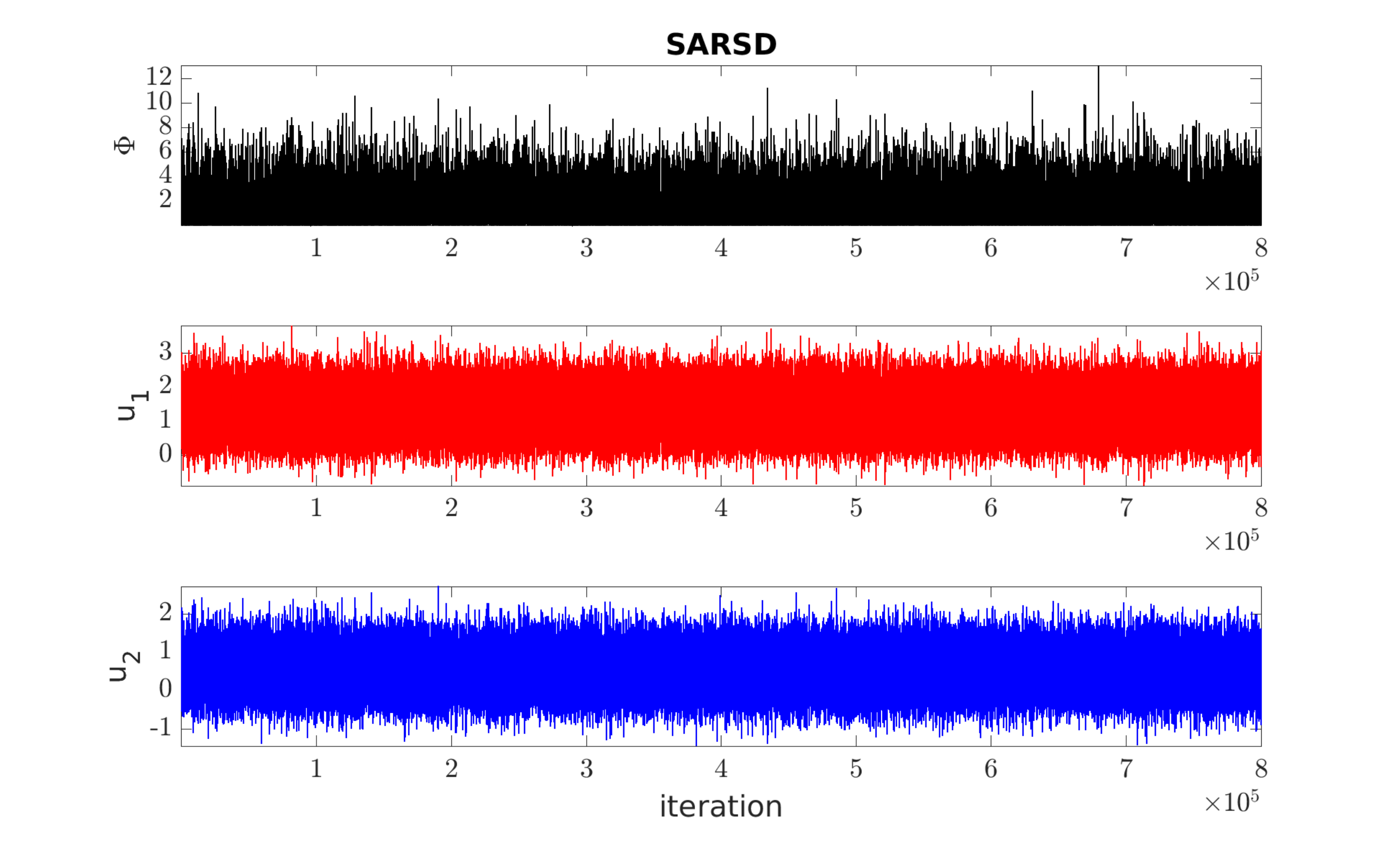}
  \caption{An instance of trace plots of RCAR and SARSD algorithms in Example 4
    with shape parameter $p=1$.}
  \label{fig:trace-plot-and-acceptance-2D-example}
\end{figure}
 
\subsection{Example 6: Denoising in finite dimensions with a gamma prior}
\label{sec:example-2-finite-dim-denoising}
We now turn our attention to an inverse problem with a larger parameter space. 
 Consider the column vector 

 \begin{equation}\label{denoising-truth}
u_0 = (0,0, 1, 0, 0, 1, 0,0, 1, 0,0 , \dots)^T \in \reals^N. 
\end{equation}
That is, every third element is one and the rest of the entries are zero. Now suppose that we observe a noisy version of this sparse vector 
$$
y = u_0 + \epsilon, \qquad \epsilon \sim \mcl{N}(0, \sigma^2 \mb{I}_N),
$$
and we wish to recover $u_0$ from a realization of $y$. We refer to this inverse problem
 as the denoising problem. To solve this inverse problem we employ a 
gamma prior  
$$
\frac{\dd \mu_0}{\dd \Lambda}(t) = \frac{1}{\Gamma(p)^N} \prod_{j=1}^N 
t_j^{p-1} \exp( -t_j) \mb{1}_{(0,\infty)}(t_j) \qquad t = (t_1, t_2, \dots, t_N)^T \in \reals^N.
$$
For the experiments
in this section we took $\sigma=1/4$ and  
$N=10, 20, 40$. Note that as  $N$ changes the size of the data $y$ changes 
as well and so for larger $N$ we are dealing with a larger parameter space and 
more data.  Also, our prior assumption is that the components of $u$ are independent of each 
other and have the same variance. Thus, we expect our algorithms to degrade as $N$ becomes larger.
To sample the posterior we  modified Algorithms~\ref{BK-RCAR-algorithm} and
\ref{BK-SARSD-algorithm} following
 Remark~\ref{gamma-prior-algorithms}.
Our primary goal here was to compare the performance of the RCAR and SARSD
algorithms as a function of the dimension $N$ and step size parameter $\beta$.
We also considered performance of the posterior mean
as a predictor of $u_0$ for both RCAR and SARSD algorithms when $p=1$ and
also for the RCAR algorithm  when $p =1, 2/3$ and $1/3$.

In Figure~\ref{fig:denoising-ACF} we show the 
autocorrelation functions (ACF) of the  
components of the $\{ u^{(j)} \}$  chain for RCAR and SARSD. 
We used a burnin of $5\times 10^4$
and a sample size of $4 \times 10^4$. 
Table~\ref{tab:denoising-ESS-and-IACF-RCAR-v-SARSD}
summarizes our choices of the step size $\beta$ for these simulations as well as 
average acceptance rates, integrated ACF (IACF), and effective sample sizes (ESS).
We tuned the $\beta$ values to achieve an average acceptance ratio of roughly
$0.25$ in all cases.
The reported values of IACF and ESS correspond to the worst performing
(slowest mixing)
component of the chain in each case.

As expected, performance of both algorithms 
suffered with larger $N$. However, an interesting observation is that
the RCAR  performed more consistently across the
different components of the chain. This can be seen clearly in Figure~\ref{fig:denoising-ACF}
where the ACF of the RCAR chain drops consistently across different components
while SARSD has few components that performed well and others that were more correlated.
This behavior also explains the noticeable difference in the reported $\min$ ESS values
for the two algorithms in Table~\ref{tab:denoising-ESS-and-IACF-RCAR-v-SARSD}.
We also show trace plots of two components of RCAR and SARSD with $N=40$
in Figure~\ref{fig:denoising-traceplots}. Both plots appear to have
converged to an stationary distribution but the SARSD trace appears
 thinner than the RCAR trace which is in line with our observation that
the SARSD ACFs decayed slower than RCAR's.

Next, we studied the dependence of the acceptance ratios as a function of
$N$ and $\beta$ for both algorithms. Our results are
summarized in  Figure~\ref{fig:denoising-ave-acceptance-ratio}.
For each value of $N$ and $\beta$  we used a burnin of $4\times 10^4$ iterations and
a total of $2 \times 10^4$ samples and restarted the chain five times with
random initial conditions. We
then averaged the acceptance ratios across the five simulations and for
the entire Markov chain.
As expected, for larger
$N$ a smaller step size $\beta$  was needed to achieve the same
acceptance ratios for both RCAR and SARSD. A noticeable
difference between the two algorithms was that for fixed $\beta$, SARSD
appeared to consistently have a higher acceptance rate than RCAR (see Figure~\ref{fig:denoising-ave-acceptance-ratio}). 

Finally,  we compare
the posterior mean and median of RCAR and SARSD as pointwise
estimators of  $u_0$ in Figure~\ref{fig:denoising-summary-stats-RCAR-v-SARSD}. We did not
observe a significant difference in the quality of  the
mean and the median as predictors of $u_0$ and both algorithms appear to have
converged in mean and median. In Figure~\ref{fig:denoising-summary-stats-RCAR-p-study}
we show the posterior mean of RCAR against $u_0$ for different values of the
shape parameter $p$. We observe that smaller values of $p$ shrank the mean towards
zero resulting in better approximation of the zero components but
worse approximation of some of the non-zero components of $u_0$. In the next
section we will thoroughly study the effect of the $p$ parameter on
the performance of RCAR.
\begin{figure}[htp]
  \centering
  \includegraphics[height=0.222\textwidth]{./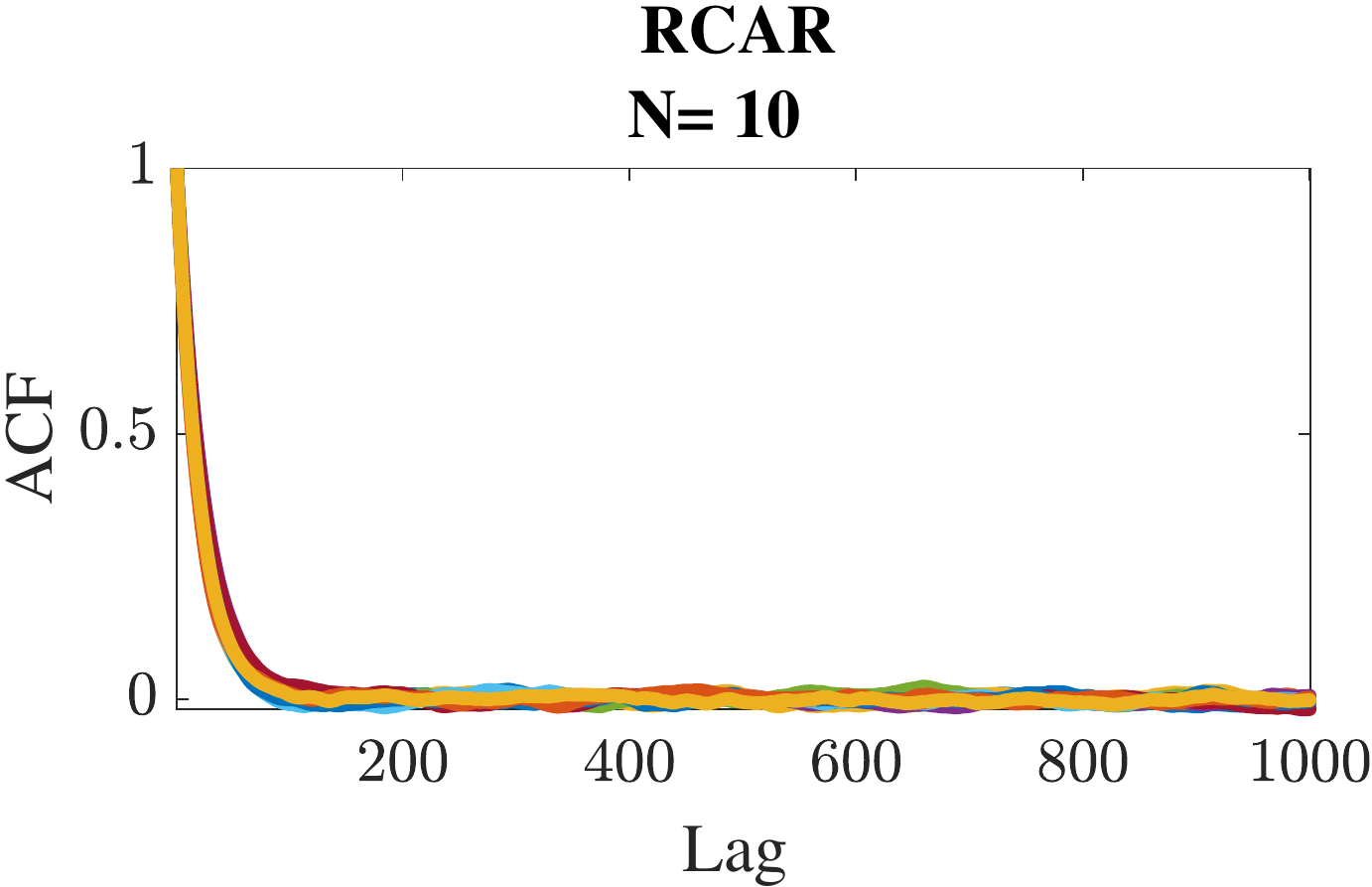}\qquad
   \includegraphics[height=0.222\textwidth]{./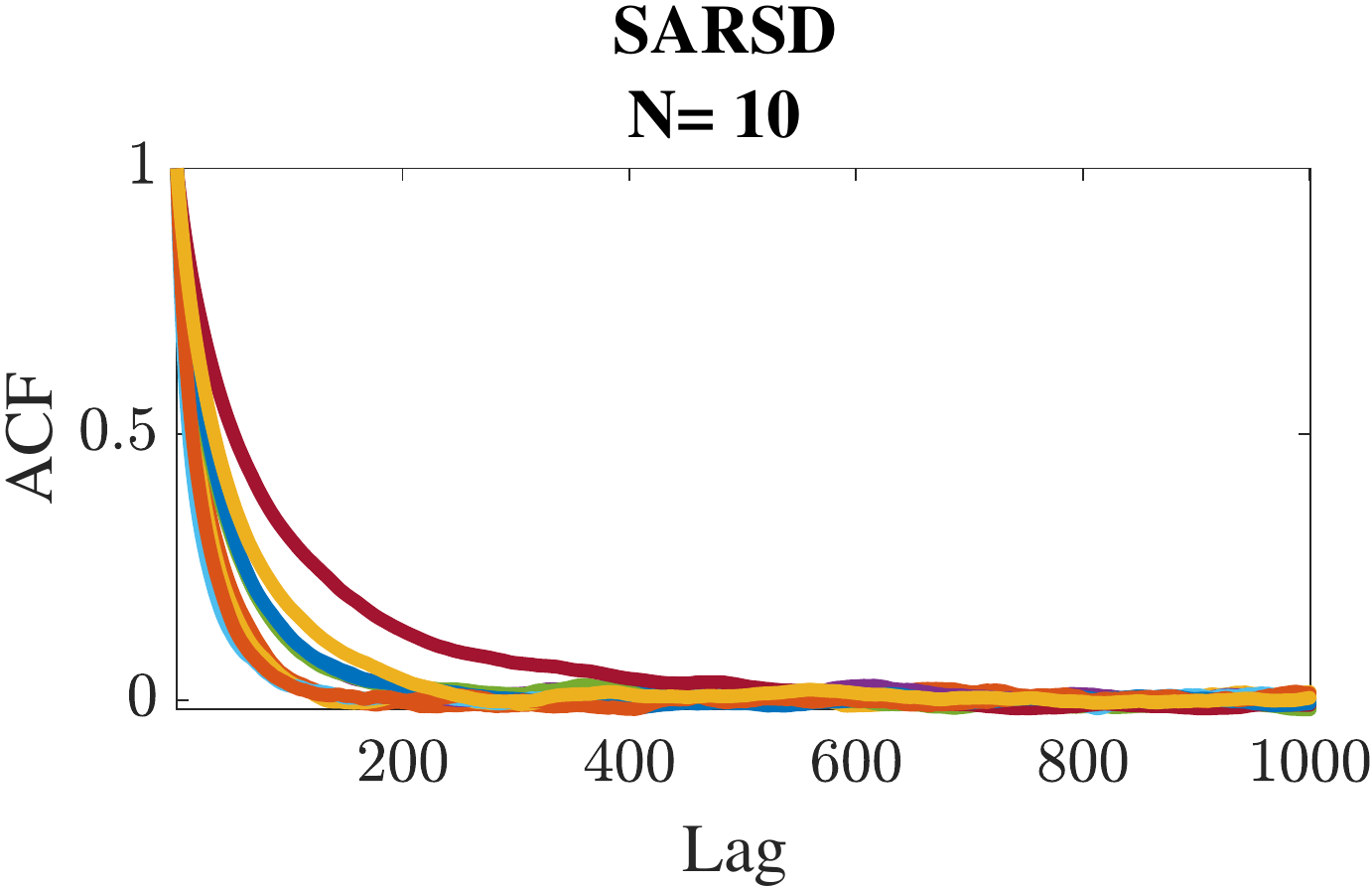}\\
   \includegraphics[height=0.2\textwidth]{./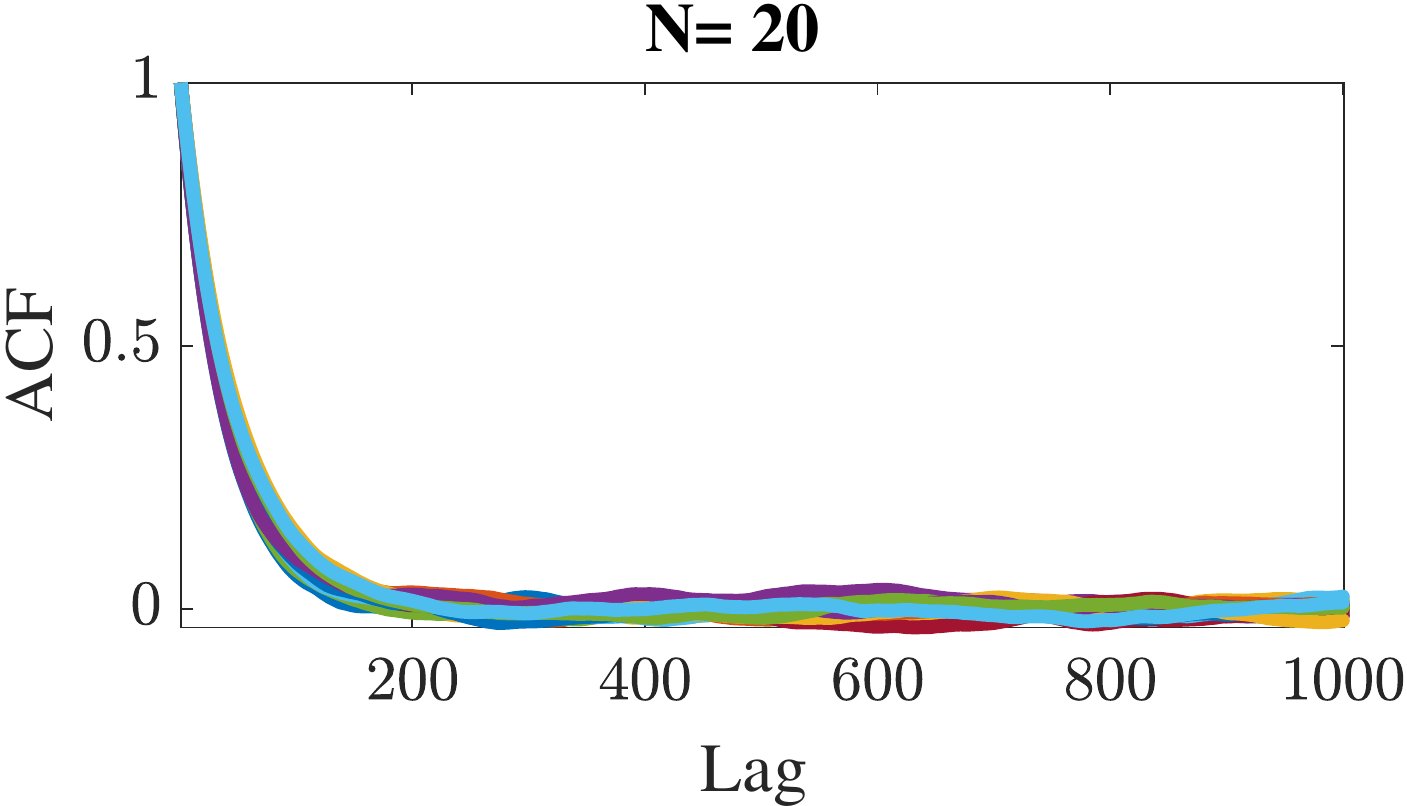}\qquad
    \includegraphics[height=0.2\textwidth]{./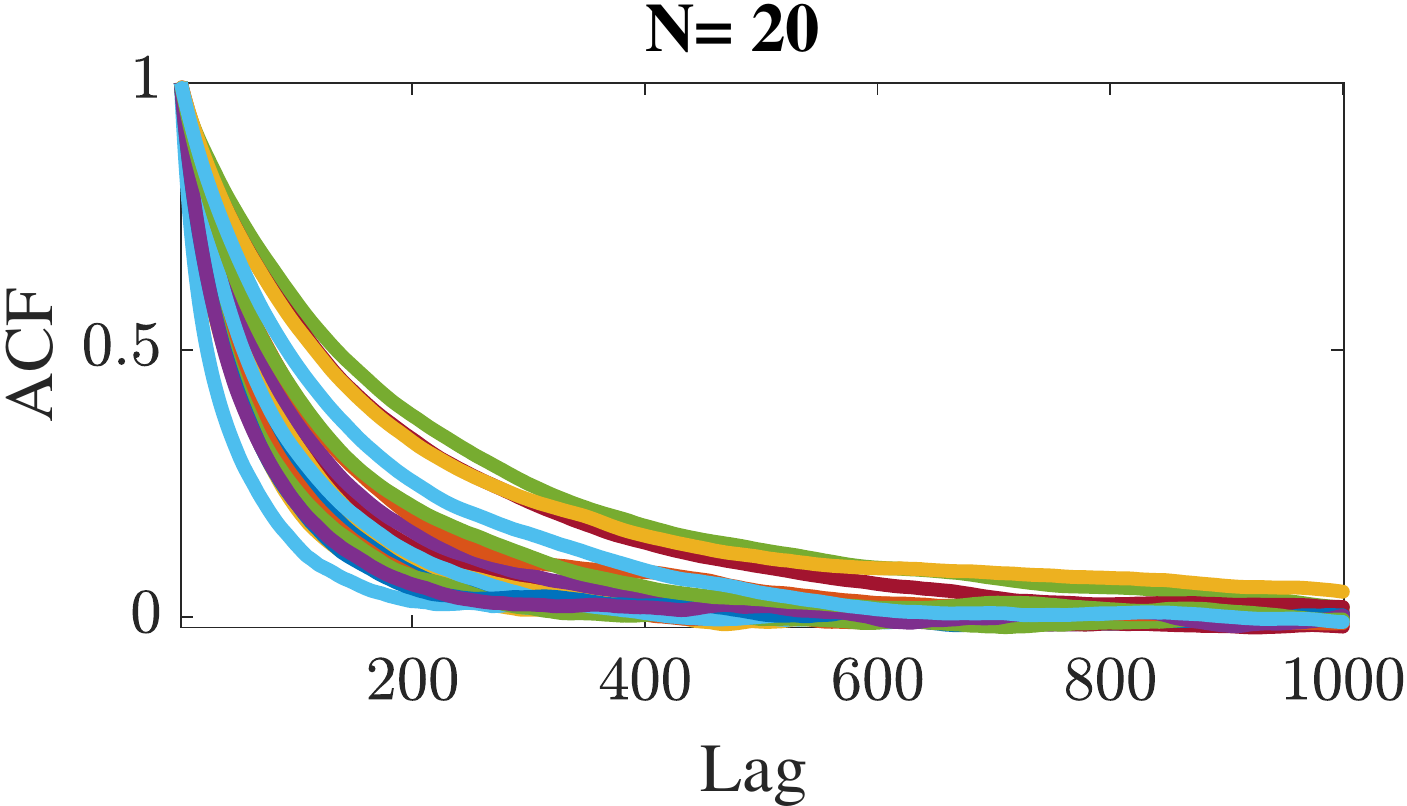}\\
    \includegraphics[height=0.2\textwidth]{./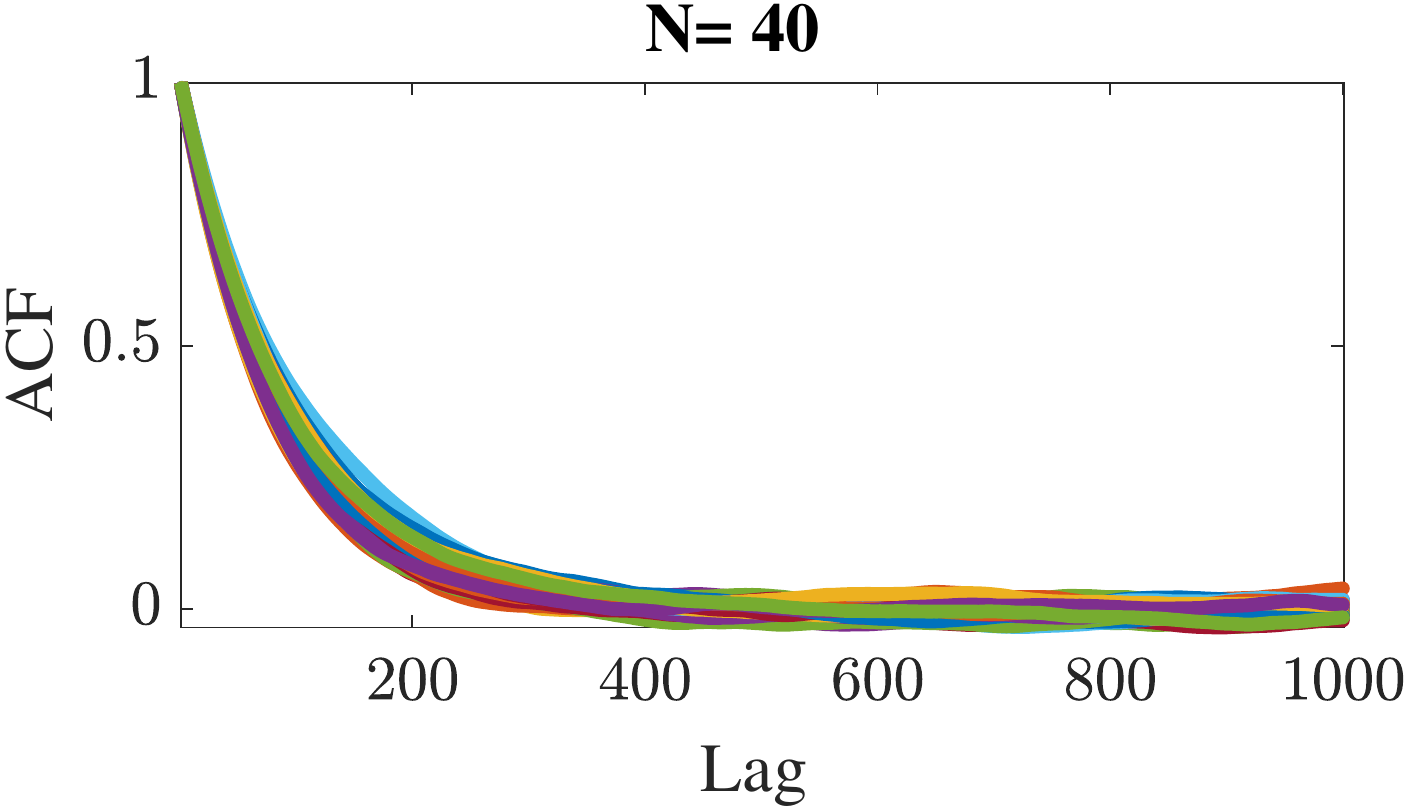}\qquad
     \includegraphics[height=0.2\textwidth]{./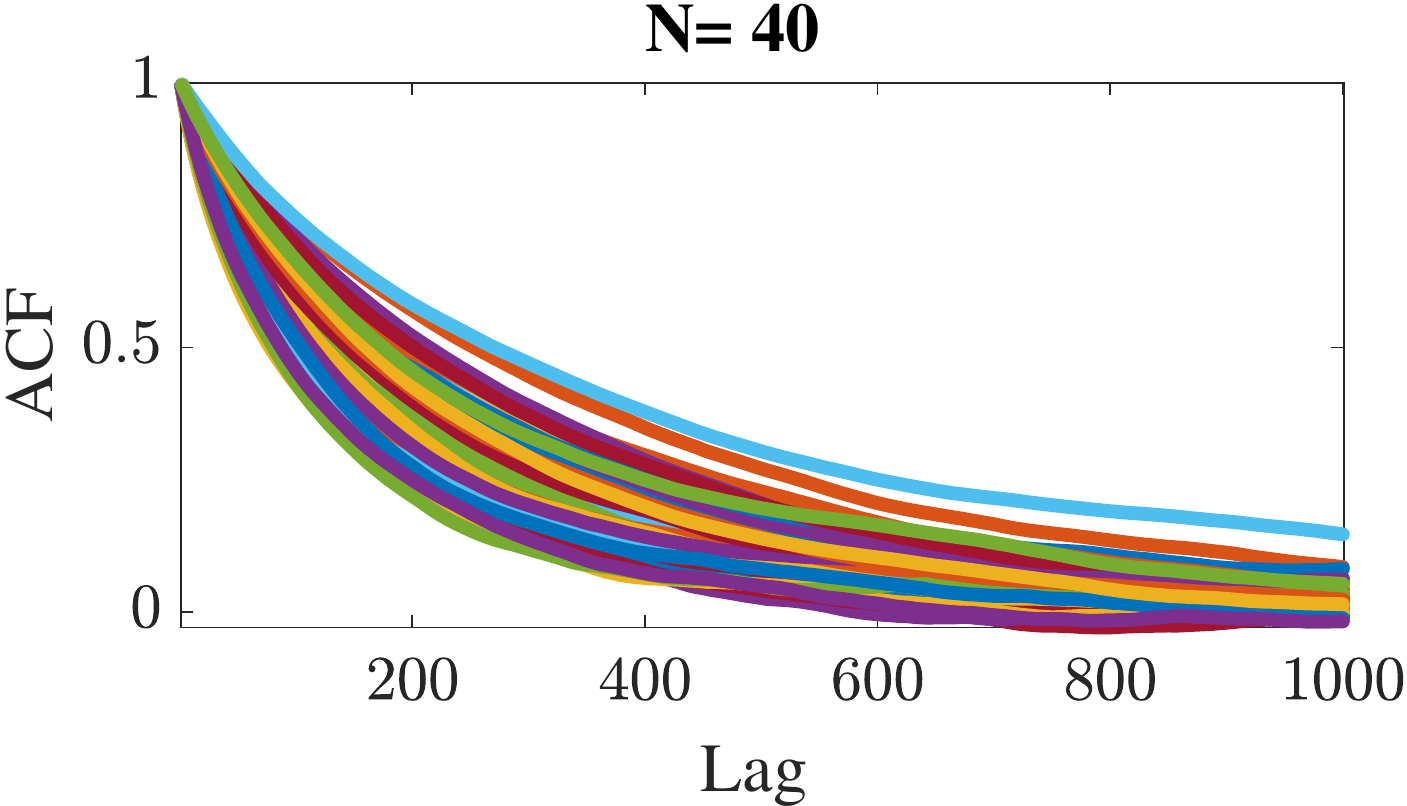} 
     \caption{Approximated autocorrelation functions of separate components of the chain for
       RCAR and SARSD algorithms in different  dimensions $N = 10, 20$ and $40$ in Example 6.
       Values of $\beta$ for each experiment are  given in Table~\ref{tab:denoising-ESS-and-IACF-RCAR-v-SARSD}
       and both algorithms are tuned to achieve acceptance ratio of roughly $0.25$.}
  \label{fig:denoising-ACF}
\end{figure}

\begin{table}[htp]
  \centering
  \begin{tabular}{| c| c | c | c | c | c |  }
     \hline
 & \multicolumn{1}{c|}{$N$} & \multicolumn{1}{c|}{$\beta$} & \multicolumn{1}{c|}{average $a(\cdot,\cdot)$ } & \multicolumn{1}{c|}{$\max$ IACF} & \multicolumn{1}{c|}{ $\min$ ESS (per $10^4$ steps)}\\
 \hline
 \parbox[t]{2mm}{\multirow{3}{*}{\rotatebox[origin=c]{90}{RCAR}}} & 10 & 0.900& 0.25 & 49.24 & 202 \\ 
     &20 & 0.950& 0.25 & 105.15 & 95 \\ 
     &40 & 0.975& 0.23 & 220.78 & 45 \\ 
    \hline
    \parbox[t]{2mm}{\multirow{3}{*}{\rotatebox[origin=c]{90}{SARSD}}} & 10 & 0.800& 0.22 & 185.22 & 53 \\ 
     &20 & 0.900& 0.24 & 447.66 & 22 \\ 
     &40 & 0.950& 0.25 & 771.25 & 13 \\ 
 \hline
   \end{tabular}
   \caption{Summary statistics of the Markov chains of the RCAR and SARSD algorithms in  Example 6
     for different values of $N$. Performance of both algorithms deteriorated for larger $N$ as evident in the ESS values. The $\max$ IACF and $\min$ ESS values were computed over the $N$ components of
   the  Markov chains. }
  \label{tab:denoising-ESS-and-IACF-RCAR-v-SARSD}
\end{table}

\begin{figure}[htp]
  \centering
  \includegraphics[height=.2\textwidth]{./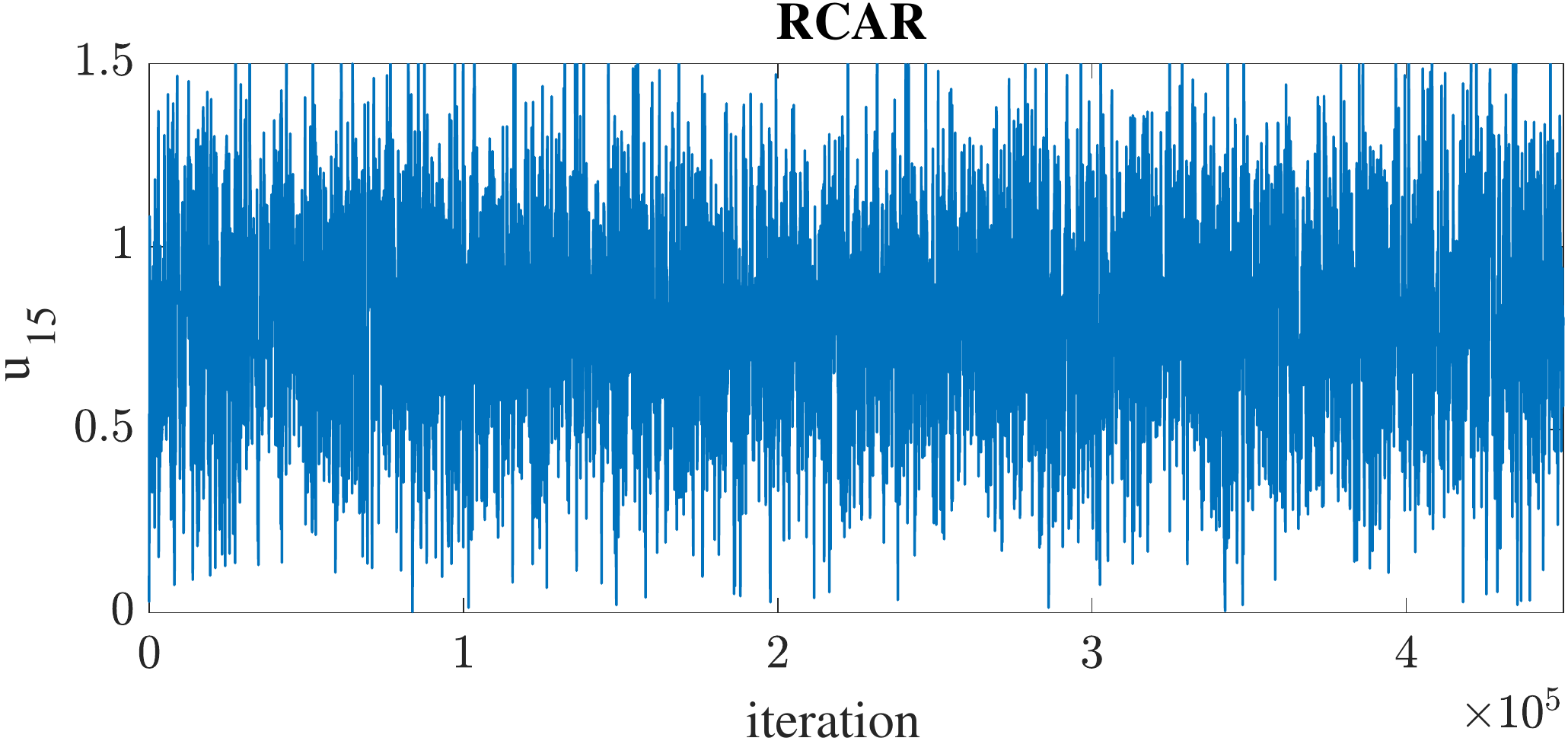}\qquad
  \includegraphics[height=.2\textwidth]{./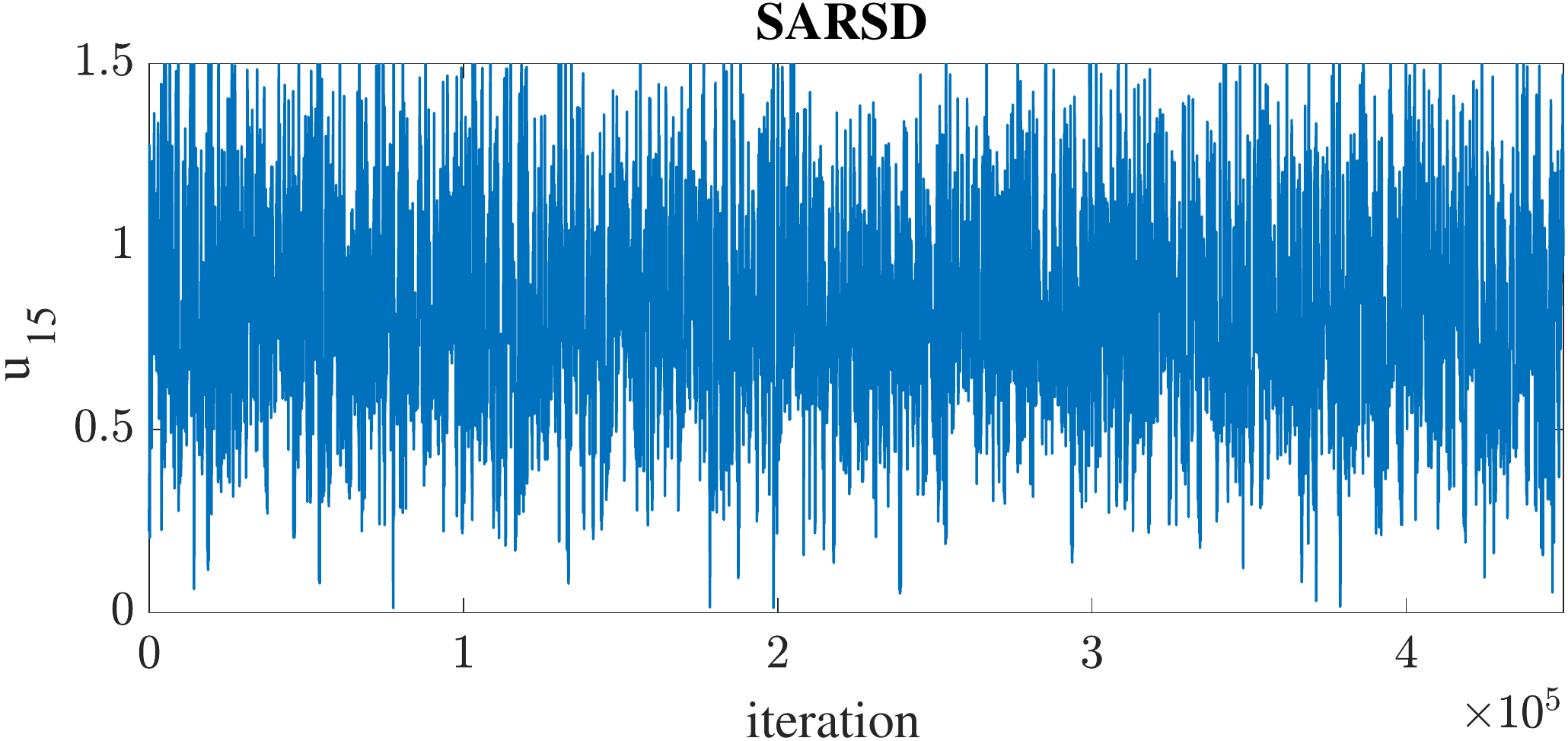}\\
  \includegraphics[height=.19\textwidth]{./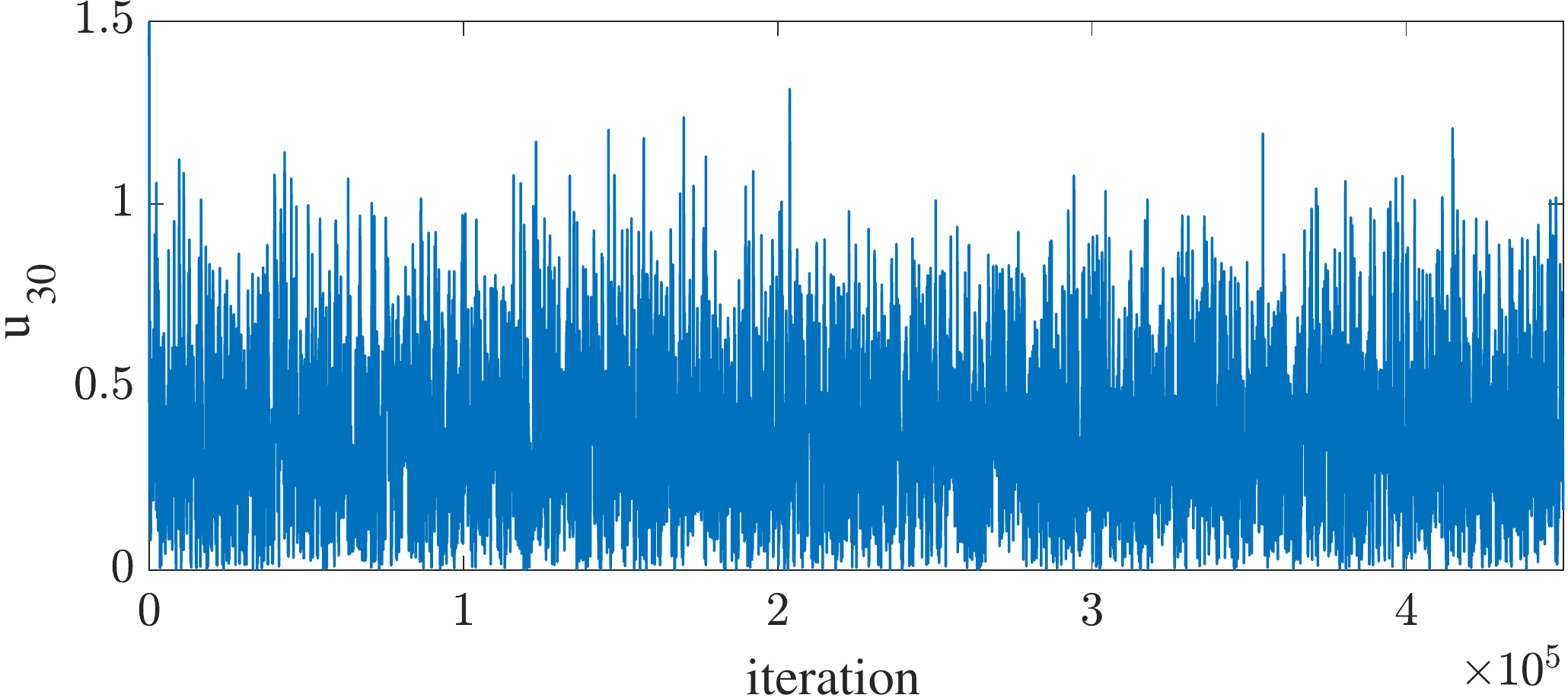}\qquad
  \includegraphics[height=.19\textwidth]{./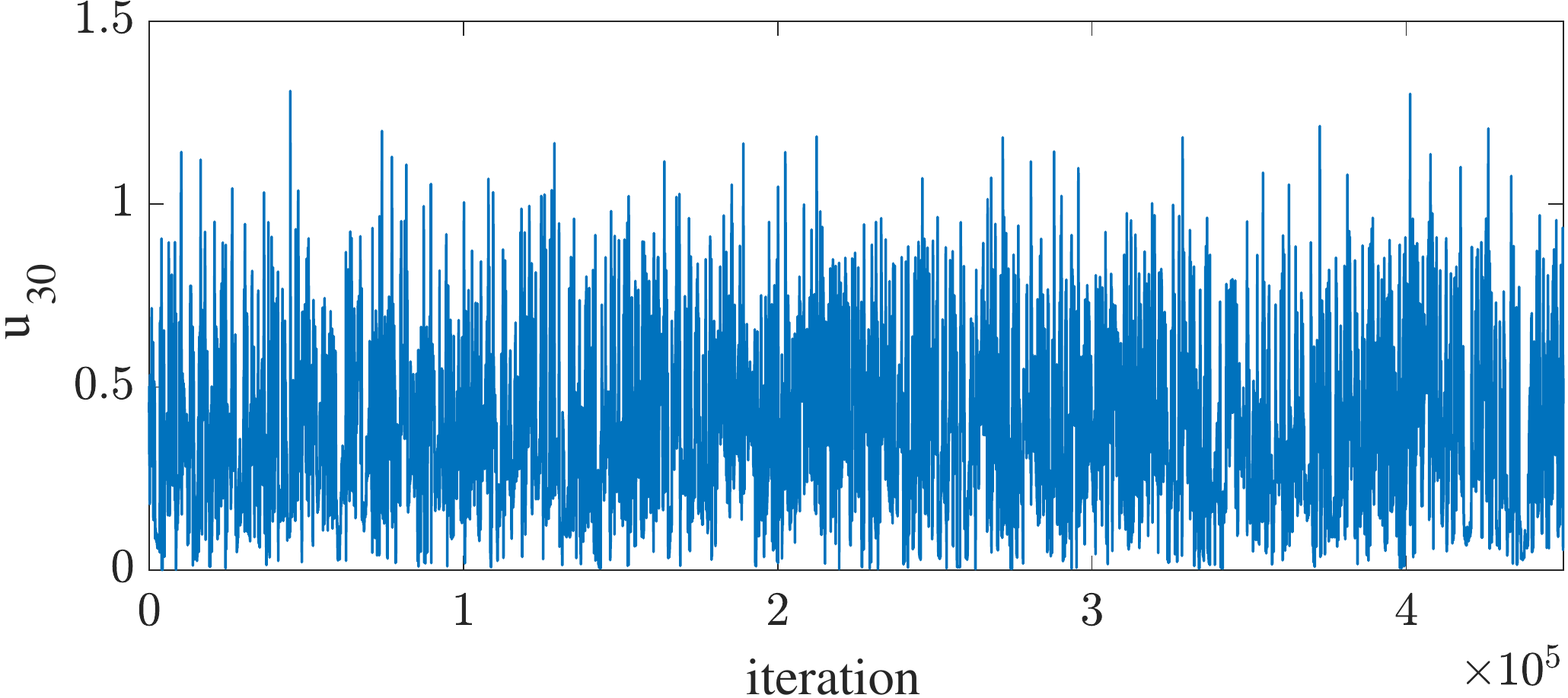}
  \caption{Side by side comparison of an instance of RCAR and SARSD trace plots
    for the $15^{th}$ and $30^{th}$ components of the Markov chains with $N=40$ and
    $p=1$.}
\label{fig:denoising-traceplots}
\end{figure}

\begin{figure}[htp]
  \centering
  \includegraphics[height=.35\textwidth]{./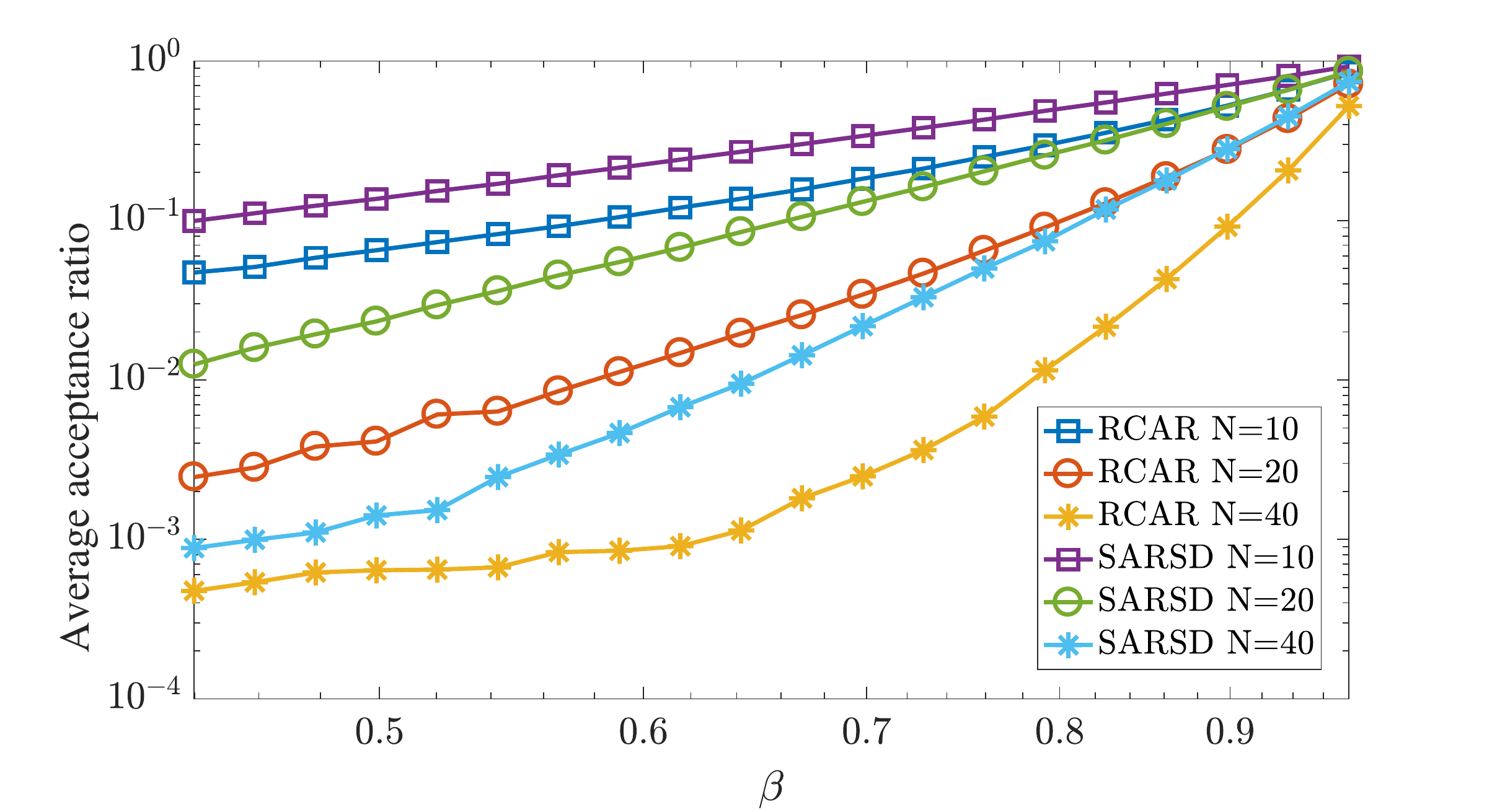}
\caption{ Average acceptance ratio of RCAR and SARSD for different values of $N$
as a function of $\beta$ in the denoising problem of Example 6. The parameter $p =1$ in all cases.
 Average acceptance ratio of both  algorithms  deteriorates as $N$ becomes larger.
For fixed values of $\beta$ the SARSD algorithm appears to have consistently higher acceptance rates. }
\label{fig:denoising-ave-acceptance-ratio}
\end{figure}

\begin{figure}[htp]
  \centering
  \includegraphics[height=.24\textwidth]{./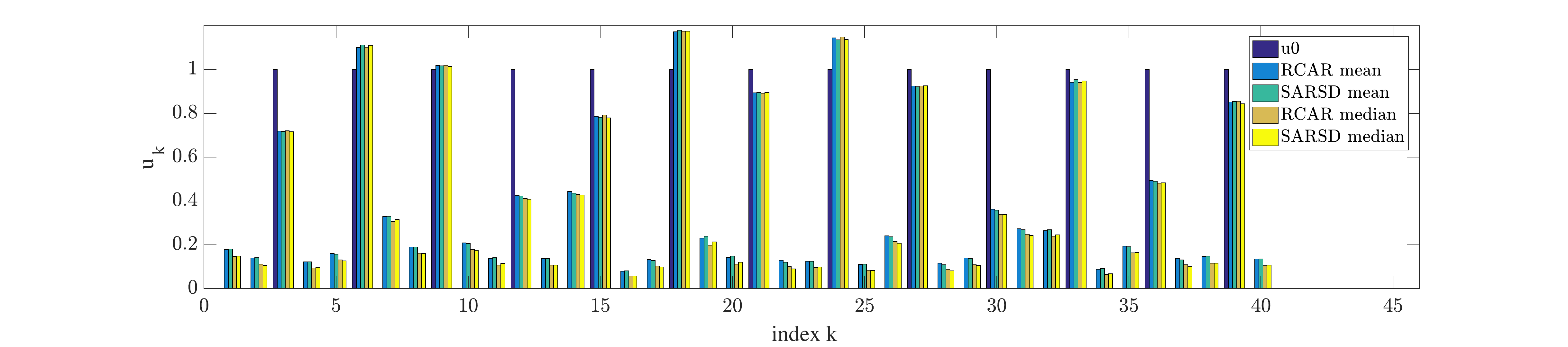}
  \caption{Comparison of posterior mean and median of $u = (u_1, \dots, u_{40})$ obtained from RCAR and SARSD algorithms against the original vector $u_0$
    as in \eqref{denoising-truth} with  $p =1$.}
\label{fig:denoising-summary-stats-RCAR-v-SARSD}
\end{figure}

\begin{figure}[htp]
  \centering
  \includegraphics[height=.24\textwidth]{./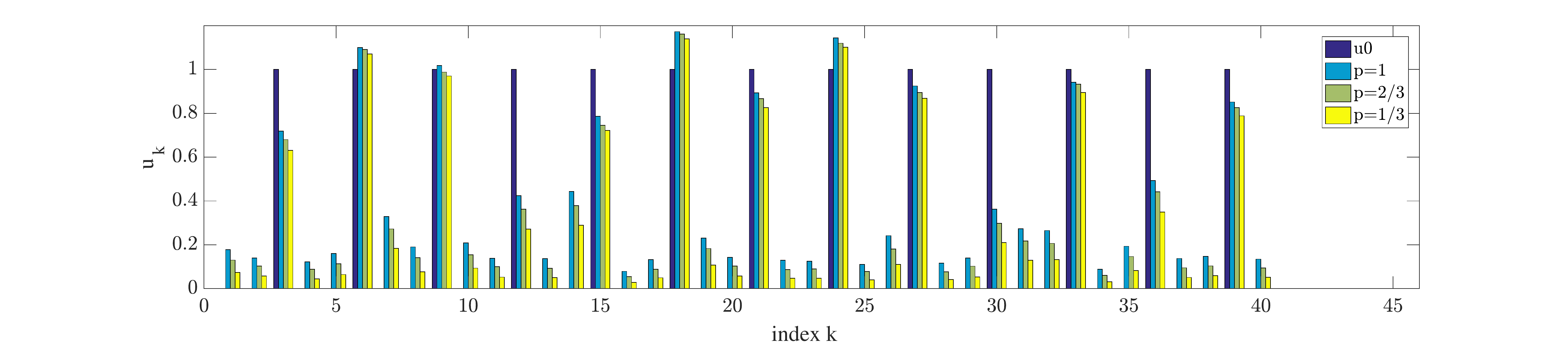}
  \caption{Comparison of the posterior mean  of $u = (u_1, \dots, u_{40})$
    obtained from RCAR  against the original vector $u_0$
    as in \eqref{denoising-truth} for different values of  $p =1, 2/3, 1/3$.}
\label{fig:denoising-summary-stats-RCAR-p-study}
\end{figure}

\subsection{Example 7: Deconvolution on the circle with Bessel-K priors}
\label{sec:deconv}
Here we consider an inverse problem on $L^2(\mbb T)$. For this
example we only used the lifted RCAR algorithm since we want to take the shape
parameter $p <1$ to promote compressibility of the mean and the samples.

 Consider the problem 
of estimating a function $u_0 \in L^2(\mbb{T})$ from a few point values of its convolution 
with a kernel $\kappa$. This  is a classic benchmark problem in the inverse problems
literature referred to as the deconvolution problem \cite{ somersalo, lucka-fast-Gibbs, vogel, marzouk-randomize-optimize}.
 Take the original function
$$
u_0(t) = \left\{ 
  \begin{aligned}
    &1 \qquad &&t \in [1/4, 3/4], \\
    & 0 \qquad &&\text{otherwise},
  \end{aligned}
\right.
$$
consider the kernel
$$
\kappa_0(t) = \left\{
  \begin{aligned}
    & 1-|t| \qquad && |t| \le 1 \\ 
    & 0 \qquad && \text{otherwise},
  \end{aligned}
\right.
$$
and define the family of convolution kernels
\begin{equation}
  \label{deconvolution-kernel}
  \kappa_\varepsilon := \frac{1}{\varepsilon} \kappa_0(t/\varepsilon).
\end{equation}
Suppose  measurements are obtained  as pointwise values
of $(\kappa_\varepsilon \ast u_0)(t)$ on a uniform
grid of size $M=20$ points on $[0.01, 0.99]$. By putting the convolution and pointwise evaluation 
operators together we can define a forward map $\mcl{G}: L^2(\mbb{T}) \mapsto \reals^M$
taking the function $u_0$ to the measurements $y$. We further assume that 
measurement noise is additive Gaussian and so
$$
y = \mcl{G}(u_0) + \epsilon, \qquad \epsilon \sim \mcl{N}(0, \sigma^2\mb{I}),
$$
giving rise to a quadratic likelihood potential of the form \eqref{quadratic-likelihood}.

We now define our prior.
 Let $\{ r_k\}_{k=1}^\infty$ be the Haar wavelet basis in $L^2(\mbb{T})$:
$$
r_0(t) =1, \qquad 
r_1(t) = 2(\mb{1}_{\{t \le 1/2\}}(t) - 1/2)
$$
and for $j = 1,2, \dots$ and $m = 0, 1, 2,\dots, 2^j-1$ define 
$$
r_{2^j + m}(t) = 2^{j/2} r_1(2^{j}t - m).
$$
Also consider the sequence $\{\gamma_k\} \in \ell^2$:
$$
\gamma_0 =\gamma_1 = 1, \qquad \text{and} \qquad \gamma_{2^j + m} = 2^{-2j}
$$
for $j =1,2,3, \dots$ and $m = 0, 1, 2, \dots, 2^j-1$. We then define 
the prior measure

\begin{equation}\label{deconvolution-prior}
 \mu_0 = \Law \left\{ u = \lambda \sum_{k=0}^\infty   \gamma_k \eta_k r_k ,\qquad \text{where} \qquad \eta_k
  \stackrel{iid}{\sim} BK(p,1)\right\}.
\end{equation}
Here $\lambda \in (0,\infty)$ is a fixed hyperparameter that can be used to control 
the global variance of the wavelet modes. With the likelihood and prior identified
we turn our attention to solving the inverse problem.

We discretized the problem at two stages. We approximated the prior 
$\mu_0$ with $\mu_{0,N}$ by truncating the sum in \eqref{deconvolution-prior}
up to $N$ terms and  discretized the convolution operator 
using the composite midpoint rule on a uniform grid of size $128$ points.
We performed wavelet transforms using the Rice 
Wavelet Toolbox \cite{rice-wavelet}
and
employed linear interpolation to approximate the pointwise evaluations.
For the numerical experiments we generated a fixed synthetic dataset by solving
the discrete forward problem with added \myhl{Gaussian noise with
  standard deviation $0.05$. We used a different mesh to generate the data to
avoid the so-called inverse crimes.} 
In 
Figure~\ref{fig:deconvolution-data-posterior-mean-samples}(a) we show the original function $u_0$, the convolution $\kappa_\varepsilon \ast u_0$ with $\varepsilon = 1/16$
and the fixed realization of the dataset $y$.
For the time being we fix 
 $\varepsilon= 1/16$ and the dataset $y$  shown in Figure~\ref{fig:deconvolution-data-posterior-mean-samples}(a). We discuss the effect of the
dilation parameter $\varepsilon$ in Subsection~\ref{sec:kernel-width-study}.
\subsubsection{Posterior statistics}

We begin by presenting certain posterior statistics obtained  from the lifted RCAR algorithm.
We fixed $p = 2/3$, $\lambda=1$ and discretized the prior by truncating 
\eqref{deconvolution-prior} up to  $N = 8, 16, 32, 64, 128$ terms (the dimension of
the parameter space is $N$). We used a burnin of $5 \times 10^4$ samples 
and ran  lifted RCAR for $5\times 10^5$ steps with $\beta = 0.97$.
We chose this value of $\beta$ to achieve an acceptance ratio in the range of 
$0.25$ to $0.3$ for all values of $N$. In Subsection~\ref{sec:RCAR-deconvolution-performance}
we further analyze the acceptance ratio and its dependence on $N$.
 
 In Figure~\ref{fig:deconvolution-data-posterior-mean-samples}(b)
 we show the posterior mean for different choices of $N$. The mean appears to
 converge
as $N$ increases \myhl{ and is  able to find the discontinuities in the 
original function and  match  their height.}
The mean is less regular as compared to the true solution $u_0$, most likely
due to noise in the data $y$. 

In Figure~\ref{fig:deconvolution-data-posterior-mean-samples}(c) we show a few
independent samples from the posterior 
in the case when $N=128$. The samples were chosen to be far enough apart that they can be regarded 
as independent according to the estimated ESS of the worst performing component of the chain.
We note that the posterior samples also have the correct location of
the discontinuities. 

Figure~\ref{fig:deconvolution-wavelet-modes} shows the posterior mean and standard deviation of the 
first eight wavelet coefficients of the Markov chain (i.e., $\{ \gamma_k\eta_k \}_{k=1}^8$).
We observe that the posterior mean is a close match to the
true value of the wavelet coefficients of $u_0$ which reaffirms our
initial observation that the posterior mean is a good predictor of $u_0$.
An interesting observation is that posterior standard deviations of
the wavelet modes were consistent across different modes. Indicating that, at least
the first few 
 modes of $u_0$ are approximated with more or less the same uncertainty. 

 Finally, Figure~\ref{fig:deconvolution-2D-posterior-hist} shows two-dimensional histograms
 of the first five
 wavelet modes of $u$. In comparing the fifth wavelet coefficient $u_5$ against
 $u_1$ to $u_4$ (i.e, the last row in Figure~\ref{fig:deconvolution-2D-posterior-hist})
 we observe some concentration of the posterior mass around $u_5= 0$.

\begin{figure}[htp]
  \centering
  \raisebox{.24\textwidth}{a)}
  \includegraphics[height=.24\textwidth]{./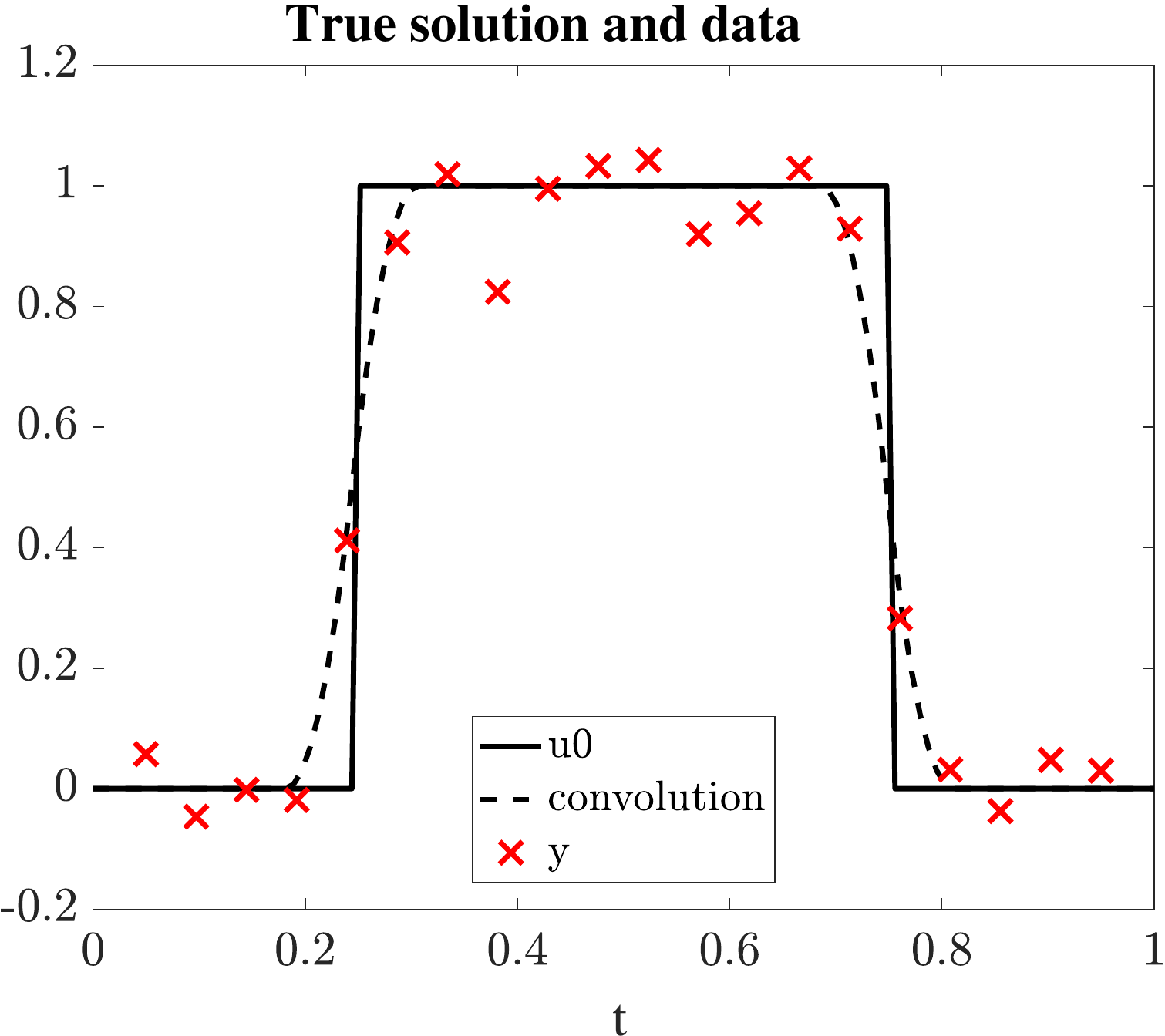}
  \raisebox{.24\textwidth}{b)}
  \includegraphics[height=.24\textwidth]{./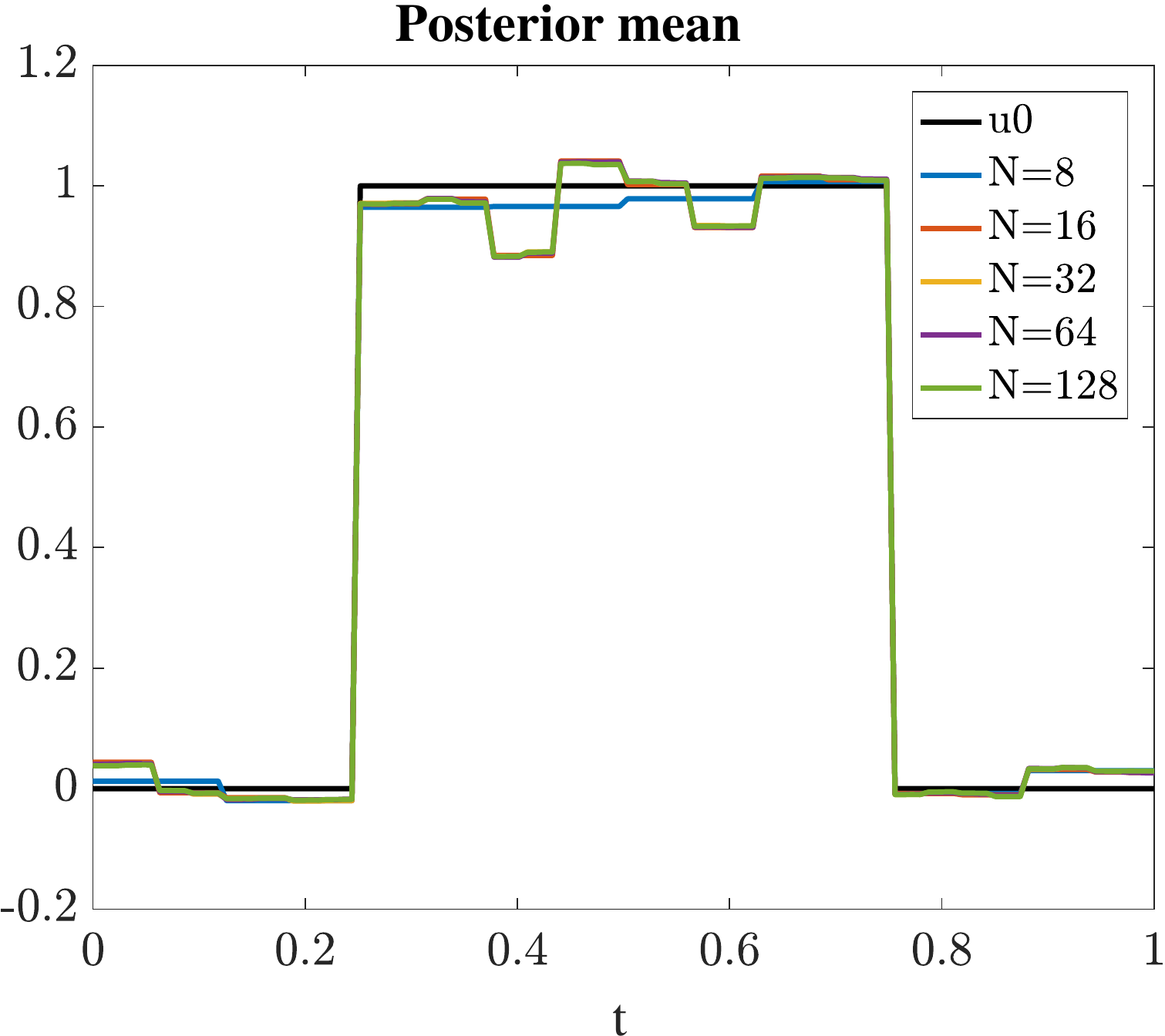}
  \raisebox{.24\textwidth}{c)}
  \includegraphics[height=.24\textwidth]{./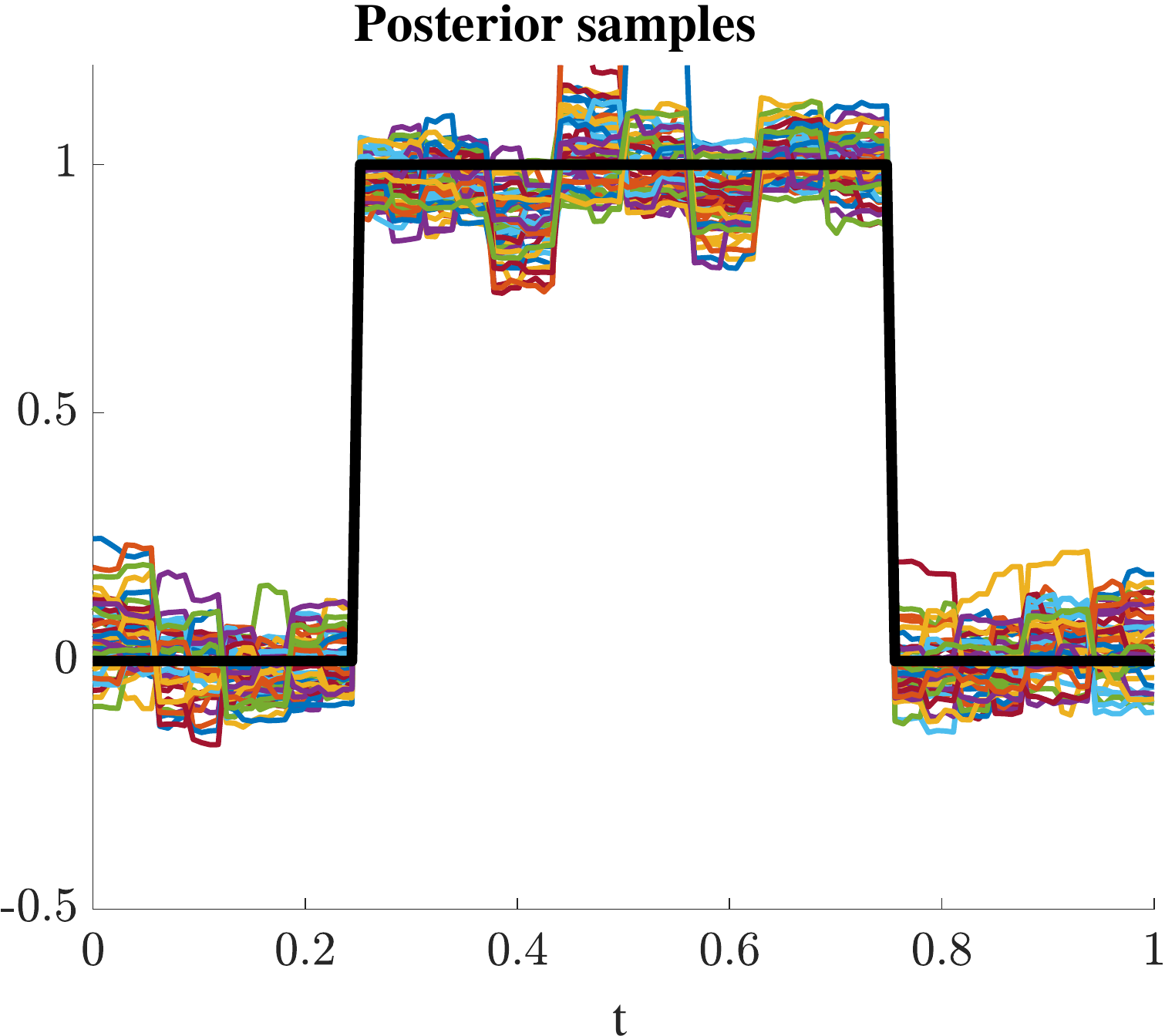}
  \caption{a) True function $u_0$, its convolved version with
    kernel width $\varepsilon = 1/16$ and the noisy pointwise measurements $y$ in the deconvolution problem of Example 7. b) The 
    posterior mean of the deconvolution problem for different values of $N$ (the number of wavelet coefficients). c) A few posterior samples in the deconvolution example with $N=128$. The samples were
    taken to be far enough from each other 
that they can be considered as independent according to the ACF of the worst performing component of the chain.}
  \label{fig:deconvolution-data-posterior-mean-samples}
\end{figure}

\begin{figure}[htp]  \centering
  \includegraphics[width= .48 \textwidth]{./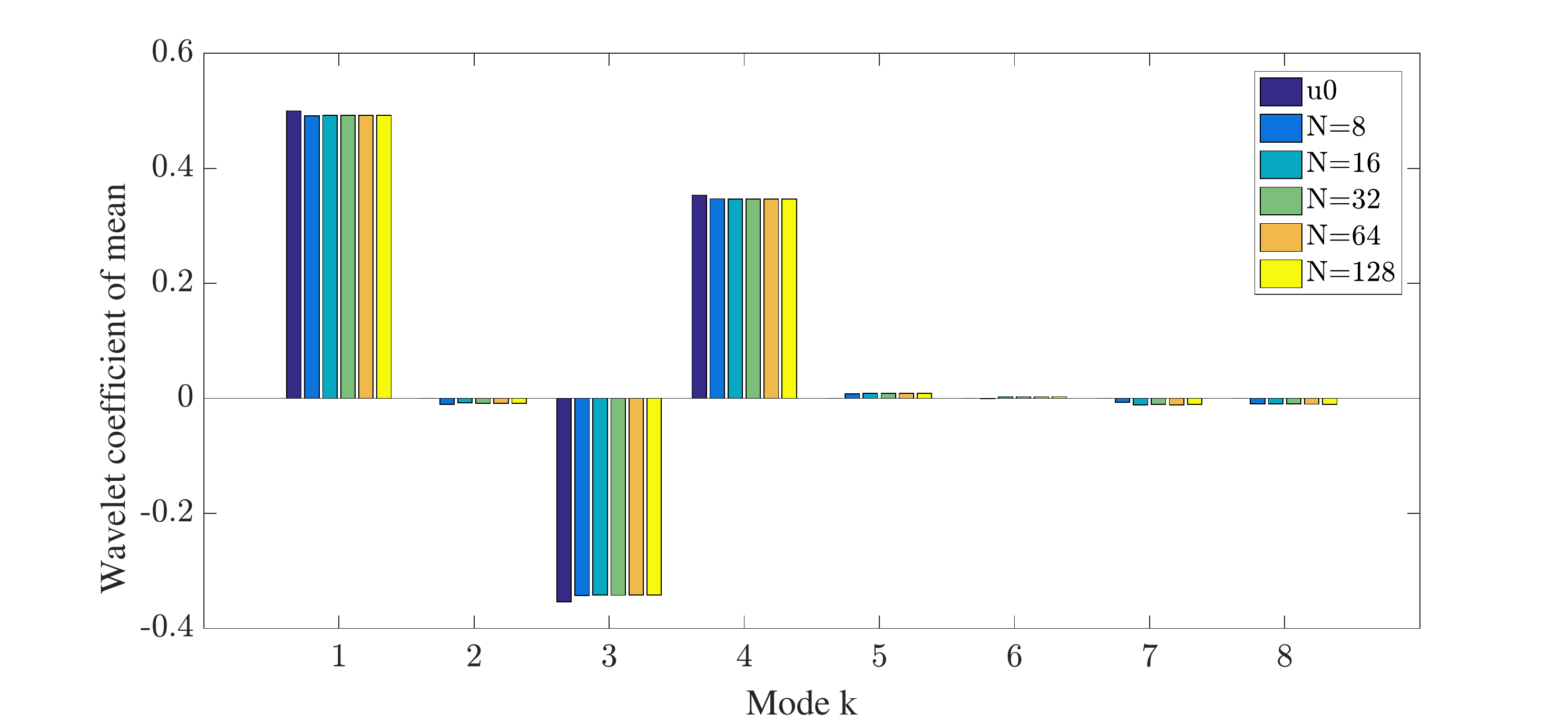}
\quad 
  \includegraphics[width= .48 \textwidth]{./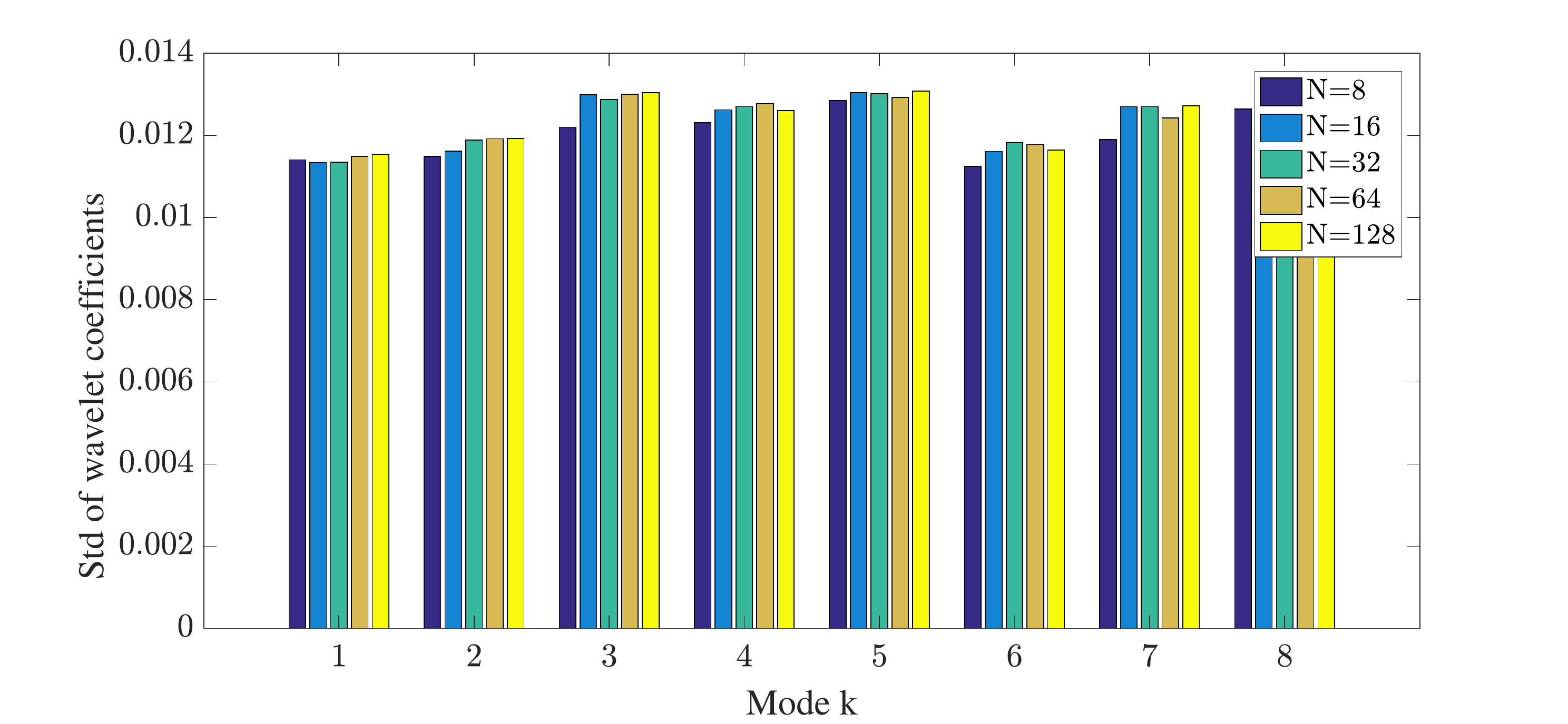}
  \caption{Posterior mean and standard deviation of the first eight wavelet coefficients in the deconvolution problem with $p=2/3, \lambda = 1$ and for different values of $N$.}
\label{fig:deconvolution-wavelet-modes}
\end{figure}

\begin{figure}[htp]
  \centering
\begin{tabular}{c c c c c}
  \includegraphics[width = .2\textwidth]{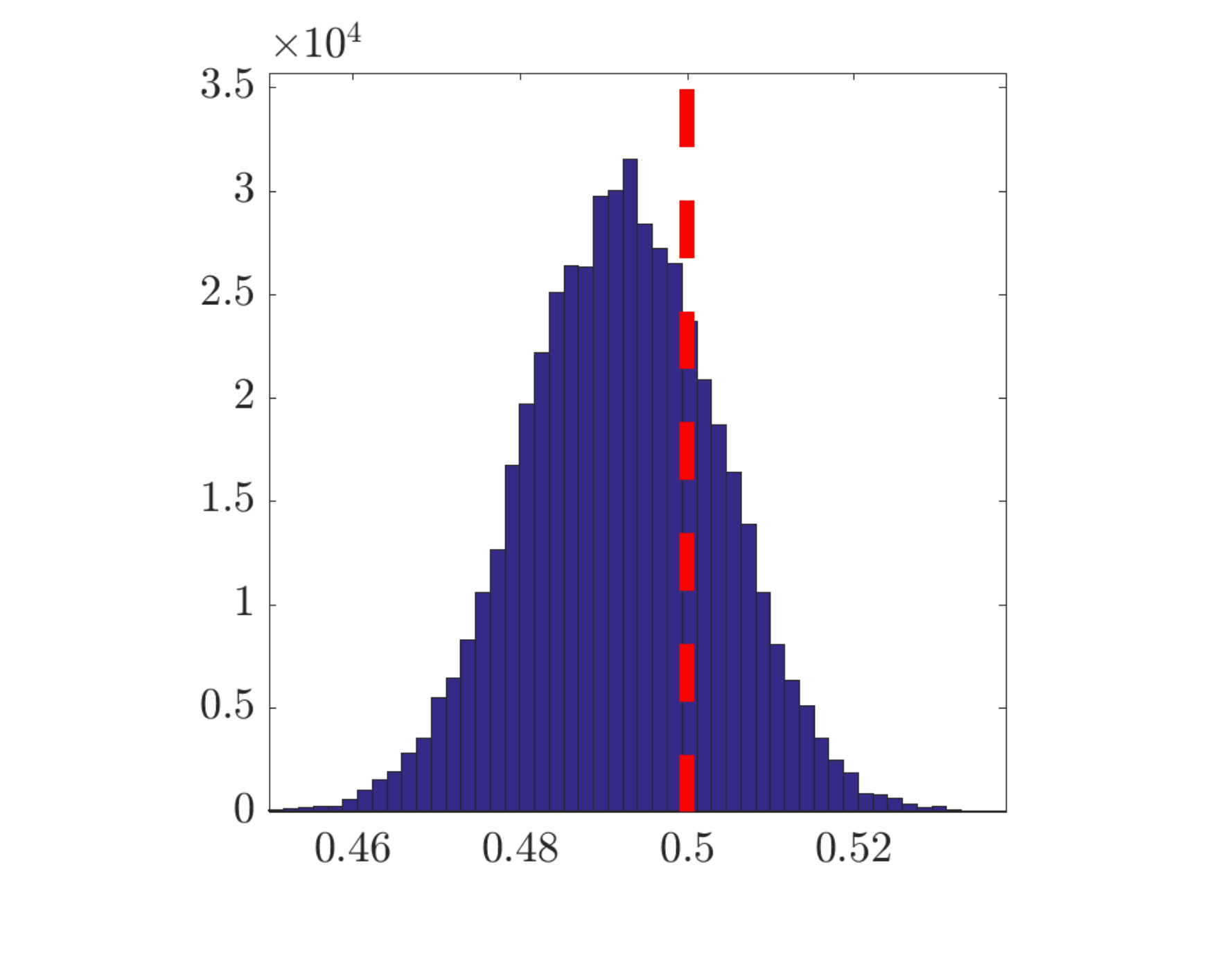}& & & &\\
  \includegraphics[width = .2\textwidth]{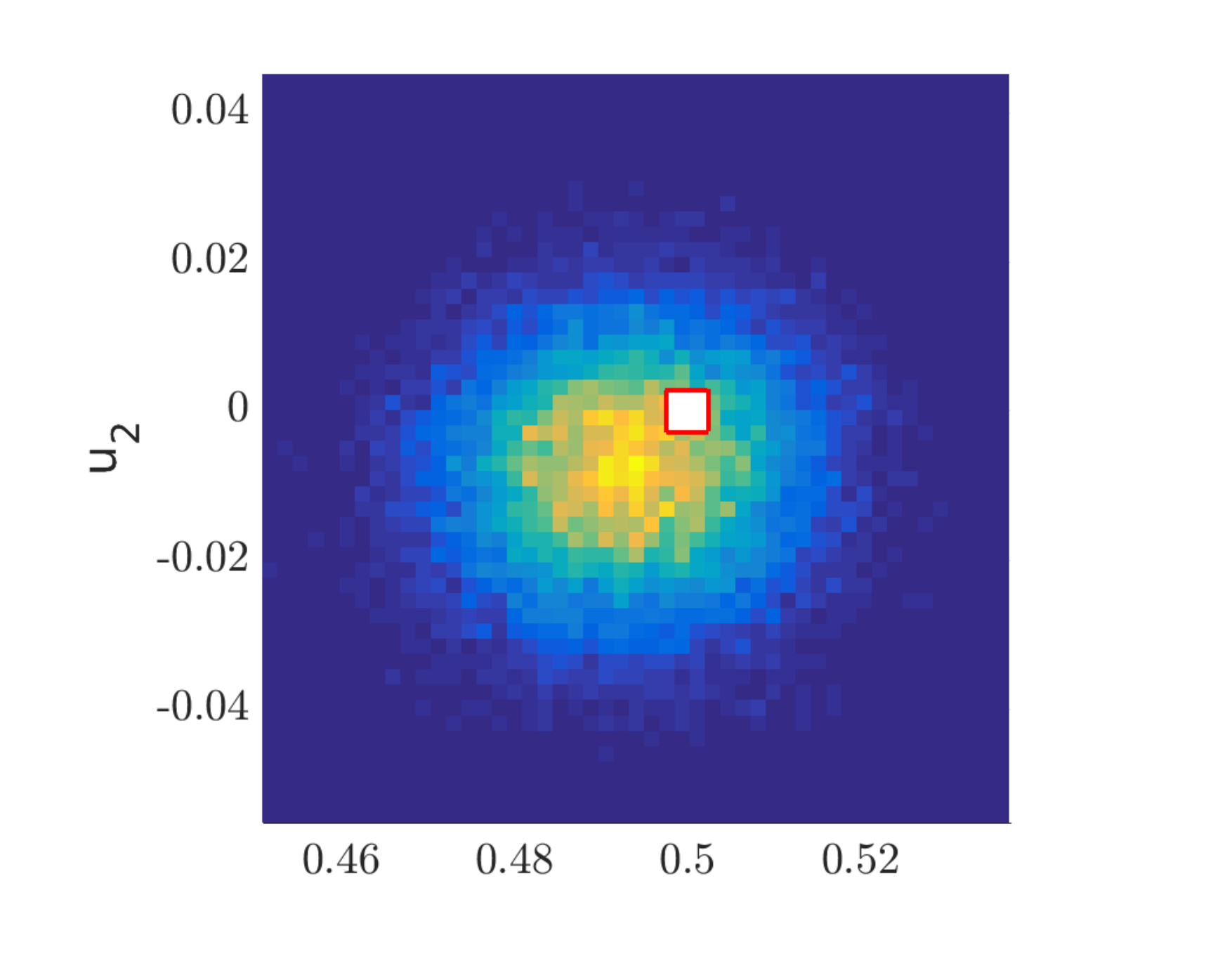}& 
  \includegraphics[width = .2\textwidth]{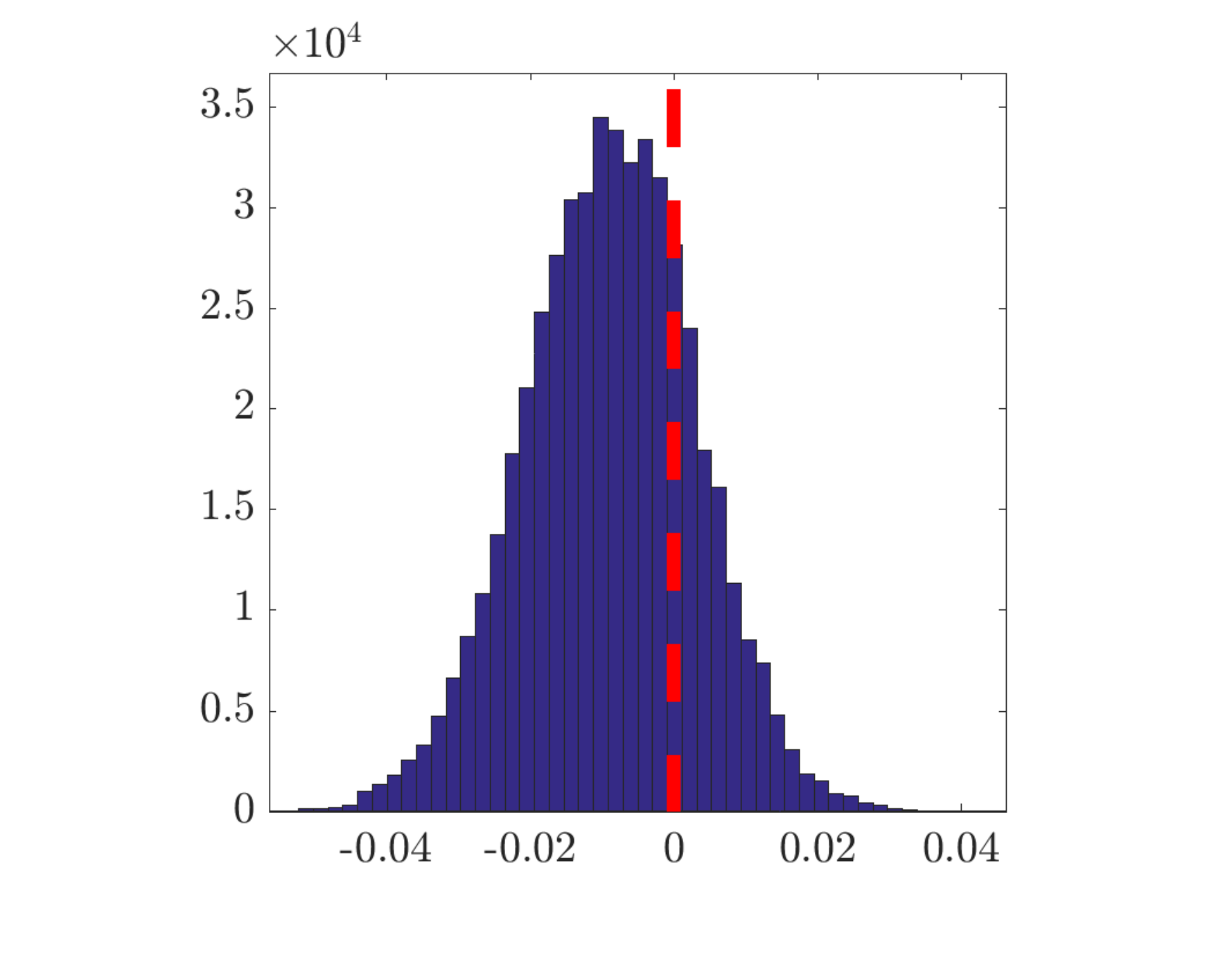}& & &\\ 
  \includegraphics[width = .2\textwidth]{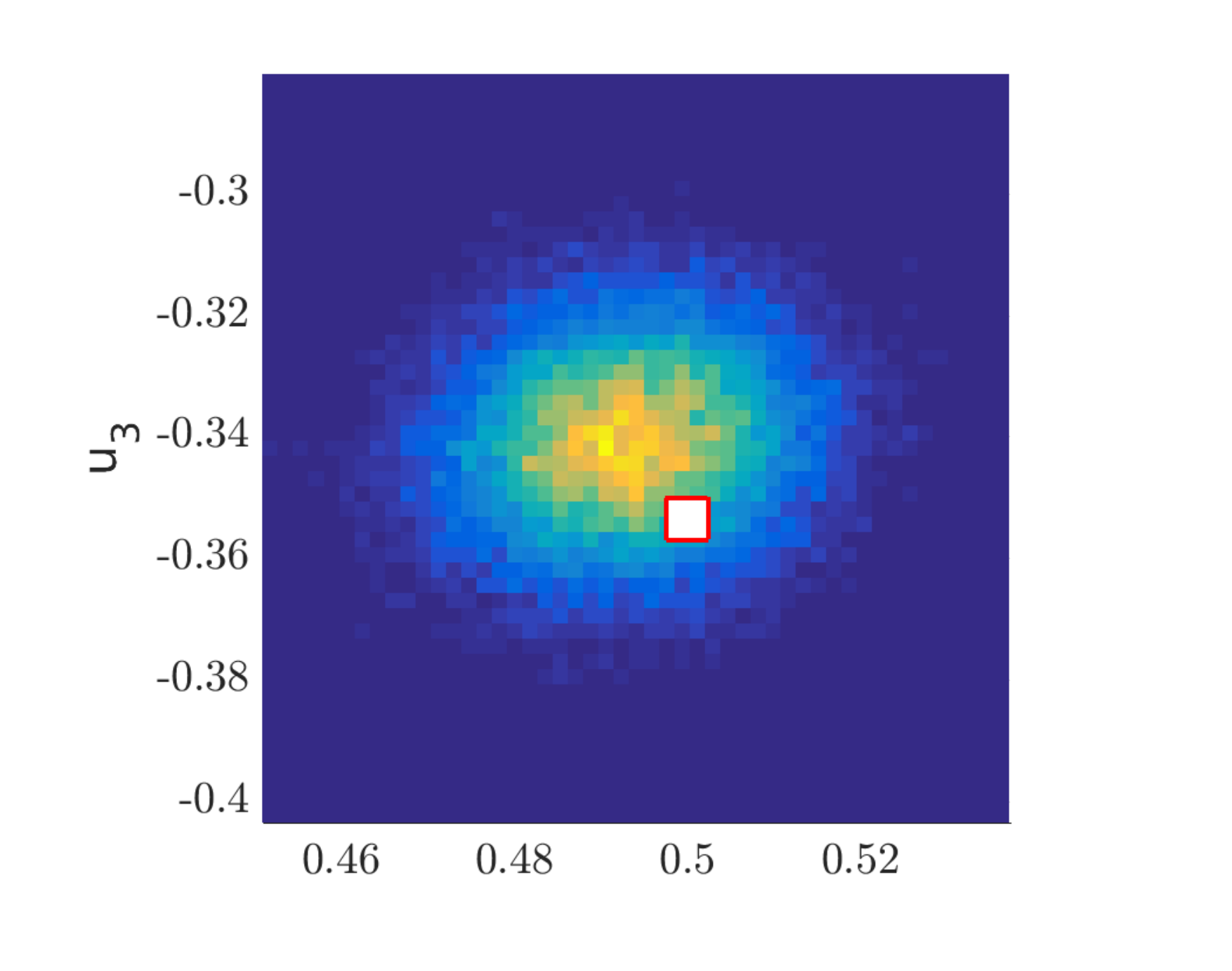}&
   \includegraphics[width = .2\textwidth]{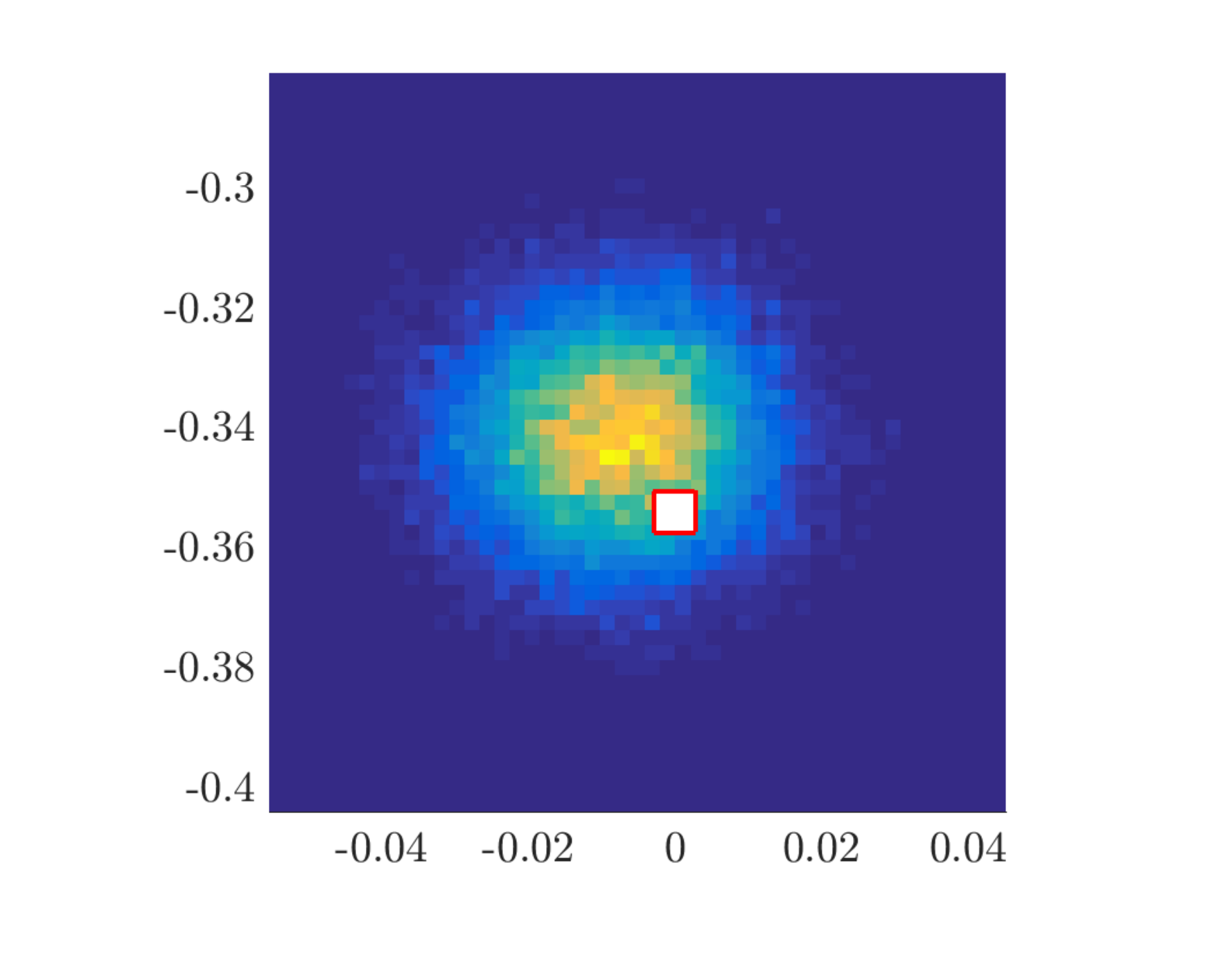}&
   \includegraphics[width = .2\textwidth]{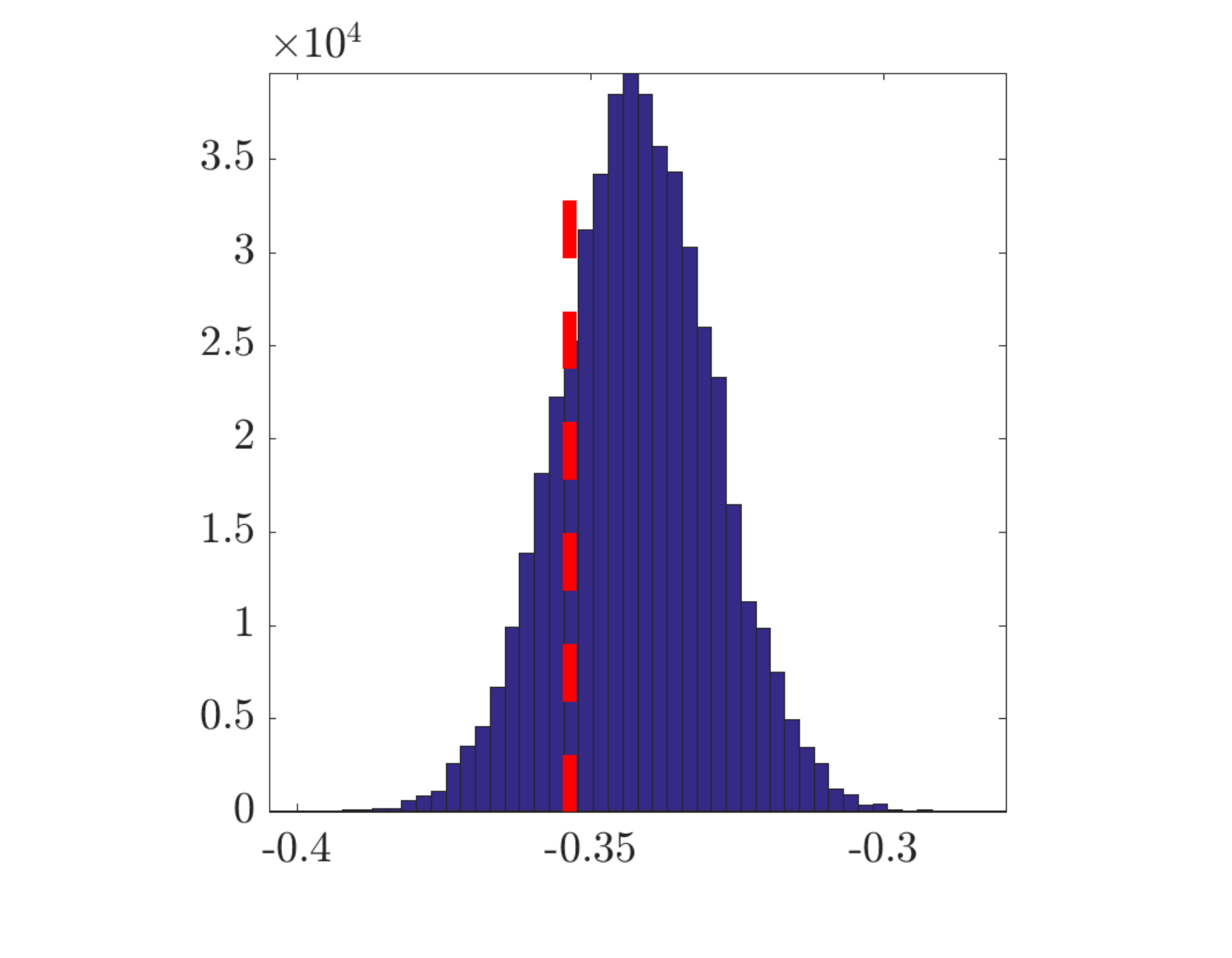}& &\\     
  \includegraphics[width = .2\textwidth]{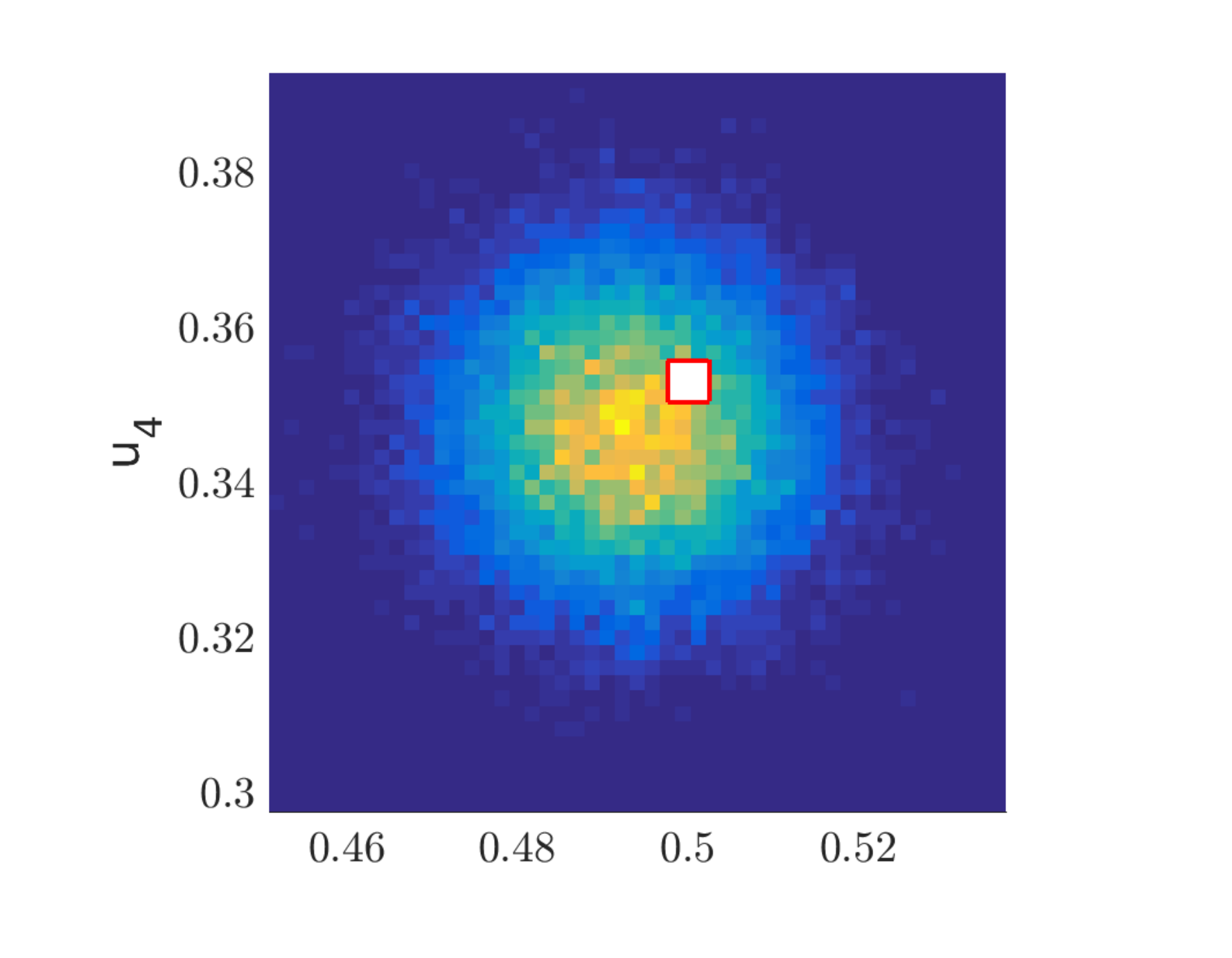}&
   \includegraphics[width = .2\textwidth]{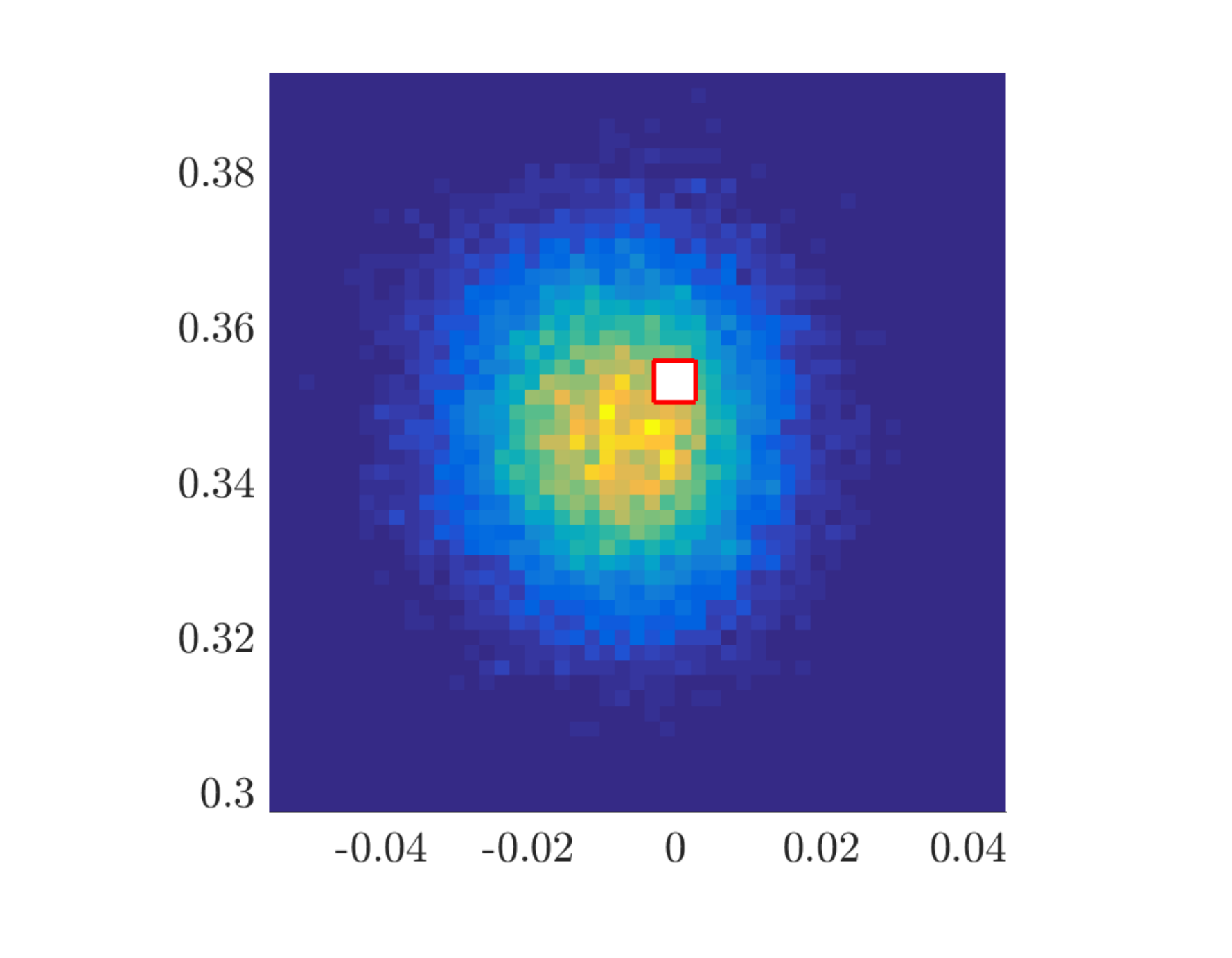}&
   \includegraphics[width = .2\textwidth]{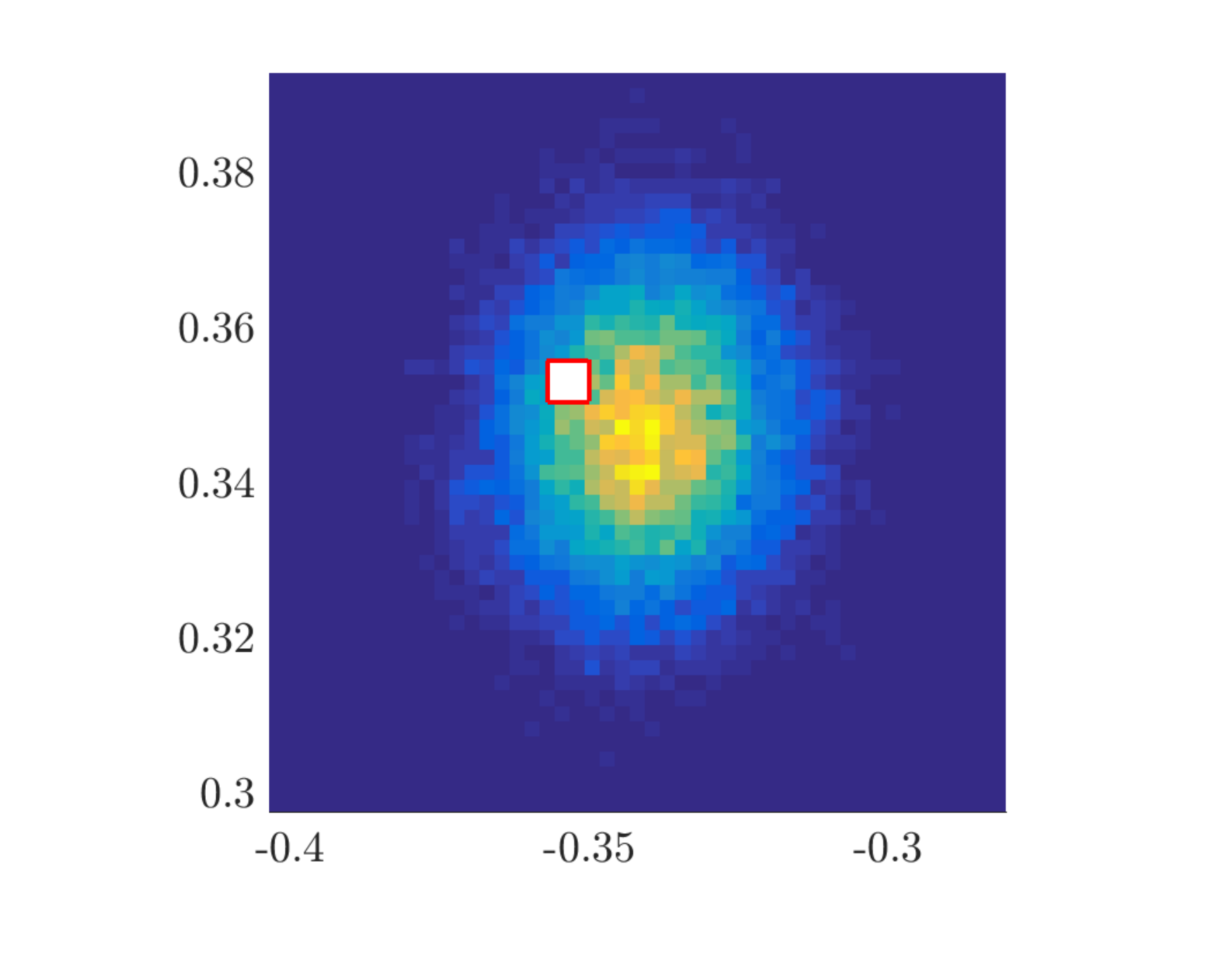}&
   \includegraphics[width = .2\textwidth]{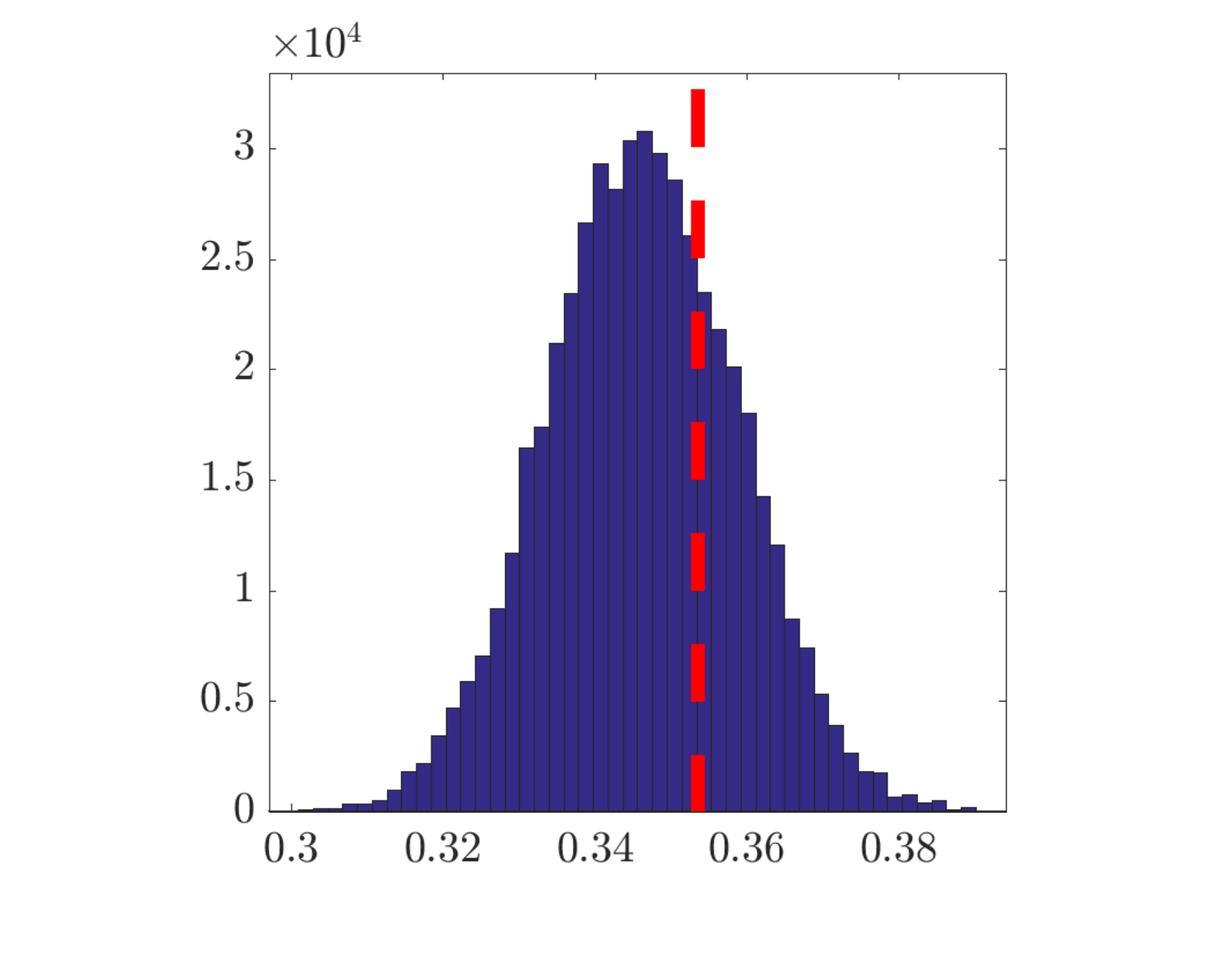}&\\      
   \includegraphics[width = .2\textwidth]{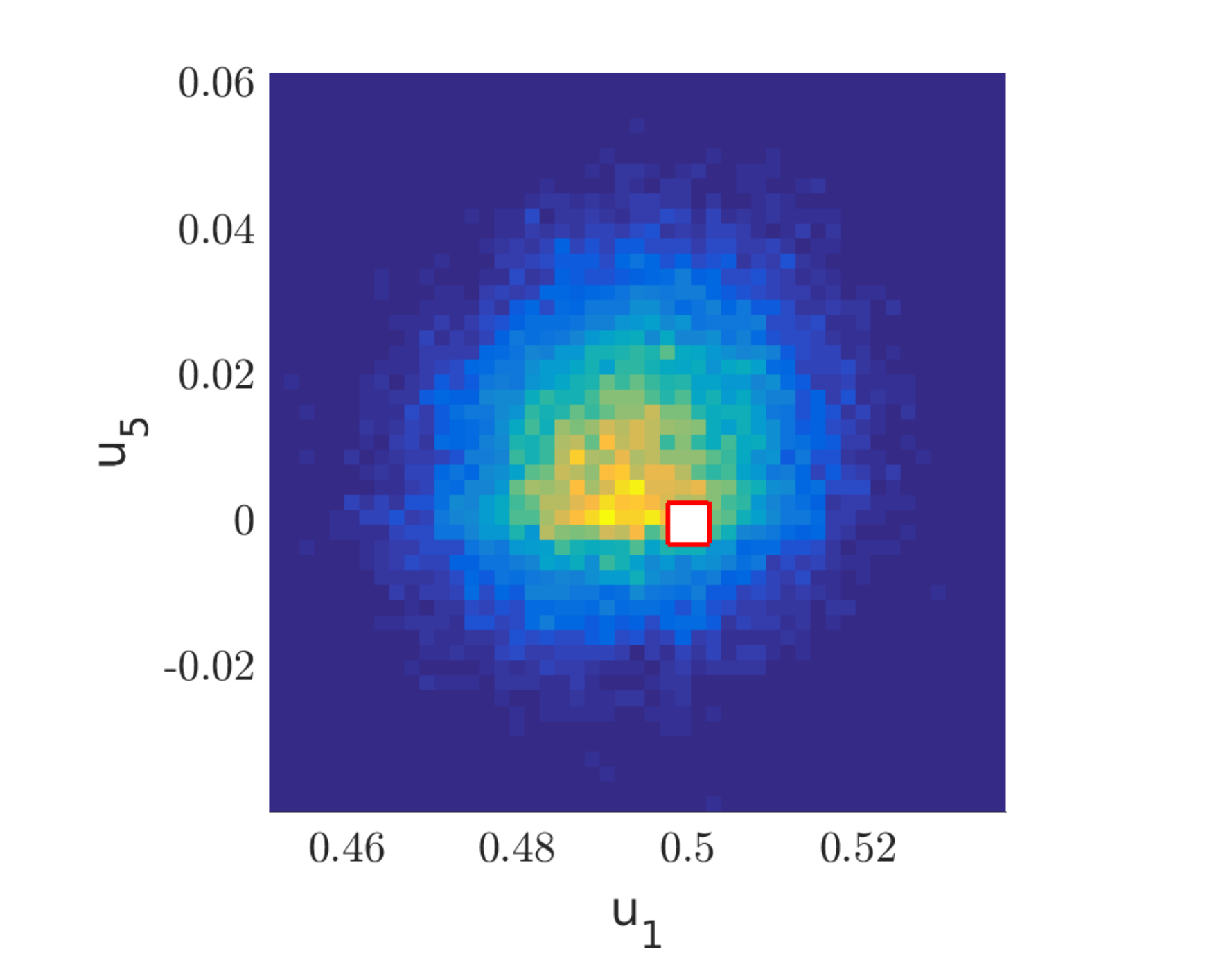}&
   \includegraphics[width = .2\textwidth]{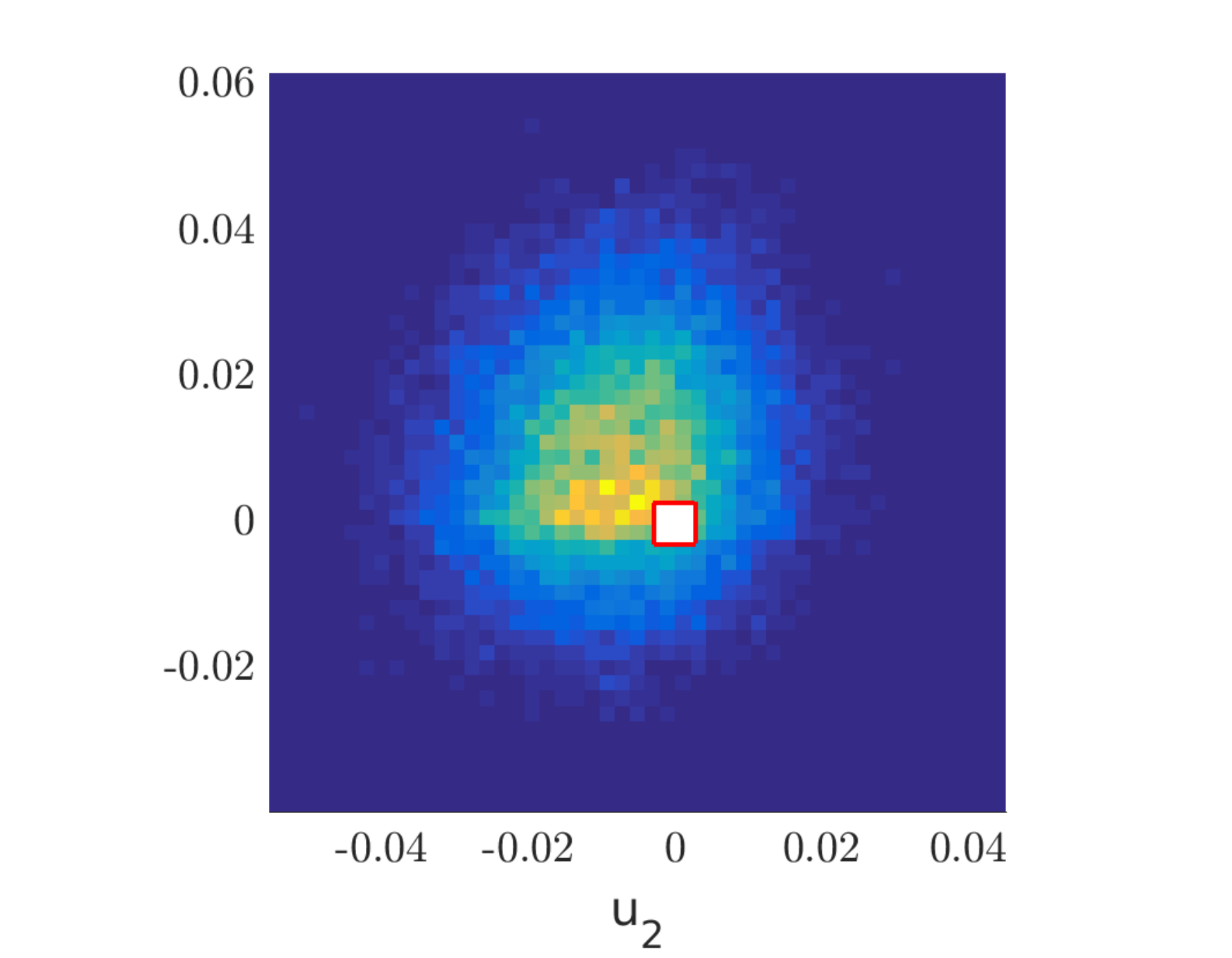}&
   \includegraphics[width = .2\textwidth]{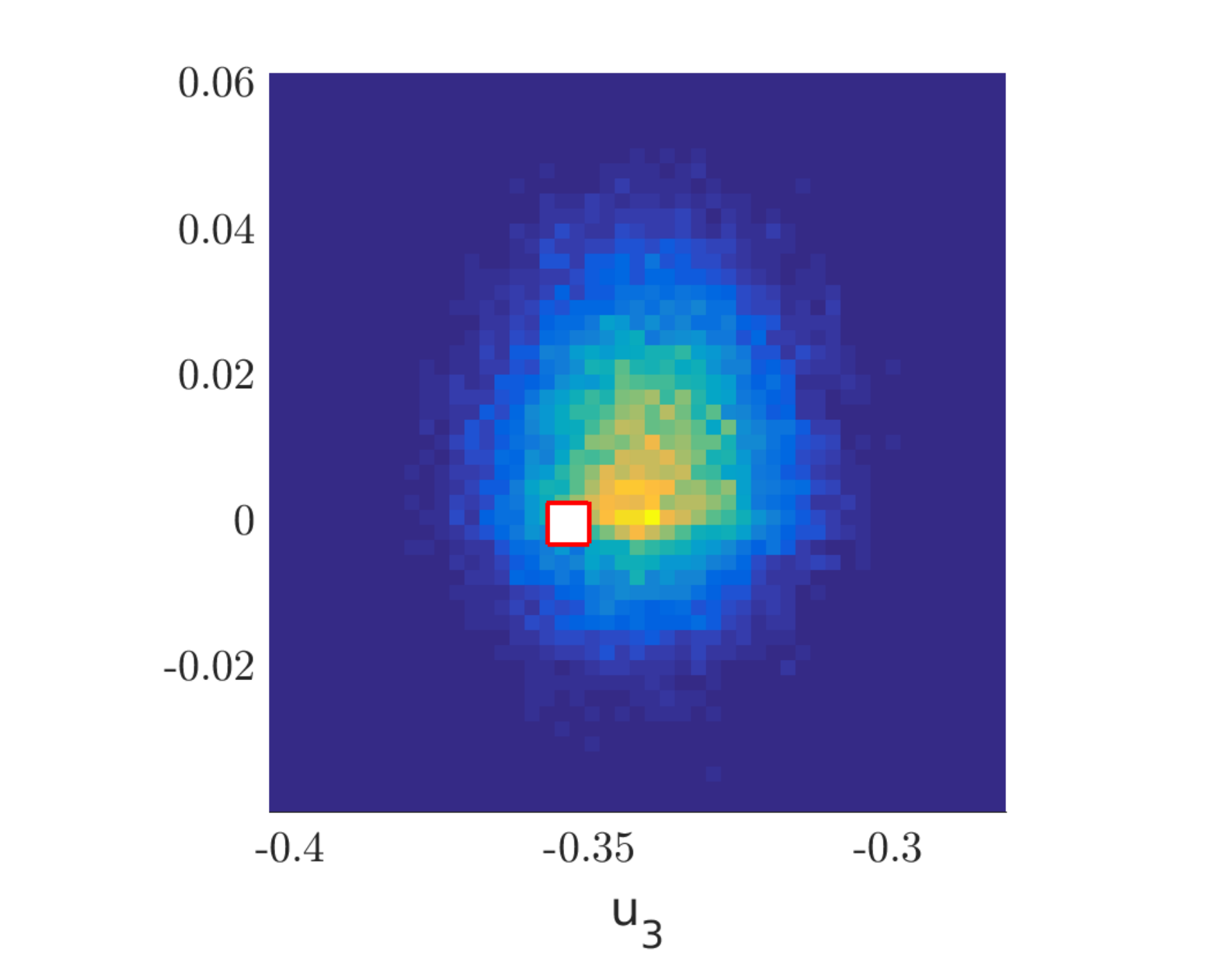}&
   \includegraphics[width = .2\textwidth]{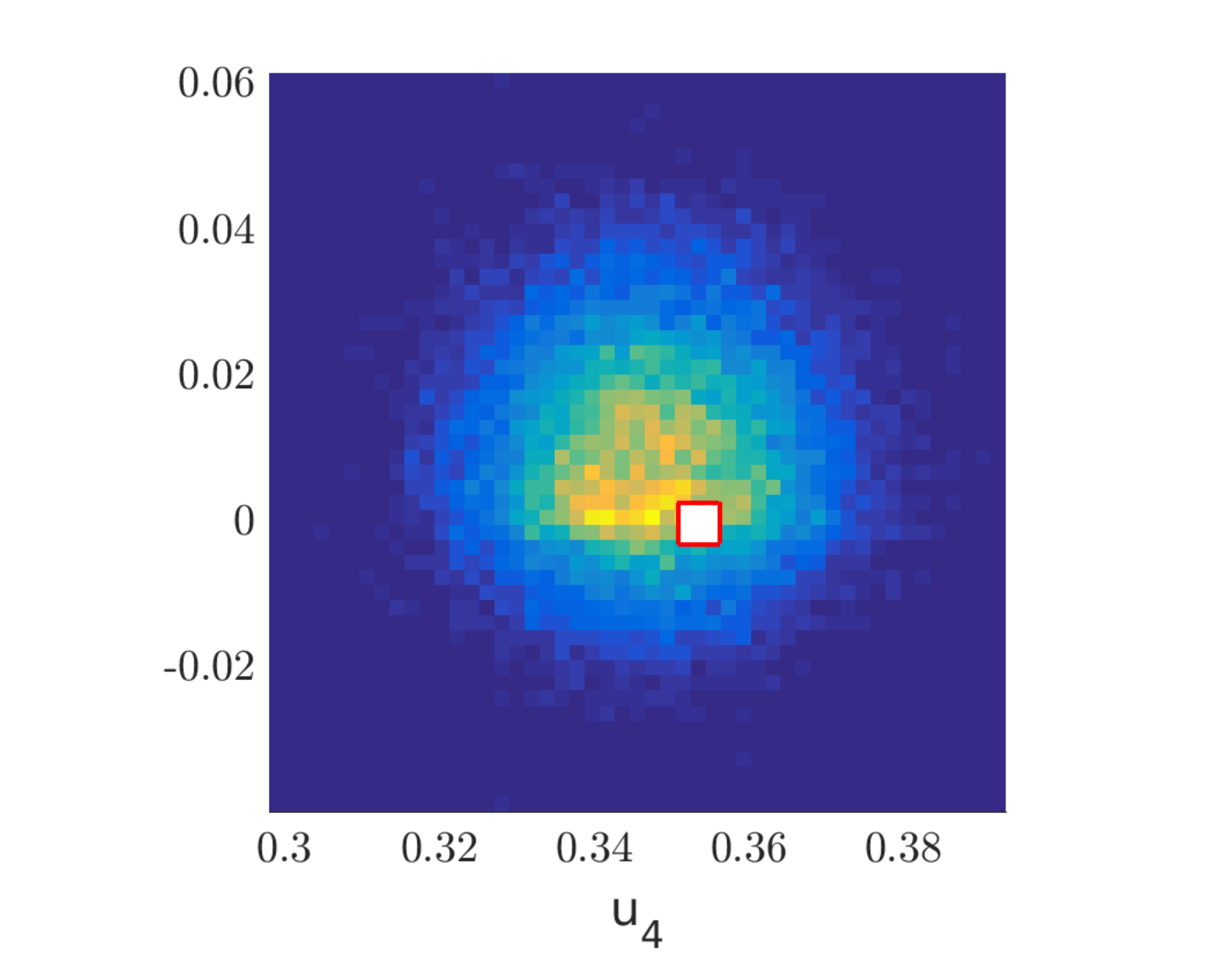}&
   \includegraphics[width = .2\textwidth]{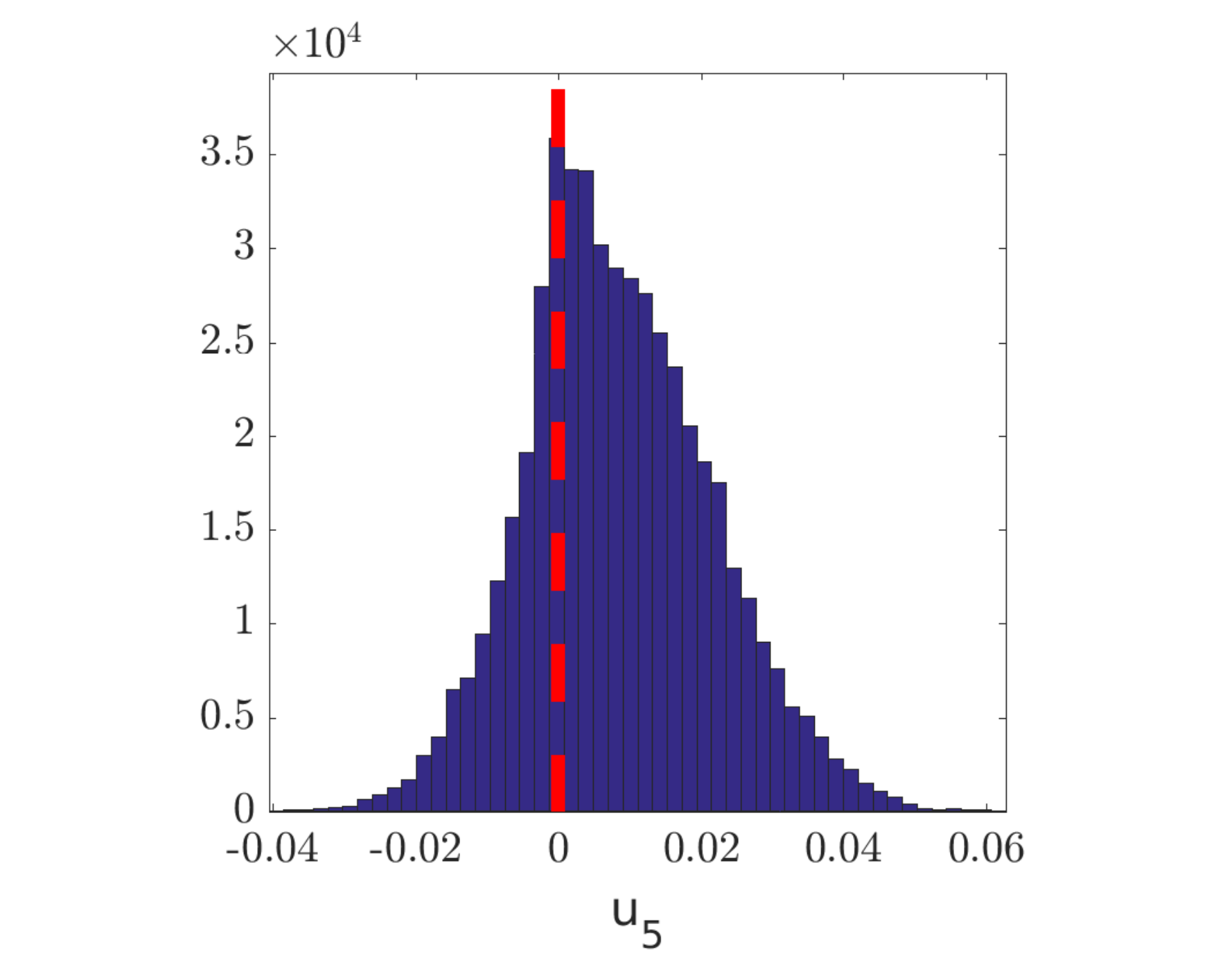}
\\                                                   
\end{tabular}  
\caption{Two dimensional histograms of the MCMC samples between the
  first five wavelet modes in the deconvolution problem of Example 3.
  These projections are regarded as two dimensional projections of the posterior measure.
  Components of the true solution $u_0$ are marked using the dashed red line on
  the 1D histograms and the white 
square in the 2D histograms.
}
  \label{fig:deconvolution-2D-posterior-hist}
\end{figure}

\subsubsection{Algorithm performance}\label{sec:RCAR-deconvolution-performance}

We now turn our attention to the performance of the lifted RCAR algorithm. In Figure~\ref{tab:chain-ESS-deconvolution}(a) we show statistics on the ESS of the different components of the Markov chain. Based on these 
results, in the case where $N=128$,  an independent sample was obtained  roughly every 
$500$ steps. While the ESS  deteriorated initially; as $N$ becomes larger
the ESS values  appeared to settle down for all components. The minimum, mean and the
maximum ESS values did not change significantly for $N \ge 16$. 

\begin{figure}[htp]
\centering
\raisebox{.22\textwidth}{a)}
  \raisebox{.13\textwidth}{\begin{tabular}{ |c | c | c| c |}
              \hline
              $N$ & $\min$ ESS & mean ESS & $\max$ ESS \\ \hline 
                 8 & 75&  98  & 171  \\ \hline 
                 16 &10 & 39  & 85\\ \hline 
                 32 & 17 & 41 & 116\\ \hline 
                 64 &  14& 39 & 118\\ \hline 
                 128 & 18& 41 & 91\\ \hline 
            \end{tabular}}
\qquad  \qquad
\raisebox{.22\textwidth}{b)}
\includegraphics[height=.225\textwidth]{./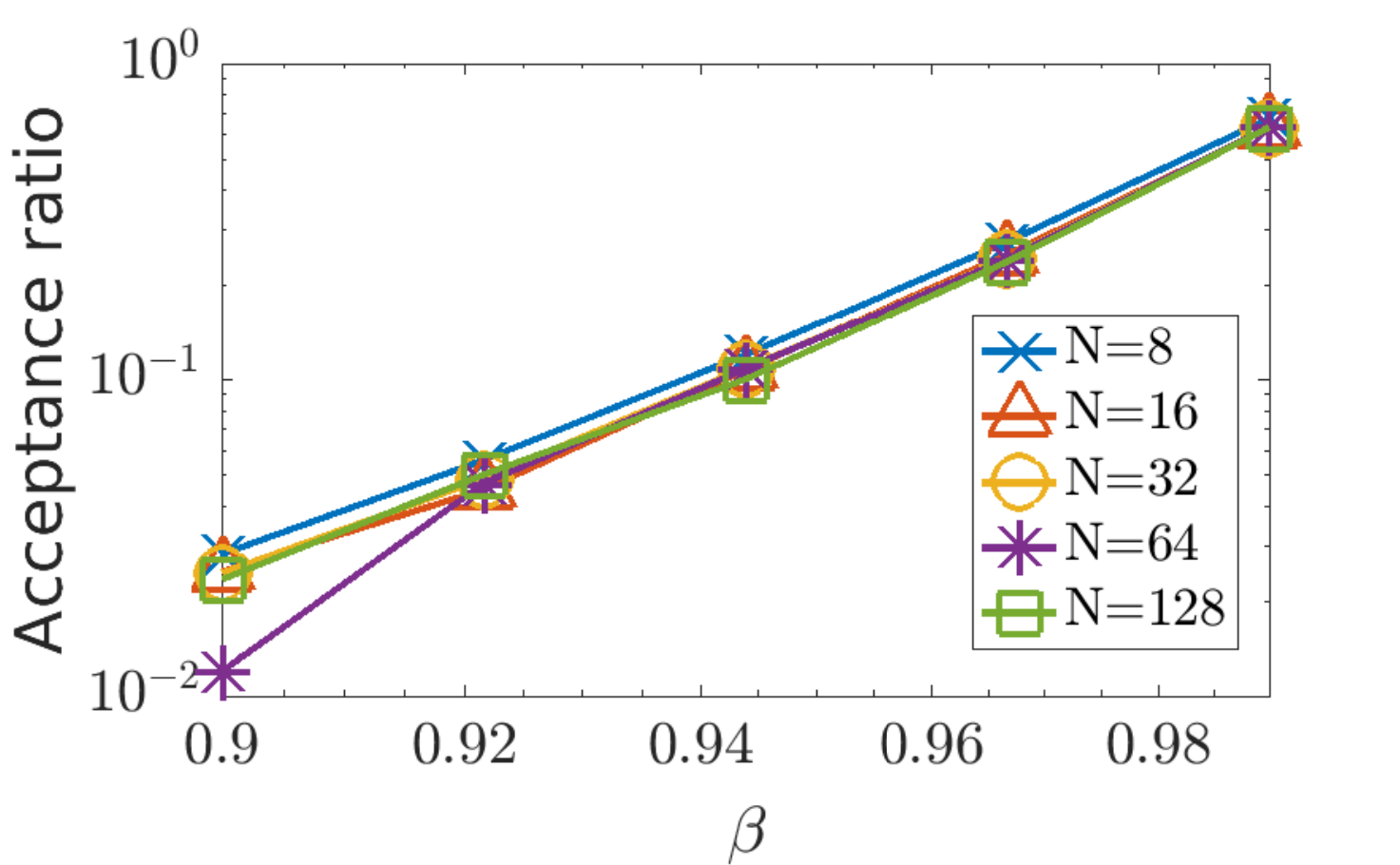}
   \caption{Performance indicators of the lifted RCAR algorithm for the deconvolution problem of 
Example 7 with $p=2/3$.
a) Approximated ESS of the chain components  for different values of $N$  and 
$\beta = 0.97$ per $10^4$ MCMC steps. b) Average acceptance ratio  for 
different values of $N$ and $\beta$ computed over $2 \times 10^5$
iterations after burnin and averaged over five restarts of the chain.}
\label{tab:chain-ESS-deconvolution}
\end{figure}

Since  the lifted RCAR algorithm is reversible in infinite dimensions we expect the 
acceptance ratio to remain bounded away from zero as $N$ becomes large.
In Figure~\ref{tab:chain-ESS-deconvolution}(b) we plot the average  acceptance ratio of
lifted RCAR for different values of  $\beta$. As before we fixed $p =2/3$ and 
$\lambda = 1$. We used a burnin  of $5 \times 10^4$ steps and 
computed the average acceptance rates over $2 \times 10^5$ steps with five restarts and averaged the acceptance ratios over the five trials. The acceptance ratios 
remained more or less consistent as a function of $N$ which is
in line with the reversibility of the infinite-dimensional limit of
the algorithm.

\subsubsection{Effect of  hyperparameters $p$ and $\lambda$}
We now study the effect of the hyperparameters $p$ and $\lambda$
on the posterior as well as performance of the lifted RCAR algorithm.
In all of the examples below we fixed $N = 32$ and $\beta = 0.97$.

First,
we considered different values of $p= 1, 4/5, 3/5, 2/5, 1/5$. Figure~\ref{fig:deconvolution-p-study-wavelet} depicts the posterior mean and standard deviation of the 
first eight wavelet modes of the Markov chain for different $p$.
In Figure~\ref{fig:deconvolution-p-study-wavelet}(a) we observe that for all values of $p$ the 
posterior mean was consistent and varied only slightly. While the posterior mean
appeared to be insensitive to choice of $p$ the posterior standard deviation 
seems to be quite sensitive to  $p$. 
This is evident in Figure~\ref{fig:deconvolution-p-study-wavelet}(b)
where  the standard deviation of the higher modes reduced with $p$. 

Next, we considered the effect of $p$ on RCAR's performance. Figure~\ref{tab:deconvolution-p-study}(b) shows that the ESS dropped when $p$ was too small or
too big. We also see that for fixed $\beta$ the acceptance ratio dropped when $p$ becomes larger.
Similarly, the acceptance ratio increased as $p$ was reduced. Overall, we conclude that
the optimal choice of the step size  $\beta$ is sensitive to the choice of $p$.

\begin{figure}[htp]
  \centering
  \includegraphics[width= .48 \textwidth]{./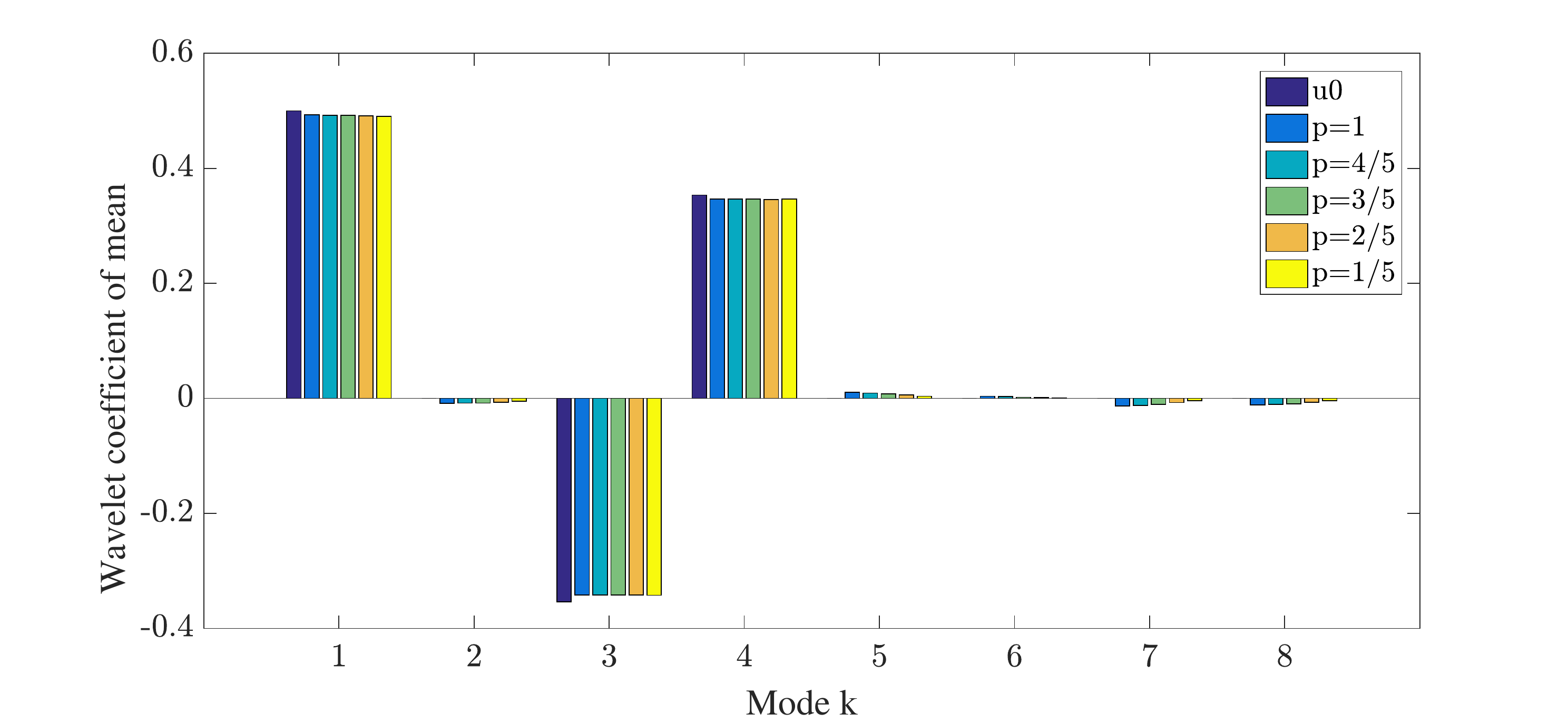}
\quad 
  \includegraphics[width= .48 \textwidth]{./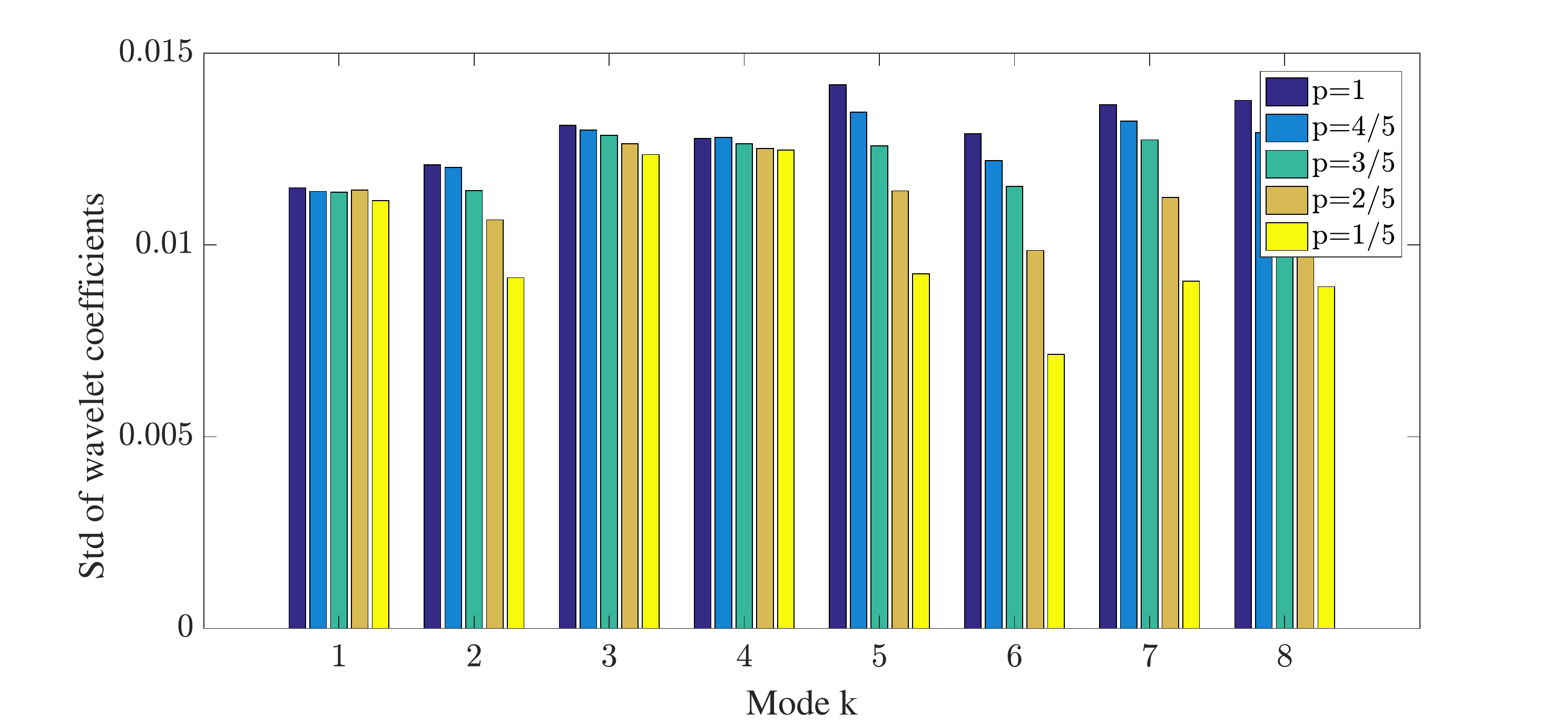}
  \caption{Posterior mean and standard deviation  of  the first eight wavelet modes in the deconvolution problem for different values of the parameter $p$.}
  \label{fig:deconvolution-p-study-wavelet}
\end{figure}

\begin{figure}[htp]
\centering
\raisebox{.22\textwidth}{a)}
\includegraphics[height=.25\textwidth]{./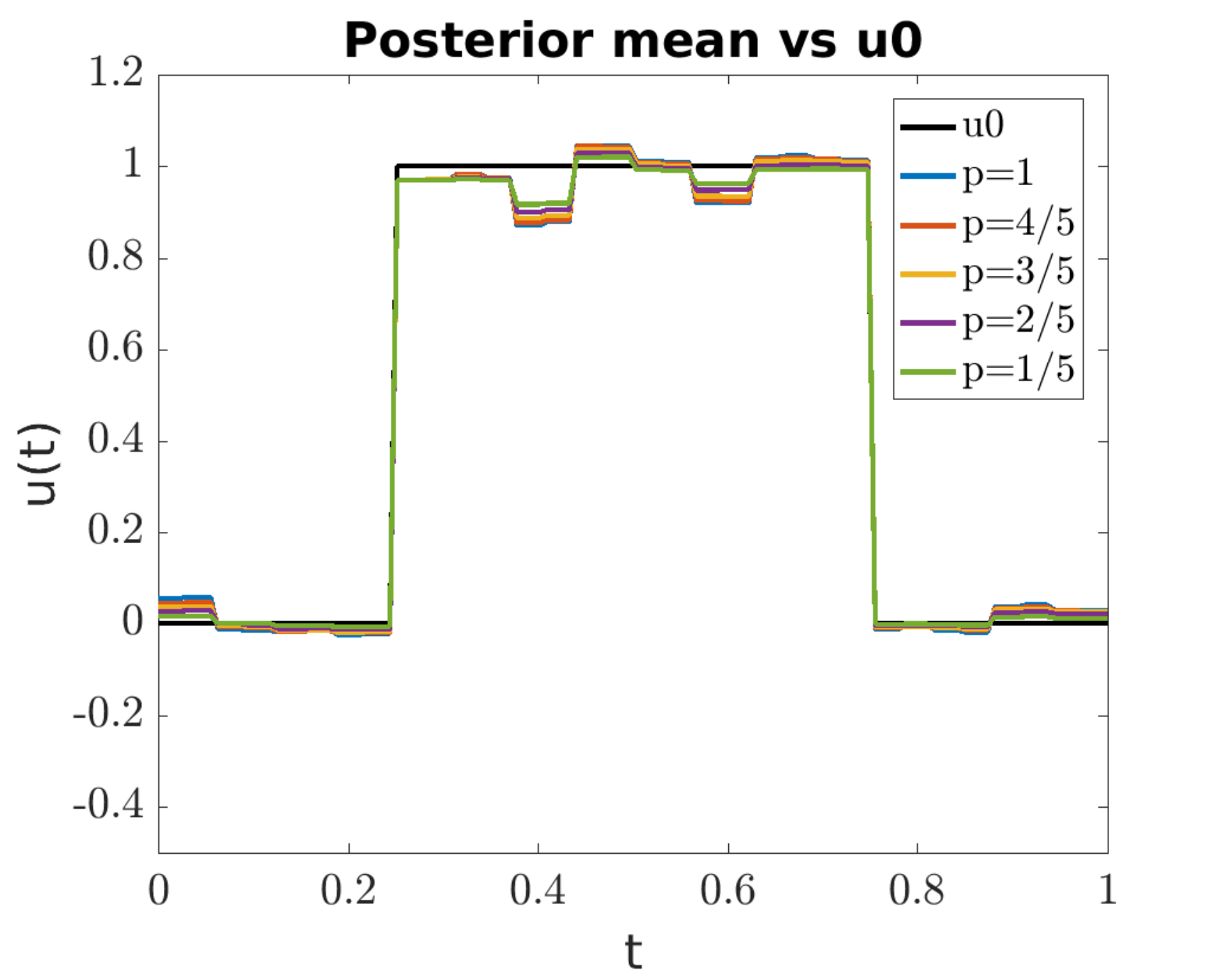}
\raisebox{.22\textwidth}{b)}
  \raisebox{.13\textwidth}{\begin{tabular}{ |c | c | c| c | c| }
              \hline
              $p$ & $\min$ ESS & mean ESS & $\max$ ESS & mean $a(\cdot,\cdot)$ \\ \hline 
                 1 & 14 & 34   & 84  & 0.15\\ \hline 
                 4/5 & 15 & 44  & 103 & 0.22\\ \hline 
                 3/5 & 20 & 54 & 107 & 0.31 \\\hline 
                 2/5 &  16& 68 & 113 & 0.45\\ \hline 
                 1/5 & 13 & 79 & 120& 0.62\\ \hline 
            \end{tabular}}
\qquad  \qquad
   \caption{a) Posterior mean of the deconvolution problem of Example 7 for different choices of 
the hyperparameter $p$. b) Certain statistics on the 
ESS and average acceptance ratio of the components of the Markov chains for different values of $p$ as 
well as the average acceptance ratio computed over $5 \times 10^5$ samples
and for $\beta= 0.97$. Smaller values of $p$ resulted in higher acceptance rate.
Minimum ESS  deteriorated as $p$ became too large or too small indicating that
$\beta$ should be tuned for different values of $p$ to achieve
optimal performance.}
\label{tab:deconvolution-p-study}

\end{figure}

Finally we consider the  $\lambda$ parameter. We fixed $p=2/3$ and varied 
$\lambda$ between $1/4$ and $4$. Recall that $\lambda$ controls the global variance of the 
wavelet modes of the solution.
 Figure~\ref{fig:deconvolution-lambda-study-wavelet}(a)
 shows empirical posterior mean and standard deviation of the first eight wavelet modes for
 different values of $\lambda$. The posterior mean was
somewhat sensitive to the choice of $\lambda$ specially for
the higher wavelet modes. This sensitivity to $\lambda$ is  evident in
Figure~\ref{tab:deconvolution-lambda-study}(a) where the posterior mean seems to have
higher variation for larger $\lambda$. We can also see the effect of
$\lambda$ in the posterior standard deviations. 
Figure~\ref{fig:deconvolution-lambda-study-wavelet}(b) shows that increasing $\lambda$ resulted
in increased  posterior variance which is expected considering that 
posterior variance is closely related to that of the prior and the measurement noise. Finally, 
in Figure~\ref{tab:deconvolution-lambda-study}(b) we present the average acceptance ratio 
and ESS of the Markov chains for different values of $\lambda$. All of these simulations 
shared the same value of $\beta = 0.97$. We observe that both 
the average acceptance ratio and ESS dropped as $\lambda$ was increased.

\begin{figure}[htp]
\centering
\raisebox{.22\textwidth}{a)}
\includegraphics[height=.25\textwidth]{./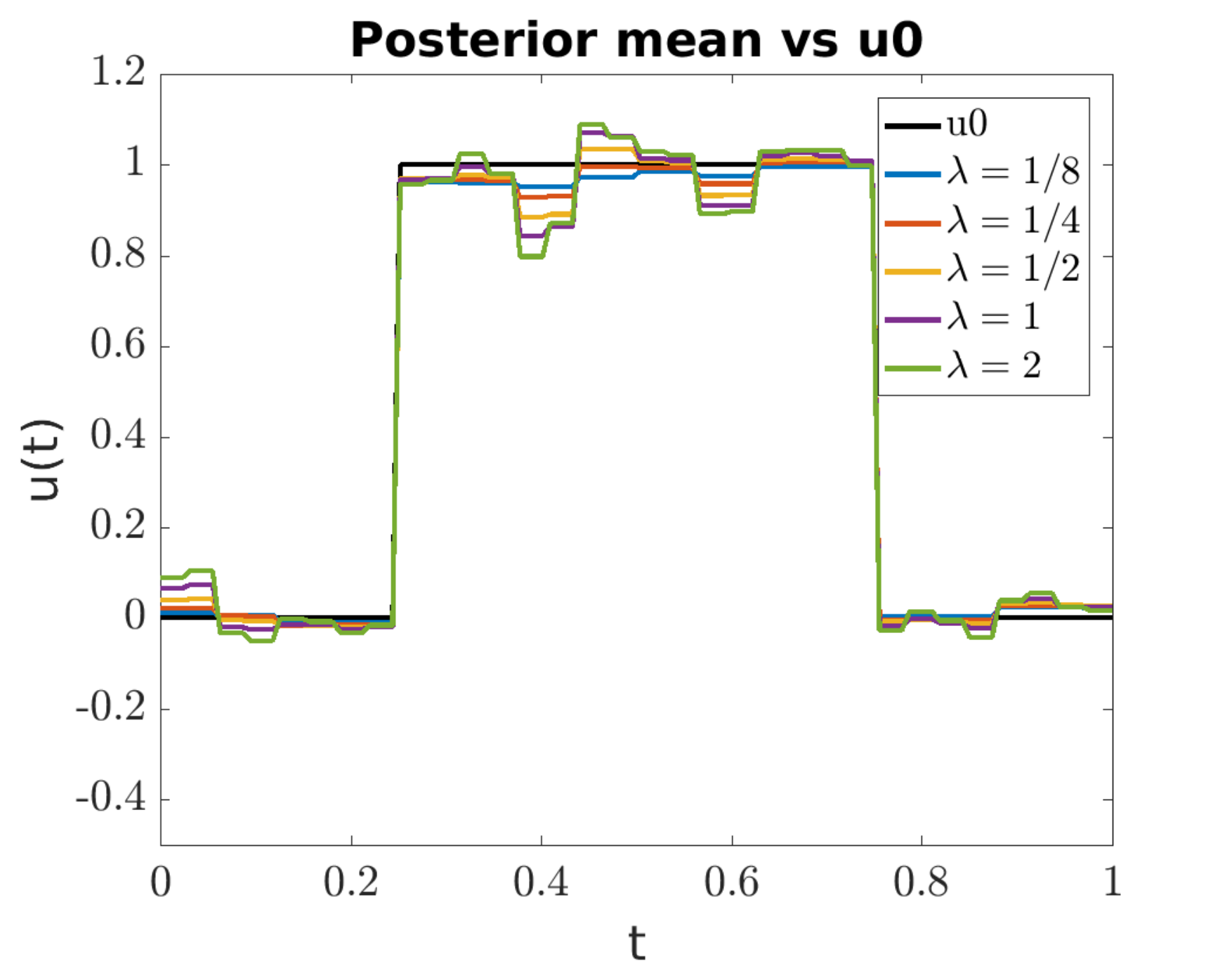}
\raisebox{.22\textwidth}{b)}
  \raisebox{.13\textwidth}{\begin{tabular}{ |c | c | c| c | c| }
              \hline
              $\lambda$ & $\min$ ESS & mean ESS & $\max$ ESS &mean $a(\cdot,\cdot)$ \\ \hline  
                 1/4 & 29 & 79  & 183& 0.47\\ \hline 
                 1/2 & 16 & 65 & 147& 0.37 \\\hline 
                 1 &  20& 50 & 104 & 0.27\\ \hline 
                 2 & 19 & 35 & 69 & 0.18\\ \hline
                 4 & 10      & 20     & 38     & 0.12 \\ \hline
            \end{tabular}}
\qquad  \qquad
   \caption{a) Posterior mean of the deconvolution problem of Example 7 for different choices of 
the hyperparameter $\lambda$. For larger values of $\lambda$ the posterior mean tends to 
be less regular. b) Certain statistics on the 
ESS of the components of the Markov chain for different values of $\lambda$ as 
well as the average acceptance ratio computed over $5 \times 10^5$ samples.
Both acceptance ratio and ESS  deteriorated with larger values of $\lambda$.}
\label{tab:deconvolution-lambda-study}

\end{figure}

\begin{figure}[htp]
  \centering
  \includegraphics[width= .48 \textwidth]{./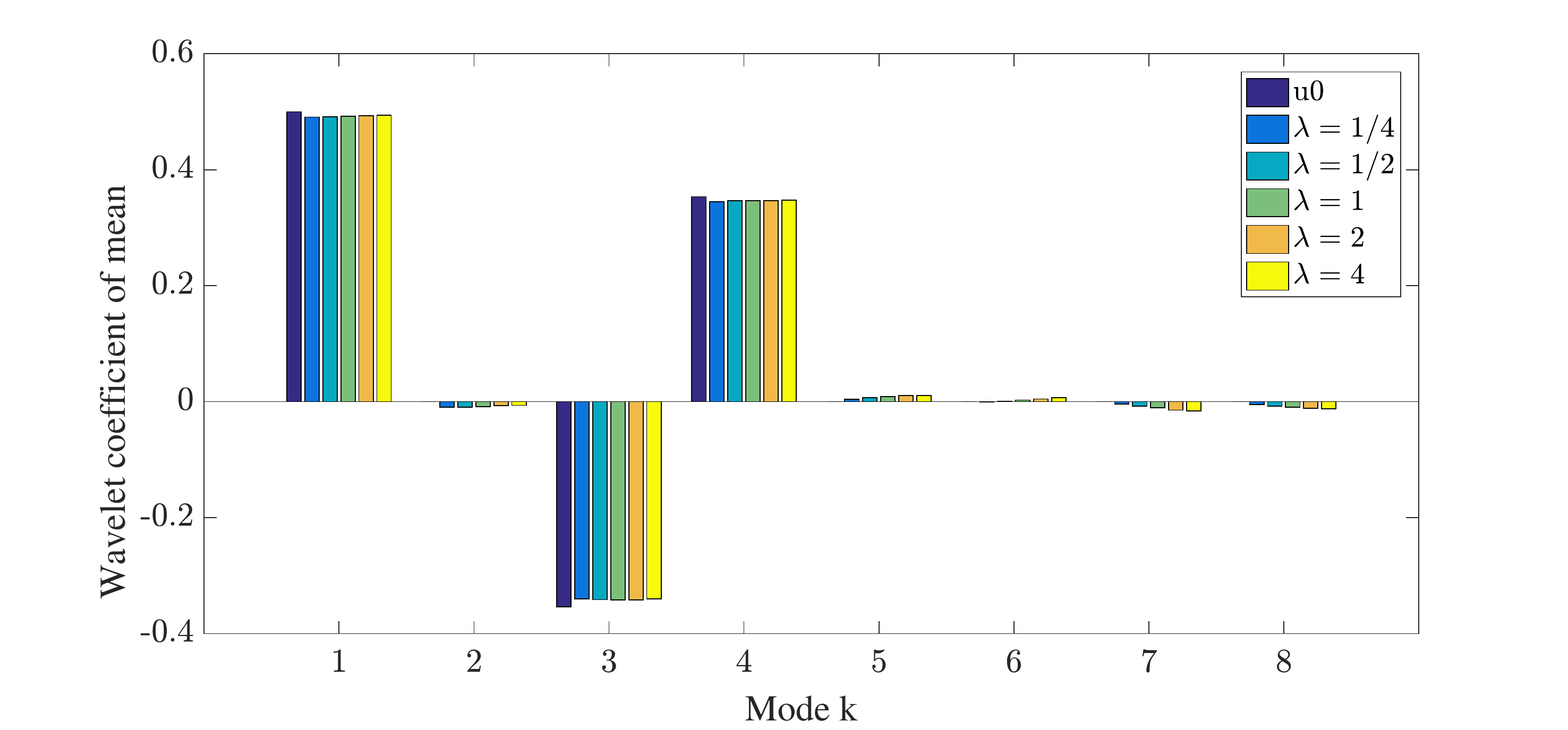}
  \quad
  \includegraphics[width= .48 \textwidth]{./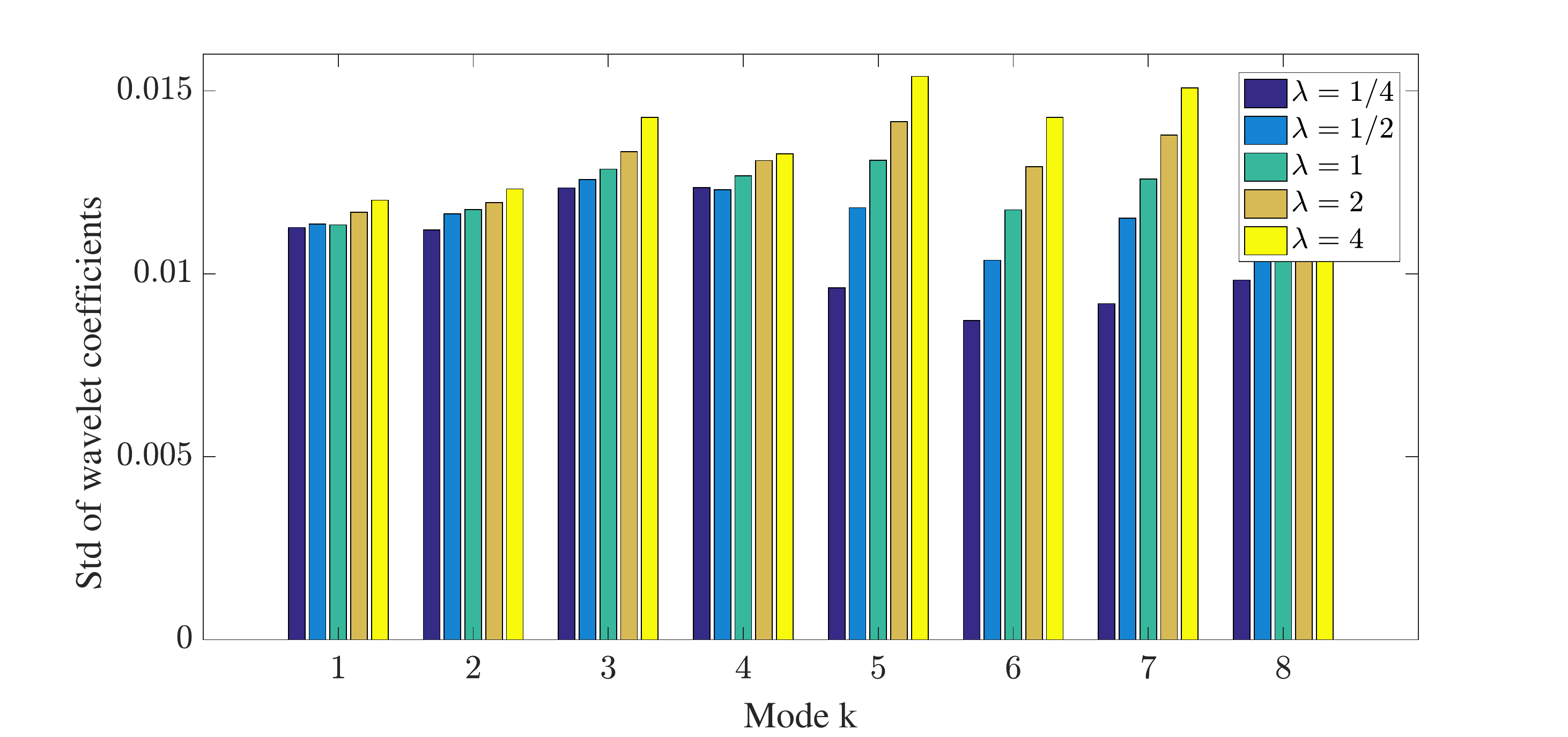}
  \caption{Posterior mean and standard deviation  of  the first eight wavelet modes in the deconvolution problem for different values of the parameter $\lambda$.}
  \label{fig:deconvolution-lambda-study-wavelet}
\end{figure}

\subsubsection{Effect of kernel width $\varepsilon$ and correlations in the data}\label{sec:kernel-width-study}

For our final set of experiments we considered the effect of the kernel width $\varepsilon$
on the quality of the posterior mean as a pointwise approximation to $u_0$.
Intuitively, larger values of $\varepsilon$ result in more smoothing
that in turn results in more correlated measurements  and overall less
information in the data
as evident in Figure~\ref{fig:deconvolution-data-kernel-width-study}. Throughout these experiments we used $N=32$ and $\beta = 0.97$.
 In Figure~\ref{tab:deconvolution-kw-study-mean-ESS}(a)
 we show the posterior means for different $\varepsilon$ and compare them to $u_0$.
 We observe more deviation from $u_0$
for larger values of $\varepsilon$ which is expected following our intuition that
$y$ is less informative when the forward map is more smoothing. This is also
evident in the average acceptance ratios reported in Figure~\ref{tab:deconvolution-kw-study-mean-ESS}(b). Since $y$ was less informative for larger $\varepsilon$, the posterior was more
dominated by the
prior rather than the likelihood. Since the lifted RCAR algorithm is prior reversible
 we expect it to perform better with larger $\varepsilon$.
More evidence of this phenomenon can be seen in Figure~\ref{fig:deconvolution-kw-study-wavelet}(b)
where 
larger values of $\varepsilon$  resulted in larger posterior uncertainty
in the wavelet modes indicating that the likelihood is less dominant.
We highlight that regardless of the effect of $\varepsilon$ on the posterior
uncertainties the posterior mean remained a qualitatively good approximator of $u_0$.

\begin{figure}[htp]
  \centering
  \includegraphics[height=.24\textwidth]{./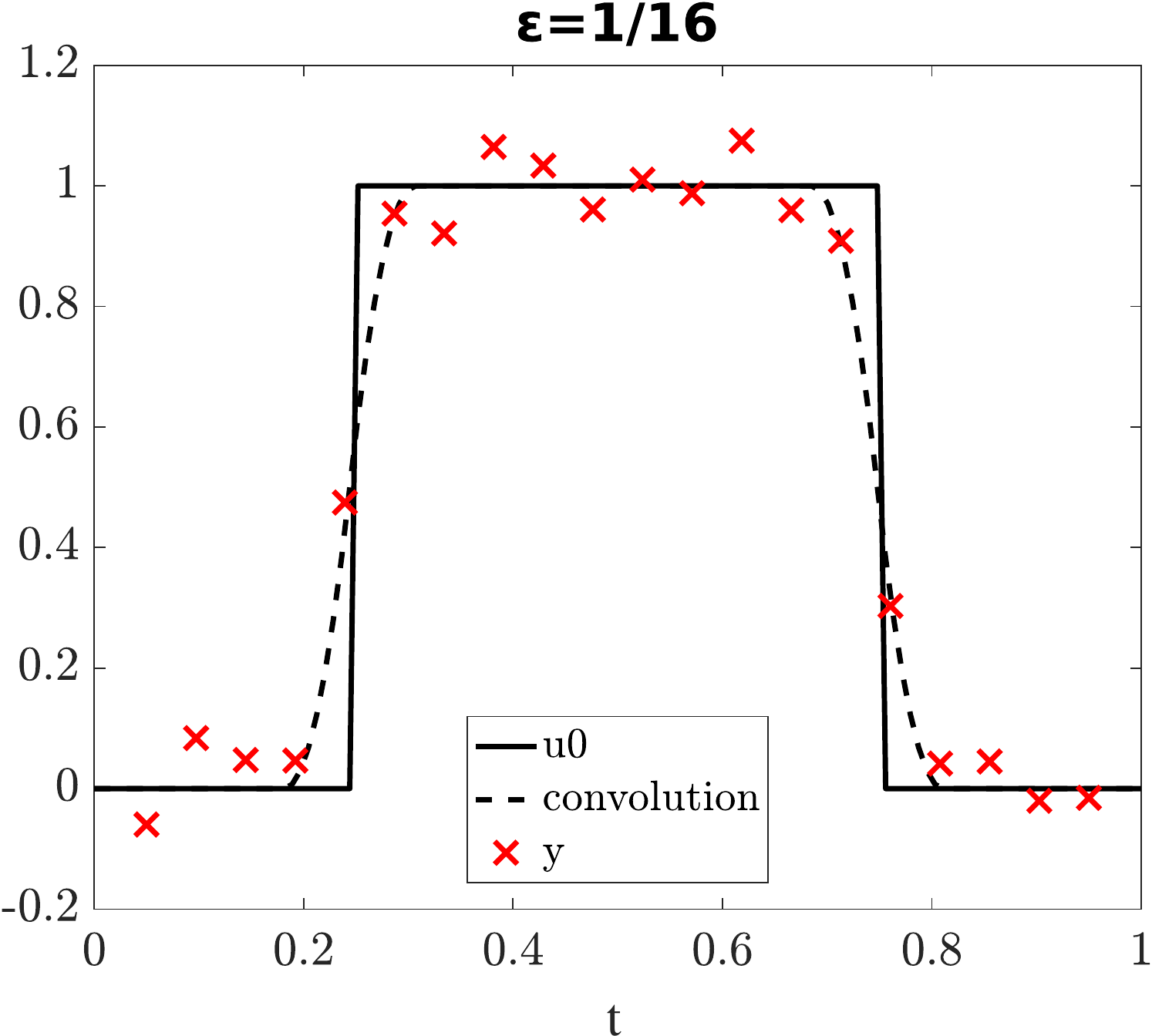}\qquad
  \includegraphics[height=.24\textwidth]{./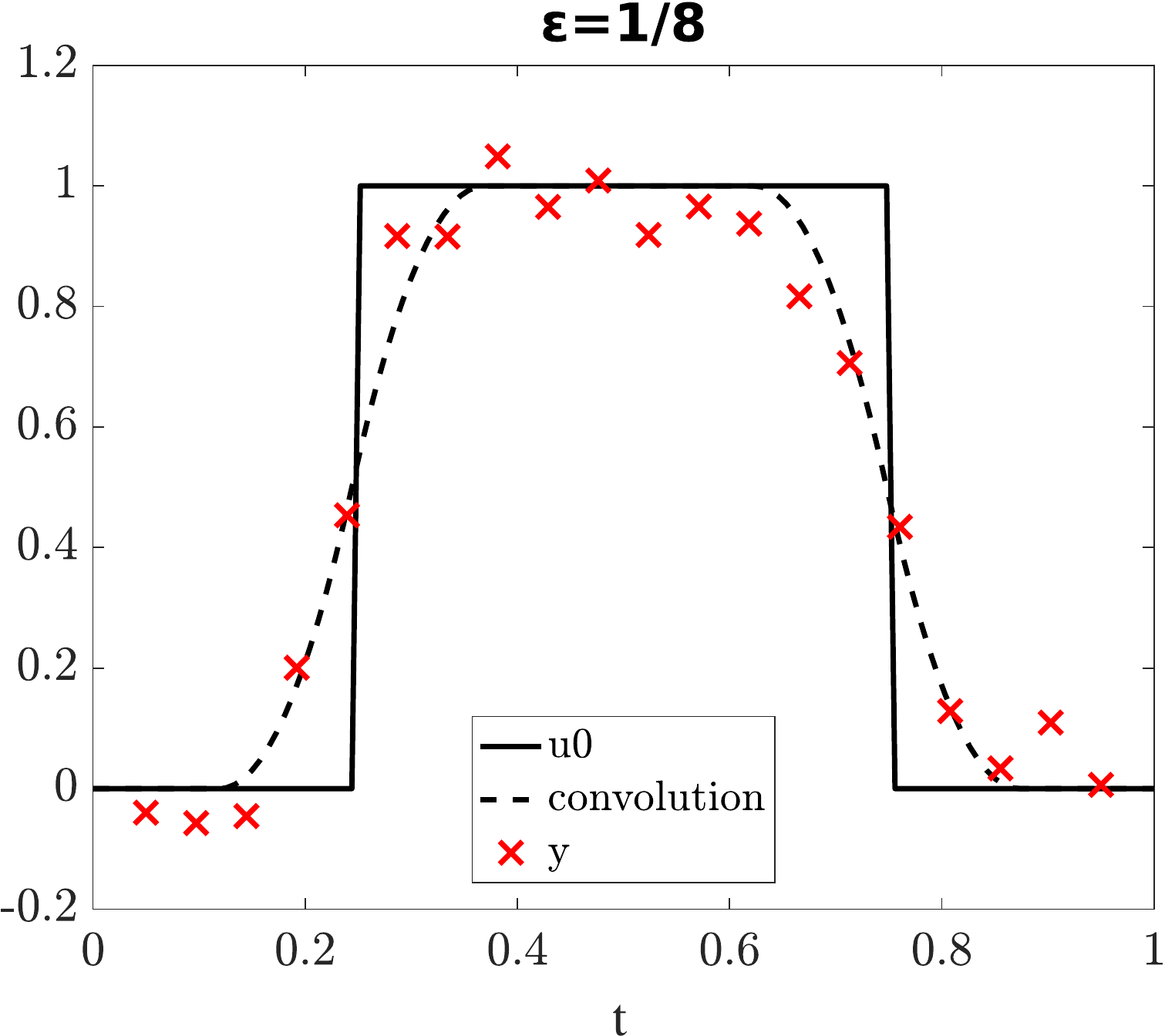}\qquad
  \includegraphics[height=.24\textwidth]{./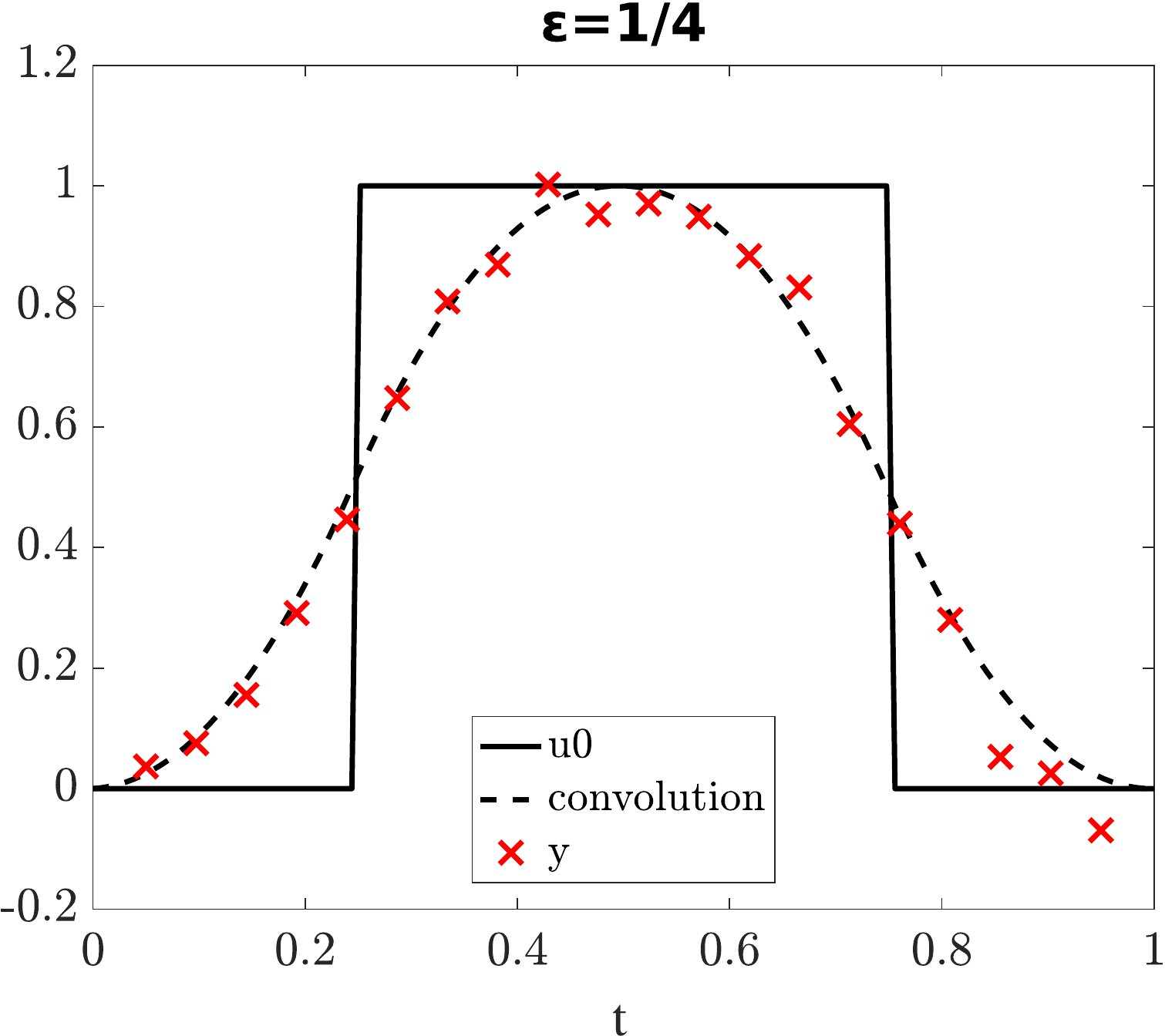}
  \caption{True function $u_0$ along with its convolved version with
    various kernel widths $\varepsilon = 1/16, 1/8, 1/4$ and a realization of
    the noisy pointwise measurements $y$
    for each kernel width. Larger values of $\varepsilon$ result in more smoothing
  which in turn results in more correlation in $y$.}
  \label{fig:deconvolution-data-kernel-width-study}
\end{figure}

\begin{figure}[htp]
\centering
\raisebox{.22\textwidth}{a)}
\includegraphics[height=.24\textwidth]{./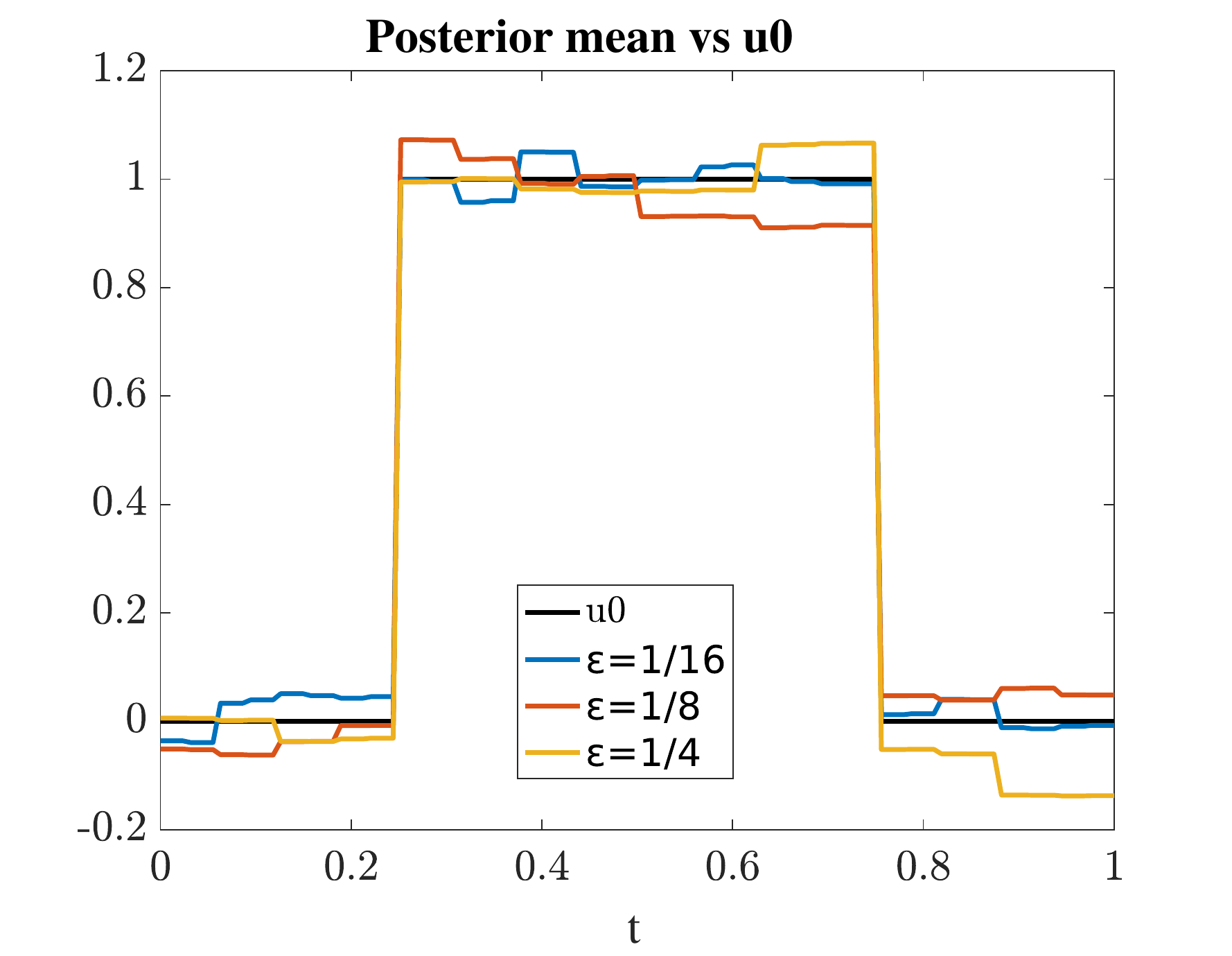}
\raisebox{.22\textwidth}{b)}
  \raisebox{.13\textwidth}{\begin{tabular}{ |c | c | c| c | c| }
              \hline
              $\varepsilon$ & $\min$ ESS & mean ESS & $\max$ ESS &mean $a(\cdot,\cdot)$ \\ \hline  
                 1/16 & 15 & 43  & 123& 0.27\\ \hline 
                 1/8 & 30 & 48 & 102& 0.31 \\\hline 
                 1/4 &  24& 47 & 87 & 0.37\\ \hline 
            \end{tabular}}
\qquad  \qquad
   \caption{a) Posterior mean of the deconvolution problem of Example 7 for different choices of 
     the kernel width $\varepsilon$. We observe more deviation from the
     true solution $u_0$ for larger values of $\varepsilon$. b) Certain statistics on the 
ESS of the components of the Markov chain for different values of $\varepsilon$ as 
well as the average acceptance ratio computed over $5 \times 10^5$ samples. All simulations
are performed with $N=32$ and $\beta = 0.97$.}
\label{tab:deconvolution-kw-study-mean-ESS}

\end{figure}

\begin{figure}[htp]
  \centering
  \includegraphics[width= .48 \textwidth]{./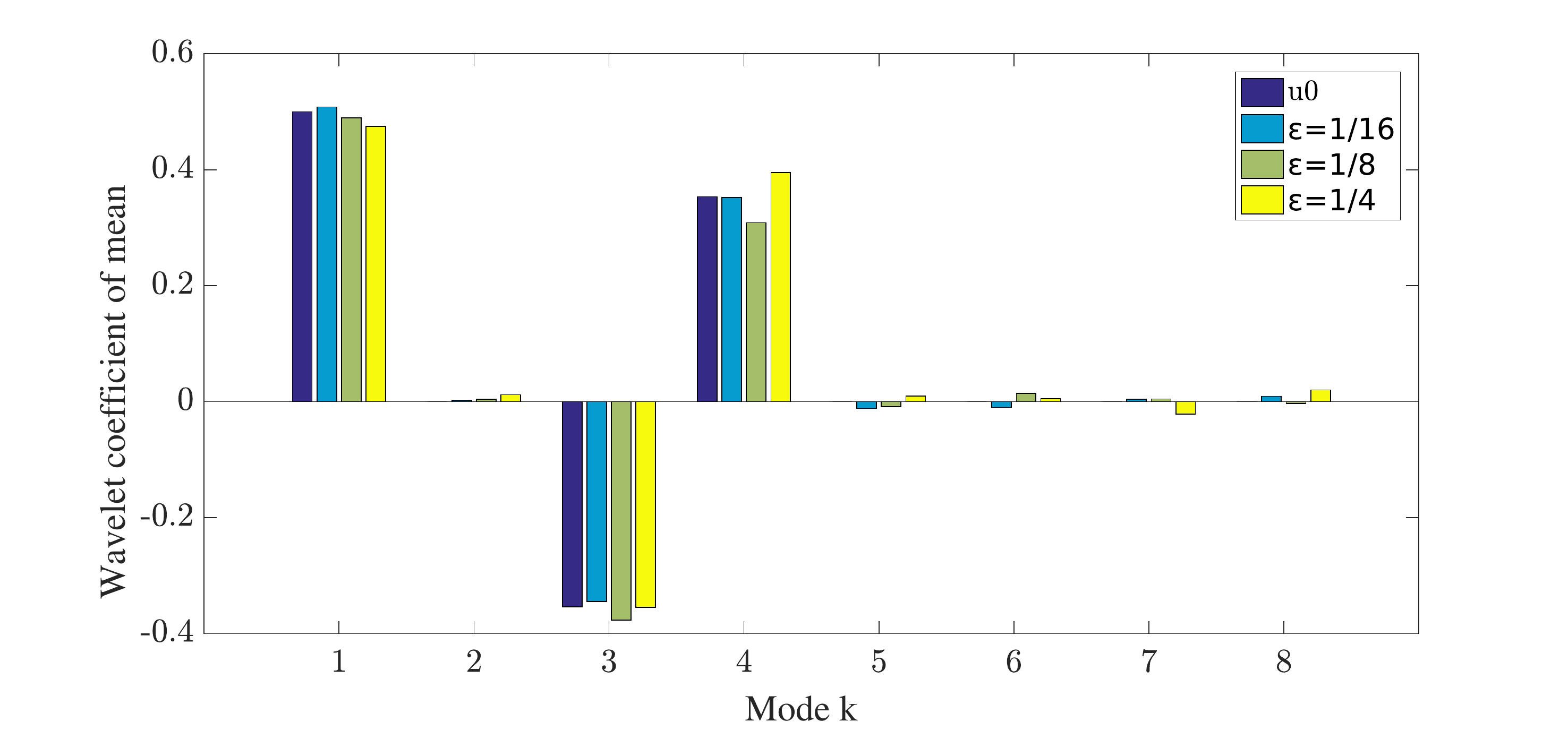}
\quad 
  \includegraphics[width= .48 \textwidth]{./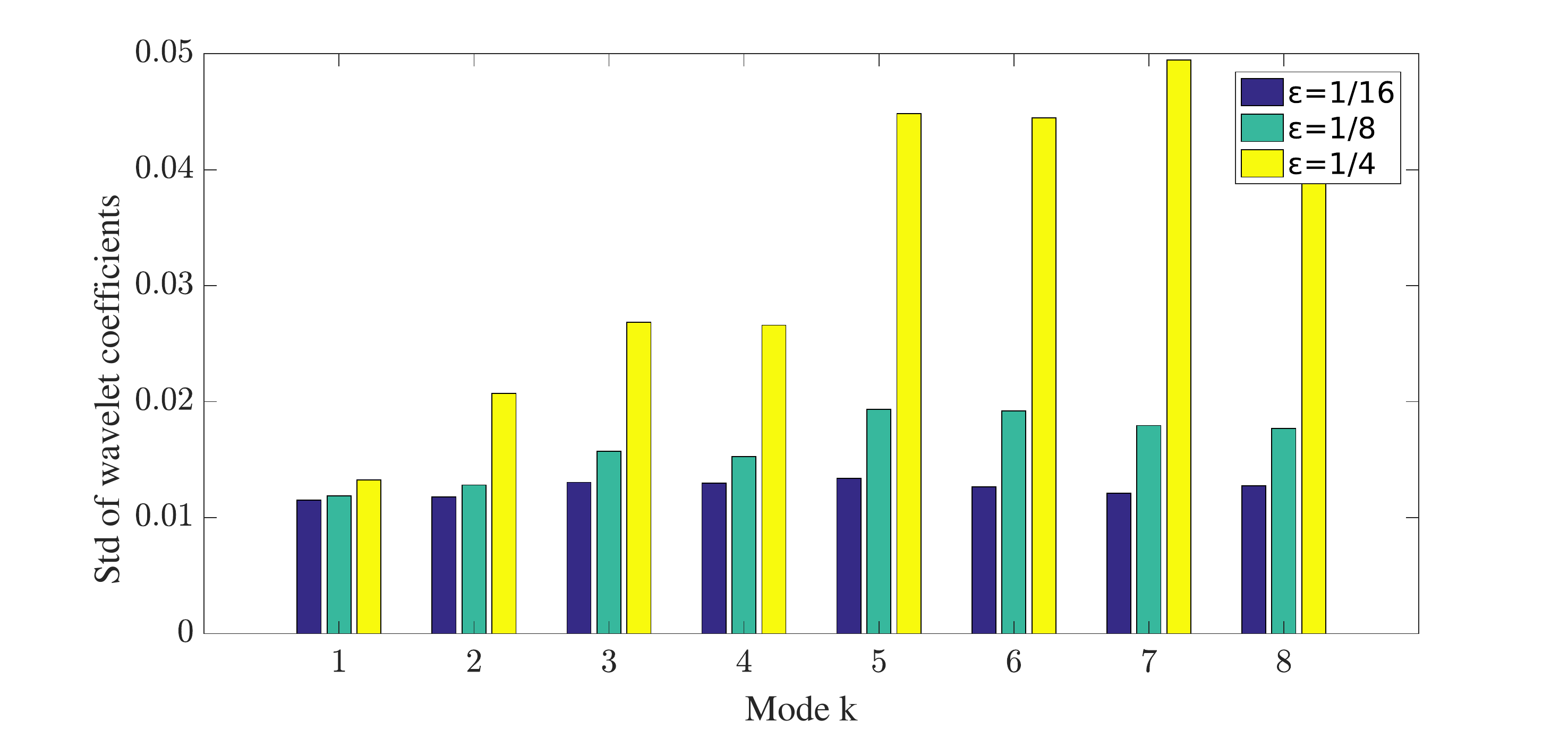}
  \caption{Posterior mean and standard deviation  of  the first eight wavelet modes in the deconvolution problem of Example 7 for different values of the kernel width $\varepsilon$.
    Posterior standard deviations appear to grow with $\varepsilon$. This is in line
    with the intuition that more smoothing in the forward map results in more
  uncertainty in the posterior due to less information in the data $y$.}
  \label{fig:deconvolution-kw-study-wavelet}
\end{figure}

\section{Closing remarks}
In the beginning of this article we set out to design algorithms for sampling  
measures $\nu$ that are absolutely continuous with respect to an underlying
non-Gaussian prior measure $\mu$. We  focused on the class of MH algorithms that utilize
$\mu$-reversible proposal kernels and showed that such MH algorithms are reversible
with respect to $\nu$ under mild conditions.
We then introduced two classes of algorithms called RCAR and SARSD that use
autoregressive type proposals that are $\mu$-reversible for certain
priors $\mu$. The RCAR algorithm is applicable to  gamma-type distributions
while SARSD can be applied when $\mu$ is SD. While SARSD is in principle
more widely applicable, it is often difficult to implement due to
issues with computing the time-reversal of AR(1) proposal kernels. 

Afterwards we introduced the Bessel-K priors as a concrete example of a 
non-Gaussian prior for which the RCAR and SARSD algorithms are applicable.
We further motivated  the Bessel-K  priors as interesting  
 candidates for modelling sparse or compressible parameters. We then 
 derived different versions of RCAR and SARSD  for the Bessel-K priors
 on Hilbert spaces and  studied the performance of both algorithms 
in  various numerical examples. 

\subsection{Future directions and open problems}
Our exposition  is a step towards the design of MCMC algorithms 
that are tailored to  
highly non-Gaussian priors. Our overall approach is that by analyzing the 
prior measure one can design proposals that result in  well-defined  and
efficient algorithms
in general 
state spaces.
This is in contrast with existing approaches in the \myhl{literature \cite{chen2018robust,
  marzouk-randomize-optimize} 
 that often introduce a non-linear mapping that transforms the prior to a Gaussian or another
 well-known measure
and modify the forward map or the likelihood to allow for application of } conventional sampling
algorithms with Gaussian priors. Our approach gives rise to many interesting 
questions that, to the knowledge of the author, have not been addressed in the 
literature. 

The first question is whether our approach can be extended to larger 
classes of prior measures. That is, whether it is possible to design RCAR or SARSD
algorithms for priors that belong to larger classes than 
the SD class or generalizations of the gamma distribution.
Good candidates here are the classes of infinitely-divisible 
 or convex priors that were discussed in 
 \cite{hosseini-convex, hosseini-sparse} or the stable priors of \cite{sullivan}.
 Another good candidate is  limit distributions of
 random coefficient autoregressive processes that are time-reversible.

 Another promising avenue of research is the design of likelihood aware
 proposal kernels for non-Gaussian priors in contrast with the 
prior preserving kernels of this work. The RCAR and SARSD algorithms 
perform well when the likelihood  is not 
dominant and the posterior  is close to the prior. Intuitively, this 
is due to the fact that the proposal kernel of RCAR and SARSD 
depends only on the prior and not the likelihood. Then a natural question 
is whether information regarding the gradient of the likelihood can be 
incorporated into the proposals to construct more efficient 
algorithms  similarly to MALA 
 or HMC. 

Finally, we note that the design and implementation of the SARSD algorithm relies on identifying   
 the innovation of the underlying 
 SD prior and the associated reverse kernel of the forward AR(1) proposal. A large portion of the
 existing literature on SD measures focuses on identifying the
 innovation  measures and so  a lot can be said about the forward kernel of
 SARSD. However, results on the reverse kernels are
 scarce and this is the main hurdle in implementing SARSD
 with most SD prior measures besides the exponential distribution and its
 extensions.

 \section*{Acknowledgements}
 The author is thankful to Profs. Derek Bingham, David Campbell, Nilima Nigam and Andrew M. Stuart 
 as well as Dr. James E. Johndrow and Sam Powers for interesting discussions and comments.
 We also owe a debt of gratitude to the anonymous reviewers whose careful
 comments and questions helped us improve an earlier version of this article.
 The author is also supported by a PDF fellowship granted by
 the Natural Sciences and Engineering Research Council of Canada.


\section*{References}
\bibliographystyle{abbrv}
       %
\bibliography{ref}


\appendix
\section*{Appendix}
\section{Self-decomposable measures}\label{SD-measures}

 We say 
 $\mu \in P(X)$ is SD if for every choice of $\beta$ there exists a  $\mu_\beta \in P(X)$ 
so that \cite{kumar-self} 
\begin{equation}\label{SD-characteristic}
  \widehat{\mu}(\varrho) = \widehat{\mu}(\beta \varrho) \widehat{\mu}_\beta(\varrho)
  \qquad \forall \varrho \in X^\ast.
\end{equation}
Here $\widehat{\mu}$ is the characteristic function of $\mu$  and $X^\ast$ is the dual of $X$.
The measure $\mu_\beta$ is referred to as the {\it innovation} of $\mu$.
{In other words, a random variable 
  $\xi$ is SD if, for every $\beta \in (0,1)$ there exists an independent random variable $\xi_\beta$
  so that 
the law of $\xi_\beta$ coincides with the law of $\beta \xi + \xi_\beta$.} The SD random variables 
are a subclass of infinitely-divisible random variables \cite{steutel} that were 
first introduced by Paul L{\'e}vy (the SD class is also known as 
the class of L{\'e}vy L probability measures). 
The SD class includes well-known probability measures such as 
 Gaussian measures and certain generalizations of the Laplace and gamma distributions. 
For the most part, the theory of SD measures was developed in the 1960s
 in connection to the theory of  L{\'e}vy processes. A detailed study of the 
 SD class can be found in the works of Kumar and Schreiber~\cite{kumar-self} and Urbanik~\cite{urbanik-self} as well as Jurek~\cite{jurek-selfdecomposability-exception-or-rule,jurek-different-aspects, jurek-integral} and
 Barndorff-Nielsen~\cite{barndor-multivariate, barndorf-selfdecomposable-free-probability}.
 We refer the reader to \cite[Ch.~V]{steutel} 
for an accessible introduction to real valued SD random variables.

\section{Some random variables on the real line}\label{app:standard-rvs}
Here we gather the definition of some standard random variables that are
used throughout the article in order to specify the  parameterizations
used in our article.

 \begin{definition}[Gamma random variable]\label{generalized-gamma}
  A positive random variable $\xi$ is distributed according to a 
  gamma distribution $G(p, \sigma)$ with  shape parameter $p > 0$
   and scale parameter $\sigma >0$ 
if its law has Lebesgue density 
\begin{equation}\label{generalized-gamma-density}
\Gamm(p, \sigma; t) = \frac{1}{\sigma \Gamma (p)} \left(
  \frac{t}{\sigma}\right)^{p-1} \exp \left( - \frac{t}{\sigma} 
\right)  \mb{1}_{(0, \infty)}(t) \qquad \text{for} \qquad t \in \reals.
\end{equation}
\end{definition}

 \begin{definition}[Exponential random variable]\label{exponential-def}
  A positive random variable $\xi$ is distributed according to a 
  exponential distribution $\Exp(\sigma)$ with parameter $\sigma > 0$ 
if its law has Lebesgue density 
\begin{equation}\label{exponential-density}
  \Exp( \sigma; t) = \frac{1}{\sigma}  \exp \left( - \frac{t}{\sigma} 
\right)  \mb{1}_{(0, \infty)}(t) \qquad \text{for} \qquad t \in \reals.
\end{equation}
\end{definition}

 \begin{definition}[Laplace random variable]\label{Laplace-def}
  A real valued random variable $\xi$ is distributed according to a 
  Laplace distribution $\Lap(\sigma)$ with parameter $\sigma > 0$ 
if its law has Lebesgue density 
\begin{equation}\label{Laplace-density}
  \Lap( \sigma; t) = \frac{1}{2\sigma}  \exp \left( - \left|\frac{t}{\sigma} \right| 
\right) \text{for} \qquad t \in \reals.
\end{equation}
\end{definition}

\begin{definition}[Beta random variable]\label{beta-def}
  A random variable $\xi \in (0,1)$ is distributed according to a 
  beta distribution $\Beta(p, q)$ with parameters $p,q > 0$ 
if its law has Lebesgue density 
\begin{equation}\label{beta-density}
  \Beta(p, q; t) = \frac{\Gamma(p+q)}{\Gamma(p)\Gamma(q)} t^{p-1}(1 - t)^{q-1} \mb{1}_{(0,1)}(t)
  \qquad \text{for} \qquad t \in \reals.
\end{equation}
\end{definition}

\begin{definition}[Bernoulli random variable]\label{bernoulli-def}
  A random variable $\xi \in \{0, 1\}$ is distributed according to a 
  Bernoulli distribution $\Bern(p)$ with parameter $p \in (0,1)$ 
if 
\begin{equation}\label{bernoulli-density}
  \mbb P(\xi = 0) = 1 - p, \qquad\mbb P(\xi = 1) = p.
\end{equation}
\end{definition}

 \begin{definition}[Poisson random variable]\label{poisson-def}
   A random variable $\xi \in \mbb N$ is distributed according to a Poisson
   distribution $\Pois(c)$ with rate $c > 0$ if
   \begin{equation*}
     \mbb{P}(\xi = k) = \frac{c^k \exp(-c)}{k!}, \qquad \forall k \in \mbb N.
   \end{equation*}
\end{definition}

\section{Properties of gamma and exponential random variables}
Here we collect some results on exponential and gamma distributions that are used
in the design of the RCAR and SARSD algorithm for 1D Bessel-K priors in Section~\ref{sec:bessel-k}.

\subsection{Properties of the gamma distribution}\label{app:gamma-dist}

Observe that
$\Gamm( 1, \sigma)$ coincides with the law of an 
exponential random variable $\Exp(\sigma)$. A straightforward calculation shows that
the 
gamma distribution has bounded raw moments of all orders, in fact
\begin{equation}\label{gamma-raw-moments}
\int_0^\infty t^k\Gamm(p,\sigma;t) \dd \Lambda(t) = \frac{\sigma^{k}\Gamma\left( {k + p}\right)}{\Gamma( p)}.
\end{equation}

\subsubsection{The thinned gamma process}\label{sec:thinned-gamma-process}
It is well-known \cite[Ch.~17.6]{kotz-univariate-v1} that given independent
random variables $\xi \sim \Gamm(p_1, \sigma)$ and $\xi' \sim \Gamm(p_2, \sigma)$
then $\xi + \xi' \sim \Gamm(p_1 + p_2, \sigma)$
and $\frac{\xi}{\xi + \xi'} \sim \Beta(p_1, p_2)$ and furthermore $\xi + \xi'$ and $\frac{\xi}{\xi + \xi'}$ are independent. In light of this fact we consider an RCAR(1)
process of the form
\begin{equation*}
  u^{(n)} = \zeta^{(n)} u^{(n-1)} + w^{(n)}, \qquad \zeta^{(n)} \stackrel{iid}{\sim} \Beta(p\beta, p(1-\beta)),\quad w^{(n)} \stackrel{iid}{\sim} \Gamm(p(1-\beta), \sigma),
  \qquad 
\end{equation*}
for fixed parameter $\beta \in (0,1)$. The limit distribution of this RCAR(1) process
is precisely $\Gamm(p, \sigma)$ \cite{lewis1989gamma}. Moreover, this process is
time-reversible and so the transition kernel
\begin{equation}\label{RCAR-proposal-kernel-for-gamma-distribution}
  \mcl{Q}(u, \dd v) = \Law \{v = \zeta u + w, \qquad \zeta \sim \Beta(p\beta, p (1-\beta)),
  w \sim \Gamm(p(1- \beta), \sigma)\},
\end{equation}
 satisfies detailed balance with respect to $\Gamm(p, \sigma)$.

 \subsubsection{Self-decomposability}
The gamma distribution is  SD. We 
demonstrate this using the characteristic function of $\Gamm(p, \sigma)$.
Recall
\begin{equation}\label{gamma-characteristic-function}
\widehat{\Gamm}( p,\sigma; s) 
= ( 1-  is\sigma)^{-p}.
\end{equation}
Choose $\beta \in (0,1)$ and
define 
\begin{equation}\label{gamma-innovation-characteristic-function}
\widehat{\Gamm}_\beta( p, \sigma; s) := \frac{\widehat{\Gamm}(
  p,\sigma;s)}{\widehat{\Gamm}(p, \sigma; \beta s)} = \left(\frac{1 -
  i \beta s\sigma}{ 1 - is\sigma } \right)^{p} =  \left( \beta +   
\frac{1 - \beta}{ 1  - is \sigma} \right)^p.
\end{equation}
It is  straightforward to check that
$$
\widehat{\Gamm}(p, \sigma; s) = \widehat{\Gamm}(p, \sigma;\beta s) \widehat{\Gamm}_\beta(p,\sigma;s).
$$
But $\widehat{\Gamm}(p, \sigma; \beta s)$ is simply the characteristic function of $\beta \xi$.
Furthermore, 
$\widehat{\Gamm}_\beta( p,\sigma; 0) = 1$ and $|\widehat{\Gamm}_\beta(p,\sigma; s)|
\le 1$. In fact, $\widehat{\Gamm}_\beta$ is continuous and differentiable at 0 and so it is the characteristic function of a
random variable on $\reals$. We denote this variable by $\xi_\beta$ and
its law by ${\Gamm}_\beta(p, \sigma)$. On this account, we have the following decomposition of gamma 
random variables:
\begin{equation}\label{gamma-self-decomposable}
\xi  \dequal \beta \xi' + \xi_\beta,
\end{equation}
where $\xi, \xi' \sim \Gamm( p,\sigma)$ and $\xi_\beta \sim \Gamm_\beta(p,\sigma)$ are independent 
of each other.  Lawrance
\cite{lawrance} showed that $\xi_\beta$ is a compound Poisson
random variable of the form
\begin{equation}\label{gamma-innovation}
\xi_\beta \dequal \sum_{k=1}^\tau \beta^{\eta_k} \theta_k \qquad \text{where}
\qquad \tau \sim \text{Pois}(p \log(1/\beta)), \quad \eta_k \sim U(0,1), \quad \theta_k \sim \Exp(\sigma),
\end{equation}
{where $\tau$, $\eta_k$ and $\theta_k$ are all independent and the latter sequences are
  identically distributed and $U(0,1)$ denotes the uniform distribution
  on $[0,1]$.}
Note that with the above expression we can simulate $\xi_\beta$ exactly. This is not possible for general SD random variables. We also note that a more efficient recipe for simulating $\xi_\beta$
was discovered by Walker \cite{walker2000note}.

\subsection{Self-decomposability of the exponential distribution}\label{app:exponential-dist}
Using the fact that $\Exp(\sigma)  = \Gamm(1, \sigma)$ we have that
the characteristic function of an $\Exp(\sigma)$ random variable is given by
\begin{equation*}
  \widehat{\Exp}(\sigma; s) = (1 - i s\sigma )^{-1}.
\end{equation*}
Using the factorization
\begin{equation*}
  \frac{1}{1 - i s}  = \left( \frac{1}{1 - i\beta s}  \right) \left( \beta +
    \frac{1- \beta}{1 - i t}\right)
\end{equation*}
for $\beta \in (0,1)$
we infer that $\Exp(1)$ is SD and
\begin{equation*}
  \Exp_\beta(1)  = \Law\{ \zeta w, \qquad \zeta \sim \Bern(1- \beta), w \sim \Exp(1) \}.
\end{equation*}
We can then easily extend this to other exponential distributions to get 
\begin{equation}\label{exponential-innovation}
  \Exp_\beta(\sigma) = \Law\{ \zeta w, \qquad \zeta \sim \Bern(1-\beta), w \sim \Exp(\sigma) \}.
\end{equation}
Thus the transition kernel
\begin{equation}\label{exp-forward-kernel}
  \mcl Q(u, \dd v) = \Law\{v= \beta u + \zeta w, \qquad \zeta \sim \Bern(1-\beta), w \sim \Exp(\sigma) \}
\end{equation}
preserves the distribution $\Exp(\sigma)$ but it does not satisfy detailed balance.
However, we can directly compute the reverse kernel $\mcl Q^\ast$.
Consider, $u, w \stackrel{iid}{\sim} \Exp(1)$ and $\zeta \sim \Bern(1- \beta)$ then 
\begin{equation*}
  v =  \beta u + \zeta w,
\end{equation*}
and $v \sim \Exp(1)$.
We wish to find an expression for $u | v$ but this is not trivial since $v$ is not
independent of $\zeta$ and $w$. Let $f(v|u)$ denote the distribution of $v$ given $u$, i.e., 
\begin{equation*}
  f(v|u) = \beta \delta_{\beta u}(v) + (1- \beta) \exp( - (t - \beta u)) \mb{1}_{[\beta u, \infty)}(v).
\end{equation*}
Using Bayes' rule we now have for $v >0$
\begin{equation*}
  \begin{aligned}
    f(u|v)   & =  \frac{f(v|u) f(u)}{f(v)} \\
    & = \frac{1}{\exp( -v)} \left(\beta \delta_{\beta u}(v) + (1- \beta) \exp( - (v - \beta u)) \mb{1}_{[\beta u, \infty)}(v)\right) \exp( - u) \mb{1}_{[0, \infty)}(u) \dd u   \\
    & =  \beta \delta_{\beta u}(v) \exp( - u + v) \dd u  + (1 - \beta) \exp(  - (1 - \beta) u ) \mb 1_{[\beta u, \infty)}(v)  \mb 1_{[0, \infty)}(u) \dd u   \\
    & = \beta \delta_{v/\beta}(u) \exp \left( - \frac{1 - \beta }{\beta}  v \right) \dd u  + (1 - \beta) \exp(  - (1 - \beta) u ) \mb 1_{[\beta u, \infty)}(v) \mb 1_{[0, \infty)}(u)  \dd u \\
    & = \frac{\beta}{1 - \beta} \delta_{v/\beta}\left( \frac{\zeta}{1 - \beta} \right) \exp\left( -  \frac{1 - \beta}{\beta}  \right) \dd \zeta  +  \mb 1_{[0, v/\beta]}\left( \frac{\zeta}{1- \beta} \right)  \exp( - \zeta )  \dd \zeta \\
    & = \mbb P \left( \frac{\zeta}{1 - \beta} \ge \frac{v}{\beta} \right) \delta_{v/\beta}\left(\frac{\zeta}{1- \beta} \right) +  \mb 1_{[0, v/\beta]}\left( \frac{\zeta}{1- \beta} \right)  \exp( - \zeta )  \dd \zeta \\
  \end{aligned}
\end{equation*}
where we used the change of variables $\zeta = (1 - \beta) u$. 
We now observe that the above expression is precisely the law of $\min \{ v/\beta, \zeta/(1 - \beta) \}$
where $\zeta \sim \Exp(1)$. Thus the reverse kernel $\mcl Q^\ast$  associated to \eqref{exp-forward-kernel}
can be identified as
\begin{equation}\label{exp-reverse-kernel}
  \mcl Q^{\ast}( v, \dd u) = \Law \left\{ u = \min \{ v/\beta, \zeta/(1 - \beta) \}, \qquad \zeta \sim \Exp(1) \right\}.
\end{equation}
\end{document}